\documentclass[journal]{IEEEtran}

\usepackage{epsfig,color,amsmath,cite}
\usepackage{amsthm} 
\usepackage{amsmath}    
\IEEEoverridecommandlockouts
\usepackage{bm}
\usepackage{epstopdf}
\usepackage{amssymb}
\usepackage{url}
\usepackage{enumitem} 
\usepackage{multirow}
\usepackage{hhline}
\usepackage{booktabs}
\usepackage[linesnumbered,boxed,commentsnumbered,ruled,vlined,longend]{algorithm2e}
\usepackage{comment}
\newcommand{\eps}{\varepsilon}

\DeclareMathOperator*{\subjectto}{subject\ to}

\DeclareMathOperator*{\argmax}{arg\ max}

\makeatother
\DeclareMathAlphabet\mathbfcal{OMS}{cmsy}{b}{n}

\newtheorem{theorem}{Theorem}
\newtheorem{mydef}{Definition}
\newtheorem{mylem}{Lemma}

\newtheorem{myrem}{Remark}
\newtheorem{asmp}{Assumption}
\newtheorem{mycor}{Corollary}
\newtheorem{myprs}{Proposition}
\newtheorem{exmpl}{Example}

\newcommand{\reducewordspace}[0]{\fontdimen2\font=0.645ex}


\makeatletter

\makeatother

\usepackage{stackengine}
\newcommand\barbelow[1]{\stackunder[1.2pt]{$#1$}{\rule{.8ex}{.075ex}}}

\newcommand{\mat}[1]{\boldsymbol{#1}}

\newcommand{\bmat}[1]{\begin{bmatrix} #1 \end{bmatrix}}

\providecommand{\mA}{\ensuremath{\mat{A}}}
\providecommand{\mB}{\ensuremath{\mat{B}}}
\providecommand{\mC}{\ensuremath{\mat{C}}}

\providecommand{\mG}{\ensuremath{\mat{G}}}

\providecommand{\mI}{\ensuremath{\mat{I}}}

\providecommand{\mO}{\ensuremath{\mat{O}}}

\providecommand{\mR}{\ensuremath{\mat{R}}}
\providecommand{\mS}{\ensuremath{\mat{S}}}

\providecommand{\mW}{\ensuremath{\mat{W}}}

\newcommand{\st}{{\rm s.t.}}



\newcommand{\m}{\boldsymbol}
\allowdisplaybreaks[4]
\pdfminorversion=4
\usepackage[colorlinks = true,
linkcolor = blue,
urlcolor  = blue,
citecolor = blue,
anchorcolor = blue]{hyperref}


\newcommand{\mc}[1]{\mathcal{#1}}
\newcommand{\mbb}[1]{\mathbb{#1}}
\newcommand{\mr}[1]{\mathrm{#1}}
\usepackage[framemethod=TikZ]{mdframed}
\mdfdefinestyle{MyFrame}{%
	linecolor=black,
	outerlinewidth=1.25pt,
	roundcorner=1.25pt,
	innerrightmargin=5pt,
	innerleftmargin=5pt,}
	

\usepackage[noabbrev]{cleveref}

\usepackage{mathtools}

\DeclarePairedDelimiter\abs{\lvert}{\rvert}%
\DeclarePairedDelimiter\norm{\lVert}{\rVert}%

\makeatletter
\let\oldabs\abs
\def\abs{\@ifstar{\oldabs}{\oldabs*}}
\let\oldnorm\norm
\def\norm{\@ifstar{\oldnorm}{\oldnorm*}}
\makeatother


\usepackage[english]{babel}
\usepackage[utf8]{inputenc}
\usepackage[super]{nth}

\usepackage{graphicx}
\usepackage{float}
\usepackage[caption = false]{subfig}

\usepackage{array}
\usepackage{threeparttable}


\usepackage[english]{babel}
\usepackage[utf8]{inputenc}
\usepackage[super]{nth}

\usepackage{pifont}
\newcommand{\cmark}{\ding{51}}%

\def\newqed{{\null\nobreak\hfill\color{black}\ensuremath{\blacksquare}}}

\title{  \vspace{0.0cm}  \centering  \Huge Nonlinear Dynamic Systems Parameterization Using Interval-Based Global Optimization: Computing Lipschitz Constants and Beyond}

\author{Sebastian A. Nugroh$\text{o}^{\star}$, \thanks{
		$^{\star}$Department of Electrical Engineering and Computer Science, University of Michigan, 1301 Beal Ave., Ann Arbor, MI 48109 (snugroho@umich.edu).} Ahmad F. Tah$\text{a}^{\dagger}$,\thanks{
		$^{\dagger}$Department of Civil and Environmental Engineering, Vanderbilt University, 2201 West End Ave., Nashville, TN 37235 (ahmad.taha@vanderbilt.edu).} and 
	Vu Hoan$\text{g}^{\ddagger}$ 
	\thanks{
		$^{\ddagger}$Department of Mathematics, The University of Texas at San Antonio, One UTSA Circle, San Antonio, TX 78249 (duynguyenvu.hoang@utsa.edu).}
	\thanks{
		This work is partially supported by Valero Energy Corporation and National Science Foundation (NSF) under Grant CMMI-1728629, CMMI-1917164, DMS-1614797, DMS-1810687, and ECCS-2151571, and CMMI-2152450.}
	\thanks{$^{\dagger}$Corresponding author.}
	\vspace{-0.4cm}

}

\begin{document}
	
\setlength{\abovedisplayskip}{3.5pt}
\setlength{\belowdisplayskip}{3.5pt}
\setlength{\abovedisplayshortskip}{3.1pt}
\setlength{\belowdisplayshortskip}{3.1pt}
		
\newdimen\origiwspc%
\newdimen\origiwstr%
\origiwspc=\fontdimen2\font
\origiwstr=\fontdimen3\font

\fontdimen2\font=0.64ex

\maketitle
\thispagestyle{plain}
\pagestyle{plain}

\begin{abstract}
Numerous state-feedback and observer designs for nonlinear dynamic systems (NDS) have been developed in the past three decades. These designs assume that NDS nonlinearities satisfy one of the following \textit{function set classifications}: bounded Jacobian, Lipschitz continuity, one-sided Lipschitz, quadratic inner-boundedness, and quadratic boundedness. These function sets are characterized by \textit{constant} scalars or matrices bounding the NDS' nonlinearities.
These constants \textit{(i)} depend on the NDS' operating region, topology, and parameters, and \textit{(ii)} are utilized to synthesize observer/controller gains.
Unfortunately, there is a near-complete absence of algorithms to compute such bounding constants. 
In this paper, we develop analytical then computational methods to compute such constants. First, for every function set classification, we derive analytical expressions for these bounding constants through global maximization formulations. Second, we utilize a derivative-free, interval-based global maximization algorithm based on branch-and-bound framework to numerically obtain the bounding constants. 
Third, we showcase the effectiveness of our approaches to compute the corresponding parameters on some NDS such as highway traffic networks and synchronous generator models. 
\end{abstract}
\vspace{-0.15cm}
\begin{IEEEkeywords}
Nonlinear systems, bounded Jacobian, Lipschitz continuous, one-sided Lipschitz, quadratic inner-boundedness, quadratic boundedness, interval-based global optimization.
\end{IEEEkeywords}

\vspace{-0.35cm}

\section{Introduction}
\IEEEPARstart{C}{onsider} nonlinear dynamic systems (NDS) of the form
\begin{equation}\label{equ:ndsnew}
	\dot{\m x}(t) = \m A \m x(t) + \m G\m f(\m x, \m u), \; \; \m y(t) = \m C \m x(t),
\end{equation}
where in \eqref{equ:ndsnew}, $\m x$, $\m u$, $\m y$, and $\m f(\cdot)$ define the $n$ states, $m$ control inputs, $p$ output measurements, and system's nonlinearities while $\m A$, $\m G$, and $\m C$ are the corresponding system's matrices.   {The NDS model~\eqref{equ:ndsnew} represents a wide variety of systems such as transportation, water, and energy systems.} In the past three decades, numerous of Lyapunov-based methods have been developed to perform state/output-feedback control and robust state estimation for NDS~\eqref{equ:ndsnew}. {These methods often consider or assume that nonlinear dynamics in~\eqref{equ:ndsnew} belong to one of the following function sets of nonlinear functions: \textit{(a)} \textit{bounded Jacobian}, \textit{(b)} \textit{Lipschitz continuous}, \textit{(c)} \textit{one-sided Lipschitz} (OSL),  \textit{(d)} \textit{quadratic inner-boundedness} (QIB), and \textit{(e)} \textit{quadratically bounded} (QB).}
For example, $\m f(\cdot)$ is locally Lipschitz continuous in $\m \Omega$ if the following holds
\vspace{-0.1cm}
\begin{align*}
	\norm{\m f(\m x,\m u)-\m f(\hat{\m x},\m u)}_2 \leq \gamma_l \norm{\m x - \hat{\m x}}_2,
\end{align*}
for some $\gamma_{l} \geq 0$ and all  $(\m x, \m u), (\hat{\m x}, \m u)\in \mathbf{\Omega}$.
Tab.~\ref{tab:nonlinear_class} lists the mathematical definitions of the other function sets.

  {In this paper, we {investigate} NDS \textit{parameterization}. That is, computing \textit{constants} that bound these nonlinearities---we refer to these constants as \textit{NDS parameters} (or just simply \textit{parameters}).} For instance, the nonnegative constant $\gamma_l$ defines the Lipschitz continuity property in the above example. This parameterization is crucial as hundreds of observer/controller design are in fact relying on accurate values of these constants---see Tab.~\ref{tab:nonlinear_class}. 
In what follows, we perform a brief review on how these function set classifications have been utilized in some observer/controller design literature.  

In \cite{Phanomchoeng2010,ichalal2012observer,zemouche2006observer},  observer designs for Lipschitz NDS via solving {linear matrix inequalities} (LMIs) or semi-definite programs (SDPs) are proposed. As an alternative to tackle some limitations in Lipschitz-based observer, observers for OSL and QIB systems are developed in \cite{Abbaszadeh2010,benallouch2012observer,zhang2012full}. For bounded Jacobian systems, the corresponding observer designs are proposed in \cite{Phanomchoeng2010b,Phanomchoeng2011,Wang2018}. The authors in \cite{zemouche2013lmi} use these Jacobian bounds to propose a new approach of observer design for Lipschitz nonlinear systems, which is then improved in \cite{Jin2018} to increase scalability.
  {As for controller and observer-based stabilization designs, the authors in \cite{Siljak2000,Siljak2002} develop a feedback stabilization framework given for QB NDS nonlinearity.} Not to be confused with QIB, QB is a drastically different function set---see Tab.~\ref{tab:nonlinear_class} For Lipschitz NDS, observer-based control framework are proposed in \cite{Yadegar2018,zemouche2017robust,Ekramian2017}. Various robust $\mc{H}_{\infty}$ control designs assuming OSL and QIB conditions are developed in \cite{song2015robust,liu2014static,Rastegari2019,Gholami2019}. 

{ Appendix \ref{appdx:A} 
	lists detailed formulations of several LMI- and SDP-based formulations for observer/controller designs for NDS classes provided in Tab.~\ref{tab:nonlinear_class}. For example, considering that the nonlinear function $\m f(\cdot)$ in NDS \eqref{equ:ndsnew} is Lipschitz continuous
	with constant $\gamma_l\geq 0$ and given that $\m G = \m I$, solving the following LMI 
	$$ \begin{bmatrix}
		\m A ^{\top}\m P + \m P\m A -\m Y\m C - \m C ^{\top}\m Y ^{\top}  +\kappa\gamma_l^2 \m I& \m P^{\top}\\
		\m P & -\kappa \m I \end{bmatrix} \prec 0,$$
	for matrix variables $\m P\succ 0,\m Y$ and $\kappa>0$ yields an asymptotically stable estimation error through observer dynamics
	$$\dot{\hat{\m x}}(t) = \m A \hat{\m x}(t)+ {{\m f}}(\hat{\m x}, \m u)+\m L(\m y(t)  - \m C\hat{\m x}(t) ), $$ with gain $\m L = \m P^{-1}\m Y$.} Notice that the solution to the above LMI depends on the Lipschitz constant $\gamma_l$. 

From the aforementioned literature, we realize that most of these observer/controller designs are posed as SDPs that incorporate constants characterizing the nonlinearities in NDS. As such, finding the tightest bounds and constants is crucial.   {For example, in the case for Lipschitz functions, a larger $\gamma_l$ than the actual one can restrict the feasible set of the corresponding SDPs---this is commonly referred to as a \textit{conservative} Lipschitz constant.} On the contrary, if we approximate $\gamma_l$ such that the value is below the actual one, there is a possibility that, due to the dynamical behavior of the system, the Lipschitz condition characterized by $\gamma_l$ may be violated in certain points in $\m\Omega$.   {In turn, this can lead to an unstable closed-loop system.} Therefore, it is very important to obtain the tightest bounds that characterize $\m f(\cdot)$ for Lipschitz function sets and beyond.   {This importance is not only needed in the context of state estimation/stabilization of NDS, but also in other domain---for example, to provide a robustness certificate in neural network classifier \cite{szegedy2014intriguing}.}

We note the following: 
\textit{(i)} All of the aforementioned studies solely focus on developing methods for observer/controller designs while assuming that the NDS belongs to one of the aforementioned function sets, rather than attempting to find or design algorithms to obtain bounding constants.  
\textit{(ii)} The literature mostly focuses on small-scale dynamic systems where bounding constants can be obtained analytically. {In fact, we have found that although analytical computations of Lipschitz constants can be useful, such computations could return large Lipschitz constants that are way too conservative~\cite{Nugroho2019}.}  Therefore, in order to be able to use these observer designs for a high-dimensional large-scale NDS, there is a need for systematic methods that are dedicated for NDS parameterization.  

To the best of our knowledge, there are no computational methods in the literature that parameterize the nonlinearities in complex, large-scale NDS---beyond Lipschitz nonlinear systems.
With that in mind, some recent studies have proposed methods to parameterize some Lipschitz NDS only. The authors in \cite{wood1996estimation} propose a stochastic method to approximate Lipschitz constant, albeit only applicable for functions containing single variable. In \cite{paulavivcius2006analysis}, a method of computing Lipschitz constant for continuously differentiable real-valued function is presented. Recently, authors in \cite{SCHULZEDARUP2018135} propose fast numerical methods for computing Lipschitz constant using interval arithmetic and automatic differentiation. {  A new approach to find tight Lipschitz bounds based on SDPs for feed-forward neural networks is proposed in \cite{Fazlyab2019}, while authors in \cite{Chakrabarty2019} employ \textit{kernelized density estimation} to estimate Lipschitz constant using data generated from unknown dynamical systems. }
{ The computation of Lipschitz constant for synchronous generator models is investigated in \cite{Nugroho2019}, where the authors utilize \textit{low discrepancy sequence} (LDS) to under-approximate the Lipschitz constant and compare it with an analytical solution. 
	Realize that the majority of these works only focus on parameterizing one class of nonlinearity, that is, Lipschitz continuous. For the remaining function sets in Tab. \ref{tab:nonlinear_class}, no methodology has been put forth. 
	
	The objective of this work is to provide a framework for NDS parameterization for large-scale NDS via systematic methods to compute the corresponding constants/parameters for each class of aforementioned nonlinearities, considering arbitrary nonlinear models having continuous partial derivatives. These constants are computed by optimally solving global maximization problems over a box-shaped convex compact set through a global optimization algorithm based on \textit{interval arithmetic} (IA).} The paper contributions are summarized as follows: 

\setlength{\textfloatsep}{10pt}
\begin{table}[t]
	\centering
	\vspace{-0.1cm}
	\caption{Function sets of nonlinearity considered in this paper.}\label{tab:nonlinear_class}
	\vspace{-0.2cm}
	\renewcommand{\arraystretch}{1}
	\begin{tabular}{|m{0.70in}|m{2.4in}|}
		\hline
		\textbf{Class} $\vphantom{\left(\frac{v}{l}\right)}$ & \textbf{Mathematical Property} $\vphantom{\left(\frac{v}{l}\right)}$ \\[0.5ex] \hline\hline
		\textit{Bounded} \newline \textit{Jacobian} &$\vphantom{\left(\frac{v_f^{G^0}}{l}\right)} -\infty < \barbelow{f}_{ij} \leq \dfrac{\partial f_i}{\partial x_j}(\m x,\m u)\leq \bar{f}_{ij} < + \infty $,
		\newline $\vphantom{\left(\frac{v}{l_r}\right)} (\m x, \m u)\in \mathbf{\Omega}$, $\barbelow{f}_{ij}, \bar{f}_{ij} \in \mbb{R}$   \\ \hline
		\textit{Lipschitz} \newline \textit{Continuous} & $\vphantom{\left(\frac{v^i}{l}\right)}\norm{\m f(\m x,\m u)-\m f(\hat{\m x},\m u)}_2 \leq \gamma_l \norm{\m x - \hat{\m x}}_2$  \newline
		$(\m x, \m u), (\hat{\m x}, \m u)\in \mathbf{\Omega}$, $\gamma_l \in \mbb{R}_{+}$
		\\ \hline
		\textit{One-Sided} \newline \textit{Lipschitz} & $\vphantom{\left(\frac{v_f}{l}\right)} \langle\m G(\m f(\m x,\m u)-\m f(\hat{\m x},\m u)),\m x-\hat{\m x}\rangle\leq \gamma_s \norm{\m x - \hat{\m x}}_2^{2}$, \newline $(\m x, \m u), (\hat{\m x}, \m u)\in \mathbf{\Omega}$, $\gamma_s \in \mbb{R}$ \\ \hline
		\textit{Quadratically} \newline \textit{Inner-Bounded} & $\vphantom{\left(\frac{v_f}{l}\right)} \langle\m G(\m f(\m x,\m u)-\m f(\hat{\m x},\m u)),\m G(\m f(\m x,\m u)-\m f(\hat{\m x},\m u))\rangle \leq $\newline $\gamma_{q1} \norm{\m x - \hat{\m x}}_2^{2} +\gamma_{q2}\langle\m G(\m f(\m x,\m u)-\m f(\hat{\m x},\m u)),\m x-\hat{\m x}\rangle$,\newline $(\m x, \m u), (\hat{\m x}, \m u)\in \mathbf{\Omega}$, $\gamma_{q1},\gamma_{q2} \in \mbb{R}$ \\ \hline
		\textit{Quadratically} \newline \textit{Bounded} & $ \vphantom{\left(\frac{v_f}{l}\right)}\langle \m f(\m x),\m f(\m x)\rangle \leq  \m x^{\top}\m \Gamma^{\top}\m \Gamma \m x$,  
		\newline  $\m x\in \mathbfcal{X}$, $\m \Gamma\in\mbb{R}^{h\times n}$ \\
		\hline 
	\end{tabular}
	\vspace{-0.1cm}
\end{table}
\setlength{\floatsep}{10pt}

{ 
	\noindent $\bullet$ Proposing closed-form, analytical formulations presented as global maximization problems for computing constants that characterize bounded Jacobian, Lipschitz continuous, OSL, QIB, and QB function sets---see Tab. \ref{tab:nonlinear_class}. Our formulations allow NDS parameterization problems to be solved with global maximization algorithms. 
	It is demonstrated later that some of our formulations can be scalable for high-dimensional large-scale NDS since \textit{(a)} we only need to solve optimization problems for each unique form of nonlinearities, \textit{(b)} it is possible to consider only a small portion of optimization variables, and \textit{(c)} these problems can be solved via distributed computing technique.
	
	\noindent $\bullet$ 
	Utilizing a derivative-free interval-based global maximization algorithm based on branch-and-bound framework for performing NDS parameterization. The key advantage of this particular algorithm is that, since it does not necessitate the corresponding objective function to be differentiable, it can be used for performing parameterization for a wide class of NDS. Moreover, this algorithm provides an optimality guarantee in the sense that, when the algorithm terminates, it computes the \textit{best} upper and lower bounds for the objective function.   {This algorithm is adopted from \cite{moa2007interval} and herein we \textit{(a)} reformulate it for global maximization problems such that it can be directly used for NDS parameterization, \textit{(b)} present it using rigorous mathematical notations for the ease of analysis and implementation, and \textit{(c)} include an additional step that can aid the practical implementation of this algorithm. } 
	
	\noindent $\bullet$ Showcasing the effectiveness and applicability of the proposed framework for NDS parameterization as well as observer designs for different kinds of NDS including traffic networks, synchronous generator, and dynamics of a moving object. Our numerical test results suggest that the proposed approach provides Lipschitz constants that are considerably less conservative than the ones obtained from analytical computations.  
}

A preliminary version of this work appears in \cite{Nugroho2020insights}, where we \textit{(i)} discover a novel relation between Lipschitz and QIB function sets and \textit{(ii)} pinpoint mistakes made in the numerical section of \cite{Abbaszadeh2013,zhang2012full} and present the appropriate corrections.   {In addition, we also made all MATLAB codes used in our numerical experiments to be freely available online via GitHub \cite{Nugroho2020github}.} The structure of this paper is described as follows. 
Section \ref{sec:problem_formulation} showcases the problem formulation. In Section \ref{sec:nonlinear_classification}, closed-form formulations to determine the corresponding constants for all function sets through global maximization problems are all presented. Section \ref{sec:interval_optimization} presents the proposed algorithms for global optimization based on IA designed specifically for solving problems given in Section \ref{sec:nonlinear_classification}. Thorough numerical examples are provided in Section \ref{sec:numerical_tests} and the paper is concluded in Section \ref{sec:conclusion_future_work}. 

\vspace{0.0cm}
\noindent \textit{\textbf{Notation:}} Italicized, boldface upper and lower case characters represent matrices and column vectors: $a$ is a scalar, $\m a$ is a vector, and $\m A$ is a matrix. Matrix $\m I$ denotes the identity square matrix, whereas $\mO$ denotes a zero matrix of appropriate dimensions. The notations $\mathbb{N}$, $\mathbb{R}$, $\mathbb{R}_+$, and $\mathbb{R}_{++}$ denote the set of natural, real, non-negative, and positive real numbers. The notations $\mathbb{R}^n$ and $\mathbb{R}^{p\times q}$ denote the sets of row vectors with $n$ elements and matrices with size $p$-by-$q$ with elements in $\mathbb{R}$. {  The notation $\mathbb{S}^n$ denotes the set of $n$-dimensional symmetric matrices.} For any vector $\m x \in \mathbb{R}^{n}$, $\Vert\m x\Vert_2$ denotes the Euclidean norm of $\m x$, defined as $\Vert \m x\Vert_2 := \sqrt{\m x^{\top}\m x} $ , where $\m x^{\top}$ is the transpose of $\m x$. For $\m x, \m y\in\mbb{R}^n$, inner product is defined as $\langle\m x,\m y\rangle := \m x^{\top}\m y$. {  If $\m A$ is a matrix, $\norm {\mA}_2$ denotes its induced 2-norm, $\norm{\m A}_{{F}}$ denotes its Frobenius norm, and $A_{(i,j)}$ denotes its $i$-th and $j$-th element.} The set $\mathbb{I}(n)$ is defined as $\mathbb{I}(n) := \{i\in\mathbb{N}\,|\, 1\leq i \leq n\}$, which is usually used to represent the set of indices. The notions $\mathrm{D}_x$ and $\nabla_{\hspace{-0.05cm}x}$ denote the Jacobian matrix and gradient vector with respect to vector $\m x$. For a square matrix $\mA$,  $\lambda_{\mathrm{max}}(\mA)$ and $\lambda_{\mathrm{min}}(\mA)$ respectively return the maximum and minimum eigenvalues of $\mA$.   {The set of all intervals containing real numbers is denoted by $\mbb{IR}$, whereas an interval vector of dimension $n$ is denoted by $\mbb{IR}^n$.}


\vspace{-0.2cm}
\section{Problem Formulation}\label{sec:problem_formulation}
In this paper, we consider any NDS of the following form
\begin{align}\label{eq:gen_dynamic_systems}
	\dot{\m x}(t) &= \mA \m x (t) + \mG\m f(\m x, \m u) + \mB\m u(t),
\end{align}
\noindent where vectors $\m x\in \mathbfcal{X}\subset \mathbb{R}^n$ and $\m u\in \mathbfcal{U}\subset \mathbb{R}^m$ represent the state and input of the system, function $\m f :\mathbb{R}^n\times \mathbb{R}^m\rightarrow \mathbb{R}^g$ captures any nonlinear phenomena, and $\m A$, $\mB$, $\mG$ are known constant matrices of appropriate dimensions. For brevity, we drop the time dependence from now on such that $\m x := \m x(t)$; the same also applies to $\m u$. We define $\mathbf{\Omega}:= \mathbfcal{X}\times \mathbfcal{U}$ with $\mathbf{\Omega}\subset\mbb{R}^p$ so that the domain of $\m f(\cdot)$ can be simply referred to as $\mathbf{\Omega}$. The following assumptions are considered in this paper.
\vspace{-0.1cm}
\begin{asmp}\label{asmp:1}
	\reducewordspace  The conditions below apply to NDS \eqref{eq:gen_dynamic_systems}
	
	\begin{enumerate}
		\item { $\mathbf{\Omega}:= \prod_{i\in\mbb{I}(p)} {\Omega}_i$ is a  nonempty $p$-dimensional orthotope (may also be referred to as a \textit{hyperrectangle} or \textit{box}). That is, ${\Omega}_i = \left[\barbelow{\omega}_i, \bar{\omega}_i\right]$ where $\barbelow{\omega}_i, \bar{\omega}_i\in\mathbb{R}$ are known bounds.
			\item  $\m f(\cdot)\in \mathcal{C}^1$, i.e., it is continuously differentiable in $\mathbf{\Omega}$. }
	\end{enumerate}
	\vspace{-0.15cm}
\end{asmp}
\vspace{-0.1cm}
The first assumption dictates that each $x_i$ and $u_j$ belongs to convex compact sets $\mathcal{X}_i$ and $\mathcal{U}_j$ respectively. This assumption is not restrictive, seeing that cyber-physical systems have bounded states and inputs, where each upper and lower bounds define the operating regions.   {The second assumption ensures that the partial derivatives of $\m f(\cdot)$ exist and the gradient vector $\nabla_x f_i(\cdot)$ for all $i\in\mbb{I}(g)$ are bounded within the set $\mathbf{\Omega}$ \cite{trench2003introduction}.  }

  { 
	In this paper, we are concerned with obtaining a way to parameterize the nonlinear function $\m f(\cdot)$ for the function sets listed in Tab. \ref{tab:nonlinear_class}. This essentially boils down to computing the corresponding constants for each nonlinearity---this is referred to as \textit{parameterization}. Assumption \ref{asmp:1} implies that $\m f(\cdot)$ is Lipschitz continuous, and thus it also follows that $\m f(\cdot)$  belongs to the OSL and QIB function sets \cite{Abbaszadeh2010}. As the consequence, one can utilize a plethora of observer/controller designs for NDS of the form \eqref{eq:gen_dynamic_systems}.} 
{ It is worthwhile mentioning that, since $(\m x,\m u)\in\mathbf{\Omega}$, these function sets are valid in the region of interest $\mathbf{\Omega}$.}

To parameterize  $\m f(\cdot)$, first it is truly crucial to obtain well-defined formulations that allow the parameterization of these function sets to be performed effectively by employing optimization routines. The next section demonstrates how NDS parameterization can be posed as a global maximization problem over the set $\mathbf{\Omega}$.

\vspace{-0.2cm}
\section{Synthesis of Global Maximization Problems for NDS Parameterization}\label{sec:nonlinear_classification}
This section presents comprehensive formulations that pose the parameterization of $\m f(\cdot)$ as global maximization problems in which the objective functions are all given in closed-form expressions, thereby making them applicable for the majority of deterministic global optimization algorithms.

\vspace{-0.25cm}
\subsection{Bounded Jacobian}\label{ssec:bj}
Bounds for partial derivatives of $\m f(\cdot)$ denoted by $\barbelow{f}_{ij}$ and $\bar{f}_{ij}$ for each $i\in \mathbb{I}(g)$ and $j\in \mathbb{I}(n)$ can be computed as 
\begin{subequations}\label{eq:bj_all}
	\begin{align}
		\bar{f}_{ij} &= \max_{(\m x,\m u)\in \mathbf{\Omega}}\dfrac{\partial f_i}{\partial x_j}(\m x, \m u) \label{eq:bj_max} \\
		\barbelow{f}_{ij} &= \min_{(\m x,\m u)\in \mathbf{\Omega}}\dfrac{\partial f_i}{\partial x_j}(\m x, \m u) = -\max_{(\m x,\m u)\in \mathbf{\Omega}}-\dfrac{\partial f_i}{\partial x_j}(\m x, \m u)\label{eq:bj_min}.
	\end{align}
\end{subequations}
Note that there is a possibility for $\frac{\partial f_i}{\partial x_j}(\cdot)$ not to include each $x_j$ and $u_j$. It is possible to speed up the computation in solving \eqref{eq:bj_all} by reducing the search space $\mathbf{\Omega}$ to $\mathbf{\Omega}_{\m r_i}:= \mathbfcal{X}_{\m r_i}\times \mathbfcal{U}_{\m r'_i}$ where $\mathbfcal{X}_{\m r_i} = \prod_{j\in\mbb{I}(r_i)} \mathcal{X}_j$ and $\mathbfcal{U}_{\m r'_i} = \prod_{j\in\mbb{I}(r'_i)} \mathcal{U}_j$ such that every $x_j\in \mathcal{X}_j$ and $u_j\in \mathcal{U}_j$ are variables included in $\frac{\partial f_i}{\partial x_j}(\cdot)$. 
\vspace{-0.1cm}
\begin{myrem}\label{rem:search_space_reduction}
	\reducewordspace
	The reduction of $\mathbf{\Omega}$ can also be applied for parameterizing other function sets whenever the objective function to be maximized does not include all variables from $\m x$ and $\m u$. This potentially allows faster computational time as well as reduces the memory size required to perform the computation.
\end{myrem}
\vspace{-0.1cm}

\vspace{-0.25cm}
\subsection{Lipschitz Continuous \vspace{-0.05cm}}\label{ssec:lc} 
Since it is assumed that Assumption \ref{asmp:1} holds for NDS \eqref{eq:gen_dynamic_systems}, $\m f(\cdot)$ always satisfies the Lipschitz continuity condition. Thus, the problem here is  determining the tightest Lipschitz constant for $\m f(\cdot)$ that is still useful for observer/controller design. According to \cite{marquez2003nonlinear,khalil2002nonlinear}, Lipschitz constant $\gamma_l$ for such NDS is nothing but the maximum value of the norm of Jacobian matrix of $\m f(\cdot)$ over the set $\mathbf{\Omega}$, that is 
\begin{align}
	\vspace{-0.2cm}
	\gamma_l = \max_{(\m x,\m u)\in\mathbf{\Omega}} \norm{\mathrm{D}_x \m f(\m x, \m u)}_2.\label{eq:lip_jacobian}
	\vspace{-0.2cm}
\end{align}
Although \eqref{eq:lip_jacobian} provides a straightforward method to compute $\gamma_l$, it is unfortunately difficult to obtain a closed-form expression of \eqref{eq:lip_jacobian} due to the use of {the 2-norm operator for the Jacobian matrix. Since we use euclidean norms $\norm{\cdot}_2$ on $\mathbb{R}^n$, the value of the operator norm (see \cite{marquez2003nonlinear}) is given by 
	\begin{align}
		\norm{\mathrm{D}_x \m f(\m x, \m u)}_2 = \sqrt{\lambda_{\operatorname{max}}\left(\mathrm{D}_x \m f(\m x, \m u) \,\mathrm{D}_x \m f(\m x, \m u)^\top\right)}. \nonumber
\end{align}}
We note that having a closed-form formulation for $\gamma_l$ is important as many deterministic global optimization methods necessitate the closed-form expression of the objective function. 
With that in mind, the following provides a closed-form formulation to compute $\gamma_l$, which is given in Theorem \ref{prs:Lipschitz_less_cons}---the following \textit{mean value theorem} and \textit{key lemma} are presented beforehand due to its significant role in constructing this theorem. 
\vspace{-0.15cm}
\begin{mylem}[\textit{Mean Value Theorem}]\label{lem:MVT}
	\reducewordspace
	Let the function $f:\mbb{R}\rightarrow \mbb{R}$ be continuous on $\left[a,b\right]$ and differentiable on $\left(a,b\right)$ for $a,b\in\mbb{R}$. Then there exists $\bar{\tau} \in \left(a,b\right)$ such that
	\begin{align}
		\frac{d f}{d x}(\bar{\tau}) = \frac{f(b)-f(a)}{b-a}.\nonumber
	\end{align}
\end{mylem}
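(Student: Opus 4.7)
The plan is to reduce the Mean Value Theorem to Rolle's theorem by subtracting off the secant line. First I would introduce the auxiliary function
\[
g(x) \,=\, f(x) - f(a) - \frac{f(b)-f(a)}{b-a}\,(x-a),
\]
defined on $[a,b]$. Since $f$ is continuous on $[a,b]$ and differentiable on $(a,b)$, and the subtracted affine map shares those regularity properties, $g$ inherits continuity on $[a,b]$ and differentiability on $(a,b)$. A direct evaluation yields $g(a) = 0$ and $g(b) = f(b) - f(a) - (f(b)-f(a)) = 0$, so $g$ attains equal values at the endpoints.

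Next I would apply Rolle's theorem to $g$ on $[a,b]$, producing a point $\bar{\tau} \in (a,b)$ with $g'(\bar{\tau}) = 0$. Differentiating the definition of $g$ gives
\[
g'(x) = \frac{df}{dx}(x) - \frac{f(b)-f(a)}{b-a},
\]
so the stationarity condition at $\bar{\tau}$ is exactly the desired identity $\frac{df}{dx}(\bar{\tau}) = \frac{f(b)-f(a)}{b-a}$, completing the argument.

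The only non-trivial ingredient is the appeal to Rolle's theorem, which I would treat as classical; if a self-contained derivation were required I would in turn obtain it from the extreme value theorem applied to $g$ on the compact interval $[a,b]$, followed by Fermat's interior-extremum criterion. If both the maximum and minimum of $g$ occur at the endpoints then $g$ is identically zero and any interior point works; otherwise at least one extremum lies at some interior $\bar{\tau}$, where differentiability of $g$ forces $g'(\bar{\tau}) = 0$. Honestly there is no real obstacle here since the lemma is entirely standard, and the only care needed is in verifying that the auxiliary function meets the hypotheses of Rolle's theorem, which it does by construction.
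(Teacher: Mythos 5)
Your proof is correct and is the canonical argument: the paper does not prove this lemma itself but simply cites a standard analysis text, and your reduction to Rolle's theorem via the auxiliary function $g(x) = f(x) - f(a) - \frac{f(b)-f(a)}{b-a}(x-a)$ is exactly the textbook derivation that such a reference contains. Nothing further is needed.
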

\vspace{-0.15cm}
The proof of Lemma \ref{lem:MVT} is available in \cite{trench2003introduction}.   {We emphasize that Lemma \ref{lem:MVT} holds for scalar functions $f$ only, and not for vector-valued $\m f$ having values in $\mathbb{R}^m$.} In particular, for a
vector valued function $\m f : [a, b] \to \mathbb{R}^m$, it is not true that there exists a $a < \bar\tau < b$ such that 
\begin{align}
	\m f(b)-\m f(a) = \m f'(\bar\tau) (b-a)
\end{align}
holds. This is shown on p.112 of \cite{Rudin1976PM}.   {However, it is true that a $a < \bar\tau < b$ exists with} 
\begin{align}
	\norm{\m f(b)-\m f(a)}_2\leq \norm{\m f'(\bar \tau)}_2 |b-a|.
\end{align}
(see Theorem 5.19 in \cite{Rudin1976PM}). In deducing bounds for Lipschitz constants we will therefore proceed carefully and prove the next Lemma first, which we call key lemma in view of its importance. It will allow us to give a uniform and streamlined derivation of our bounds for Lipschitz constants, OSL constants and constants for the QIB condition. We note that the idea of the key lemma (taking the scalar product of $\m f(b)-\m f(a)$ with an auxiliary vector $\m w$) is inspired by \cite{Rudin1976PM} (see also \cite{khalil2002nonlinear}, Lemma 3.1). A first consequence of the key lemma is the explicit estimate \eqref{eq:locally_Lipschitz_theorem} for the Lipschitz constant. An alternative proof can be given by combining \eqref{eq:lip_jacobian} with the following inequality 
\begin{align*}
	\norm{\mathrm{D}_x\m f(\m x, \m u)}_2 \leq \norm{\mathrm{D}_x\m f(\m x, \m u)}_{{F}},
\end{align*}
While therefore it might be argued that \eqref{eq:locally_Lipschitz_theorem} is a direct consequence of well-known results, the key lemma is applicable in more general situations, and can be used to derive explicit bounds for OSL (see Section \ref{ssec:osl}) and QIB constants (see Section \ref{ssec:qib}). These results are, to the best of our knowledge, fully novel.
\vspace{-0.15cm}
\begin{mylem}[{\textit{Key Lemma}}]\label{lem:key}
	\reducewordspace Let $\m F\in \mathbb{R}^{g\times g}$ be an arbitrary matrix and $\m{w}\in \mathbb{R}^g$ and $\m x, \hat{\m x}\in \mathbb{R}^n, \m u \in \mathbf{R}^m$ such that $(\m x, \m u)\in\mathbf{\Omega}$. Then there exists a $\bar \tau\in (0, 1)$ such that
	\begin{subequations}
		\begin{align}\label{eq:key1}
			\begin{split}
				\langle \m F(\m f(\m x, \m u) - &\m f(\hat{\m x}, \m u) ), \m{w} \rangle =\\ &\quad\langle \mathbf{\Xi}(\hat{\m x} + \bar \tau (\m x -\hat{\m x} ), \m u)(\m x -\hat{\m x} ), \m{w}\rangle. 
			\end{split}
		\end{align}
		where $\m \Xi:=\m \Xi(\m x, \m u)\in \mathbb{R}^{g\times n}$ is given by
		\begin{align}
			{\m \Xi}_{(i,j)} (\m x,\m u) := \sum_{k\in \mbb{I}(g)}\hspace{-0.1cm} F_{(i,k)}\dfrac{\partial f_k}{\partial x_j}(\m x,\m u).\label{eq:one_sided_Lipschitz_matrix1}
		\end{align}
	\end{subequations}
\end{mylem}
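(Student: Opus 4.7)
The plan is to reduce the vector-valued identity to a single-variable mean value theorem by integrating (conceptually) along the line segment joining $\hat{\m x}$ and $\m x$. Since the only hypothesis on $\m f$ is $\mathcal{C}^1$ smoothness, Lemma \ref{lem:MVT} cannot be applied componentwise (that would yield a different $\bar\tau$ for each component), so the key maneuver is to pair the vector quantity with $\m w$ \emph{first} and only then apply the scalar mean value theorem. This is precisely the ``inner product with an auxiliary vector'' trick flagged in the paragraph preceding the lemma.

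Concretely, I would define the scalar auxiliary function $\phi:[0,1]\to\mathbb{R}$ by
\begin{align*}
\phi(t) := \langle \m F\, \m f(\hat{\m x}+t(\m x-\hat{\m x}),\m u),\, \m w\rangle = \m w^{\top}\m F\, \m f\bigl(\hat{\m x}+t(\m x-\hat{\m x}),\m u\bigr).
\end{align*}
First, I would note that the convexity of the hyperrectangle $\mathbf{\Omega}$ (Assumption \ref{asmp:1}) together with $(\m x,\m u),(\hat{\m x},\m u)\in\mathbf{\Omega}$ guarantees that the entire segment $\{\hat{\m x}+t(\m x-\hat{\m x}):t\in[0,1]\}$ lies in the relevant slice of $\mathbf{\Omega}$, so $\phi$ is well defined and, by $\m f\in\mathcal{C}^1$, continuous on $[0,1]$ and differentiable on $(0,1)$.

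Next, I would compute $\phi'(t)$ via the chain rule, which yields $\phi'(t)=\m w^{\top}\m F\,\mathrm{D}_x\m f(\hat{\m x}+t(\m x-\hat{\m x}),\m u)\,(\m x-\hat{\m x})$. The crucial algebraic identification is that the $(i,j)$ entry of $\m F\,\mathrm{D}_x\m f$ equals $\sum_{k\in\mbb{I}(g)} F_{(i,k)}\,\partial f_k/\partial x_j$, which is exactly $\Xi_{(i,j)}$ as defined in \eqref{eq:one_sided_Lipschitz_matrix1}. Hence $\phi'(t)=\langle \mathbf{\Xi}(\hat{\m x}+t(\m x-\hat{\m x}),\m u)(\m x-\hat{\m x}),\,\m w\rangle$. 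Also by construction, $\phi(1)-\phi(0)=\langle \m F(\m f(\m x,\m u)-\m f(\hat{\m x},\m u)),\,\m w\rangle$.

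Finally, I would apply the scalar mean value theorem (Lemma \ref{lem:MVT}) to $\phi$ on $[0,1]$: there exists $\bar\tau\in(0,1)$ with $\phi'(\bar\tau)=\phi(1)-\phi(0)$, and substituting the two expressions above gives \eqref{eq:key1}. There is no real obstacle here beyond ensuring the segment stays inside the domain and being careful to pair with $\m w$ before invoking the one-dimensional MVT; the warning on p.\ 112 of \cite{Rudin1976PM} quoted in the preceding discussion is exactly what rules out a naive vector-valued MVT and motivates this scalarization step.
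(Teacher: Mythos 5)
Your proof is correct and follows essentially the same route as the paper: the paper's own argument defines the identical scalar auxiliary function $\phi(\tau)=\langle \m F\,\m f(\hat{\m x}+\tau(\m x-\hat{\m x}),\m u),\m w\rangle$, invokes convexity of $\mathbf{\Omega}$ to keep the segment in the domain, and applies the scalar mean value theorem on $[0,1]$, with the chain-rule derivative identified with $\m\Xi$ exactly as you do.
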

\begin{proof}
	Let $(\m x, \m u), (\hat{\m x}, \m u)\in \mathbf{\Omega}$. Define two functions $\phi:\mathbb{R}\rightarrow \mathbb{R}$ and $\m g:\mathbb{R}\rightarrow \mathbb{R}^n$ as follows
	\begin{subequations}
		\begin{align*}
			\phi(\tau) &:= \langle \m F \m f(\m g(\tau)), \m w\rangle,\;\;
			\m g(\tau) := \hat{\m x} + \tau (\m x -\hat{\m x} ),
		\end{align*}
		Note that due to the assumed convexity of $\m \Omega$, $(g(\tau), \m u)$ is contained in $\m \Omega$ for all $\tau\in [0, 1]$. Since $\phi(\cdot)$ is continuous on $[0,1]$ and differentiable on $(0,1)$, then according to mean value theorem there exists $\bar{\tau}\in(0,1)$ such that
		\begin{align}
			\frac{d\phi}{d\tau}(\bar{\tau}) &= \frac{\phi(1)-\phi(0)}{1-0} = \phi(1)-\phi(0).
			\label{eq:key_proof1}
		\end{align}
		The left-hand side of \eqref{eq:key_proof1} reads 
		\begin{align*}
			\sum_{i\in \mbb{I}(g)}\sum_{k\in \mbb{I}(g)}\hspace{-0.1cm} F_{(i,k)}\dfrac{\partial f_k}{\partial x_j}(\m g(\bar \tau),\m u) (x_j - \hat x_j) w_i, 
		\end{align*}
		yielding the conclusion of our lemma.
	\end{subequations}
\end{proof}
\vspace{-0.15cm}
{As a first application of the key lemma, in what follows we obtain an expression for the Lipschitz constant.}
\vspace{-0.15cm}
\begin{theorem}\label{prs:Lipschitz_less_cons}
	\reducewordspace
	The nonlinear function $\m f :\mathbb{R}^n\times \mathbb{R}^m\rightarrow \mathbb{R}^g$ in \eqref{eq:gen_dynamic_systems} is locally Lipschitz continuous in $\mathbf{\Omega}$ such that for any $(\m x, \m u), (\hat{\m x}, \m u)\in \mathbf{\Omega}$ the following condition holds
	\begin{subequations}\label{eq:locally_Lipschitz_theorem}
		\vspace{-0.1cm}
		\begin{align}
			\norm{\m f(\m x,\m u)-\m f(\hat{\m x},\m u)}_2 \leq \gamma_{l_1} \norm{\m x - \hat{\m x}}_2, \label{eq:locally_Lipschitz}
		\end{align}
		where $\gamma_{l_1}$ can be obtained from solving the following problem
		\begin{align}
			\gamma_{l_1} =\Bigg(\max_{(\m{x},\m u)\in \mathbf{\Omega}}\sum_{i\in \mbb{I}(g)}\norm{\nabla_{\hspace{-0.05cm}x} f_i(\m{x},\m u)}_2^2\Bigg)^{\hspace{-0.05cm}\!1/2}. \label{eq:gamma_Lipschitz}
		\end{align}
	\end{subequations}
\end{theorem}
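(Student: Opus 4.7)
The plan is to apply the Key Lemma (Lemma 2) with a carefully chosen matrix $\m F$ and auxiliary vector $\m w$, then reduce the resulting expression to a bound involving only the component gradients of $\m f$. Specifically, I would set $\m F = \m I$ (so that $\mathbf{\Xi}$ collapses to the Jacobian $\mathrm{D}_x \m f$) and choose $\m w = \m f(\m x, \m u) - \m f(\hat{\m x}, \m u)$, which turns the left-hand side of the Key Lemma identity into $\norm{\m f(\m x,\m u) - \m f(\hat{\m x}, \m u)}_2^2$.

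With this choice, the Key Lemma produces some $\bar\tau \in (0,1)$ such that
\begin{align*}
\norm{\m f(\m x, \m u) - \m f(\hat{\m x}, \m u)}_2^2 = \langle \mathrm{D}_x \m f(\m g(\bar\tau), \m u)(\m x - \hat{\m x}),\; \m f(\m x,\m u) - \m f(\hat{\m x}, \m u)\rangle,
\end{align*}
where $\m g(\bar\tau) := \hat{\m x} + \bar\tau(\m x - \hat{\m x})$. Applying the Cauchy--Schwarz inequality to the right-hand side and dividing by $\norm{\m f(\m x,\m u) - \m f(\hat{\m x}, \m u)}_2$ (the degenerate case is trivial) yields
\begin{align*}
\norm{\m f(\m x,\m u) - \m f(\hat{\m x},\m u)}_2 \leq \norm{\mathrm{D}_x \m f(\m g(\bar\tau), \m u)(\m x - \hat{\m x})}_2.
\end{align*}

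Next I would unpack the matrix-vector product rowwise: $\norm{\mathrm{D}_x \m f \cdot \m v}_2^2 = \sum_{i \in \mbb{I}(g)}\langle \nabla_{\hspace{-0.05cm}x} f_i, \m v\rangle^2$, and apply Cauchy--Schwarz inside each summand to obtain $\sum_{i} \norm{\nabla_{\hspace{-0.05cm}x} f_i(\m g(\bar\tau), \m u)}_2^2 \cdot \norm{\m v}_2^2$. Since $\mathbf{\Omega}$ is an orthotope and hence convex (Assumption~1), the intermediate argument $(\m g(\bar\tau), \m u)$ lies in $\mathbf{\Omega}$, so taking the supremum over $\mathbf{\Omega}$ of the first factor gives precisely $\gamma_{l_1}^2$ as defined in \eqref{eq:gamma_Lipschitz}, establishing \eqref{eq:locally_Lipschitz}.

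There is no serious obstacle here once the Key Lemma is in hand; the only delicate point is to make the right choice of $(\m F, \m w)$ so that the inner product on the left becomes a squared norm, which is what allows a single application of Cauchy--Schwarz to cancel one factor. I would also briefly note (as the excerpt itself remarks in the paragraph preceding the Key Lemma) that the bound $\gamma_{l_1}$ coincides with the maximum of the Frobenius norm of $\mathrm{D}_x \m f$ over $\mathbf{\Omega}$, which is consistent with $\gamma_{l_1} \geq \max_{\mathbf{\Omega}} \norm{\mathrm{D}_x \m f}_2$ in \eqref{eq:lip_jacobian}; the possible slack between the spectral and Frobenius norms is the price paid for obtaining a closed-form, pointwise-evaluable objective suitable for the interval-based global optimizer developed in Section~\ref{sec:interval_optimization}.
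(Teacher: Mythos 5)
Your proposal is correct and follows essentially the same route as the paper's own proof: both invoke the Key Lemma with $\m F = \m I$ and $\m w = \m f(\m x,\m u)-\m f(\hat{\m x},\m u)$, apply the Cauchy--Schwarz inequality twice (your rowwise unpacking of $\norm{\mathrm{D}_x\m f\,\m v}_2^2$ is just a reordering of the paper's two successive applications over $j$ and then $i$), take the maximum over $\mathbf{\Omega}$ using convexity to keep the intermediate point in the domain, and dispose of the degenerate case $\m w = \m 0$ separately. Your closing remark identifying $\gamma_{l_1}$ with the maximal Frobenius norm of the Jacobian also matches the paper's own discussion preceding the Key Lemma.
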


\begin{proof}
	Applying Lemma \ref{lem:key} with $\m F= \m I$ and $\m w := \m f(\m x, \m u) - \m f(\hat{\m x}, \m u)$ yields
	\begin{align*}
		{\m \Xi}_{(i,j)} (\m x,\m u) := \dfrac{\partial f_i}{\partial x_j}(\m x,\m u),
	\end{align*}
	and therefore the right-hand side of \eqref{eq:key1} reads
	\begin{align*}
		\Psi := \sum_{i\in \mbb{I}(g)} \sum_{j\in \mbb{I}(n)} \dfrac{\partial f_i}{\partial x_j}(\hat{\m x}+\bar \tau (\m x -\hat {\m x}),\m u) (x_j-\hat x_j) w_i.
	\end{align*}
	We now proceed to estimate $\Psi$.   {By applying the Cauchy-Schwarz inequality twice (first to the sum over $j$ and then over $i$), one can verify that}
	\begin{align*}
		&|\Psi|^2\leq\Bigg(\sum_{i\in \mbb{I}(g)} \norm{\nabla_{\m x} f_i(\hat{\m x}+\bar \tau (\m x -\hat {\m x}),\m u)}_{2} \norm{\m x -\hat {\m x}}_2 w_i\Bigg)^{\hspace{-0.1cm}2}\\
		&\leq \Bigg(\sum_{i\in \mbb{I}(g)} \norm{\nabla_{\m x} f_i(\hat{\m x}+\bar \tau (\m x -\hat {\m x}),\m u)}_{2}^2\Bigg)\hspace{-0.05cm}\norm{\m x -\hat {\m x}}_2^2~ \norm{\m w}_2^2\\
		&\leq \Bigg(\max_{(\m{x},\m u)\in \mathbf{\Omega}} \sum_{i\in \mbb{I}(g)} \norm{\nabla_{\m x} f_i(\m x,\m u)}_{2}^2\Bigg)\hspace{-0.05cm}\norm{\m x -\hat {\m x}}_2^2\norm{\m w}_2^2.
	\end{align*}
	Revisiting \eqref{eq:key1}, we see that we have just shown
	\begin{align*}
		|\langle  f(\m x, \m u) - \m f(\hat{\m x}, \m u) , \m w \rangle|^2 \leq \gamma_{l_1}^2 \norm{\m x -\hat {\m x}}_2^2~ \norm{\m w}_2^2
	\end{align*}
	Using our definition of $\m w=\m f(\m x, \m u) - \m f(\hat{\m x}, \m u)$ this reduces to 
	\begin{align*}
		\norm{f(\m x, \m u) - \m f(\hat{\m x}, \m u)}_2^4 \leq \gamma_{l_1}^2 \norm{\m x -\hat {\m x}}^2 \norm{\m f(\m x, \m u) - \m f(\hat{\m x}, \m u)}_2^2
	\end{align*}
	which immediately implies \eqref{eq:locally_Lipschitz} provided that $\norm{\m w}_2^2>0$. In the case that $\norm{\m w}_2^2=0$, \eqref{eq:locally_Lipschitz} holds trivially, so this completes the proof of the theorem.
\end{proof}
\vspace{-0.15cm}
Since the closed-form expression of \eqref{eq:gamma_Lipschitz} can be determined, unlike the one provided in \eqref{eq:lip_jacobian}, Lipschitz constant $\gamma_{l_1}$ can be computed using any deterministic global optimization algorithms. Notice that when the domain of interest $\mathbf{\Omega}$ is large and the function $\m f(\cdot)$ is of high dimension, solving \eqref{eq:gamma_Lipschitz} may take a significant amount of computational time. In this regard, it is possible to `split' the maximization problem \eqref{eq:gamma_Lipschitz} into several smaller sub-problems so that it may be solved in a distributed fashion, as described in the following corollary.
\vspace{-0.15cm} 
\begin{mycor}\label{cor:Lipschitz}
	\reducewordspace
	The nonlinear function $\m f :\mathbb{R}^n\times \mathbb{R}^m\rightarrow \mathbb{R}^g$ in \eqref{eq:gen_dynamic_systems} is locally Lipschitz continuous in $\mathbf{\Omega}$ satisfying \eqref{eq:locally_Lipschitz} with Lipschitz constant
	\begin{align}
		\gamma_{l_2} =\Bigg(\sum_{i\in \mbb{I}(g)}\max_{(\m{x},\m u)\in \mathbf{\Omega}}\norm{\nabla_{\hspace{-0.05cm}x} f_i(\m{x},\m u)}_2^2\Bigg)^{\hspace{-0.05cm}\!1/2}. \label{eq:gamma_Lipschitz_2}
	\end{align} 	
\end{mycor}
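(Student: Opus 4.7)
The plan is to mirror the proof of Theorem~\ref{prs:Lipschitz_less_cons} essentially verbatim, with just one modification at the final bounding step. Concretely, I would begin by invoking the Key Lemma (Lemma~\ref{lem:key}) with the same choices $\m F = \m I$ and $\m w = \m f(\m x, \m u) - \m f(\hat{\m x}, \m u)$, so that the identity
\begin{align*}
\langle \m f(\m x, \m u) - \m f(\hat{\m x}, \m u), \m w\rangle = \Psi
\end{align*}
holds for some $\bar\tau\in(0,1)$, where $\Psi$ is the same double-sum expression that appears in the proof of Theorem~\ref{prs:Lipschitz_less_cons}.

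Next, I would apply the Cauchy--Schwarz inequality exactly as in Theorem~\ref{prs:Lipschitz_less_cons} (first on the inner sum over $j$, then on the outer sum over $i$) to obtain the intermediate estimate
\begin{align*}
|\Psi|^2 \leq \Bigg(\sum_{i\in \mbb{I}(g)} \norm{\nabla_{\m x} f_i(\hat{\m x}+\bar\tau(\m x - \hat{\m x}),\m u)}_2^2\Bigg)\norm{\m x - \hat{\m x}}_2^2 \norm{\m w}_2^2.
\end{align*}
The only deviation from the original proof occurs here: instead of bounding the entire bracketed sum by a single maximum over $\mathbf{\Omega}$ (which produced $\gamma_{l_1}^2$), I would bound each summand individually by its own maximum over $\mathbf{\Omega}$, so that
\begin{align*}
\sum_{i\in \mbb{I}(g)} \norm{\nabla_{\m x} f_i(\hat{\m x}+\bar\tau(\m x - \hat{\m x}),\m u)}_2^2 \leq \sum_{i\in \mbb{I}(g)} \max_{(\m x,\m u)\in\mathbf{\Omega}} \norm{\nabla_{\m x} f_i(\m x,\m u)}_2^2 = \gamma_{l_2}^2.
\end{align*}
The remainder of the argument then closes exactly as in Theorem~\ref{prs:Lipschitz_less_cons}: cancelling $\norm{\m w}_2^2$ in the case $\m w \neq \m 0$ (and treating $\m w = \m 0$ trivially) yields~\eqref{eq:locally_Lipschitz} with the constant $\gamma_{l_2}$.

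There is no significant obstacle, since the technical heavy lifting has already been done in Theorem~\ref{prs:Lipschitz_less_cons}. The only substantive remark worth including is the observation that $\gamma_{l_2} \geq \gamma_{l_1}$ by the elementary inequality $\max \sum \leq \sum \max$, so the constant produced by Corollary~\ref{cor:Lipschitz} is generally more conservative than that of Theorem~\ref{prs:Lipschitz_less_cons}; the payoff, and the whole point of stating the corollary, is that the $g$ maximization subproblems defining $\gamma_{l_2}$ are now decoupled and can be solved independently in a distributed fashion, which is worth mentioning at the end of the proof to motivate the result.
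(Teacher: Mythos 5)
Your proposal is correct and rests on the same elementary inequality ($\max\sum\leq\sum\max$) that the paper's own proof uses; the paper simply takes the shortcut of observing pointwise that $\sum_{i}\norm{\nabla_{\hspace{-0.05cm}x} f_i(\m{x}^*,\m u^*)}_2^2\leq\sum_{i}\max_{(\m x,\m u)\in\mathbf{\Omega}}\norm{\nabla_{\hspace{-0.05cm}x} f_i(\m{x},\m u)}_2^2$, concluding $\gamma_{l_1}\leq\gamma_{l_2}$ and citing Theorem~\ref{prs:Lipschitz_less_cons}, rather than re-running the Key Lemma argument as you do. Your closing observation about the decoupled subproblems and the conservativeness trade-off matches the discussion the paper places immediately after the corollary.
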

\begin{proof}
	For all $(\m x^*, \m u^*)\in \mathbf{\Omega}$ and all $i\in \mbb{I}(g)$, 
	\begin{align*}
		\norm{\nabla_{\hspace{-0.05cm}x} f_i(\m{x}^*,\m u^*)}_2^2\leq \max_{(\m x, \m u)\in \mathbf{\Omega}} \norm{\nabla_{\hspace{-0.05cm}x} f_i(\m{x},\m u)}_2^2 
	\end{align*}
	and hence
	\begin{align}\label{eq:cor_1_proof_2}
		\sum_{i\in \mbb{I}(g)}\norm{\nabla_{\hspace{-0.05cm}x} f_i(\m{x}^*,\m u^*)}_2^2 \leq \sum_{i\in \mbb{I}(g)} \max_{(\m x, \m u)\in \mathbf{\Omega}} \norm{\nabla_{\hspace{-0.05cm}x} f_i(\m{x},\m u)}_2^2 
	\end{align}
	  {This shows that $\gamma_{l_1} \leq \gamma_{l_2}$ and hence $\gamma_{l_2}$ in \eqref{eq:gamma_Lipschitz_2} is also a Lipschitz constant satisfying \eqref{eq:locally_Lipschitz}.}
\end{proof}
\vspace{-0.15cm}
{From $\gamma_{l_1} \leq \gamma_{l_2}$ we see that using $\gamma_{l_2}$ as a Lipschitz constant} is potentially more conservative than $\gamma_{l_1}$ albeit \eqref{eq:gamma_Lipschitz_2} can be solved individually for each $f_i(\cdot)$ to reduce computational time. This introduces trade-offs between \eqref{eq:gamma_Lipschitz} and \eqref{eq:gamma_Lipschitz_2}.  

\vspace{-0.1cm}
\subsection{One-Sided Lipschitz (OSL)}\label{ssec:osl}
OSL is another function set that is a generalization of Lipschitz continuity: a Lipschitz continuous function is also OSL \cite{Abbaszadeh2010}.
Realize that OSL constant $\gamma_s$ can be any real number while Lipschitz constant can only be nonnegative, as described in Tab. \ref{tab:nonlinear_class}.   {The following example illustrates this point.}
\vspace{-0.2cm}
\begin{exmpl}\label{exmpl:osl}
	Consider the function $f:\mathbb{R} \rightarrow \mathbb{R}$ defined as
	\begin{align}
		~~f(x) = - x^3 - 100 x.  \label{eq:example}
	\end{align}
	This function	is globally OSL with constant $\gamma_s = -100$. This can be seen from expanding the left-hand side of the OSL condition described in Tab. \ref{tab:nonlinear_class}
	\begin{align*}
		(f(x) - f(\hat x))(x-\hat x) &= (-x^3 + \hat{x}^3 - 100 (x-\hat x))(x-\hat x)\\
		&\leq -100 (x-\hat x)^2,
	\end{align*}
	which is due to $(-x^3 + \hat{x}^3)(x-\hat x)\leq 0$. On the other hand, an estimation of Lipschitz constant for $f(\cdot)$, say on the interval $\mathbfcal{X} = [-1, 1]$ would be at least equal to $\gamma_l = 103$. Thus, the sign information $\gamma_s < 0$ might give an advantage over $\gamma_l$ in their application for observer design, as argued in \cite{Abbaszadeh2010}.   {Fig. \ref{Fig:LIP_OSL} provides an illustration on function $f(\cdot)$ described in \eqref{eq:example} with Lipschitz and OSL conditions.} 
\end{exmpl} 
\vspace{-0.1cm}
In this section, we derive numerical methods to compute OSL constant $\gamma_s$ for NDS of the form \eqref{eq:gen_dynamic_systems} so that it is possible to have $\gamma_s\in\mbb{R}$. In this regard, first we propose numerical formulations that provide lower and upper bounds towards the left-hand side of OSL condition presented in Tab. \ref{tab:nonlinear_class}, which is summarized in the following proposition.

\begin{figure}[t]
	\centering
	\vspace{-0.5cm}
	\hspace{-0.2cm}
	\subfloat[]{\includegraphics[scale=0.175]{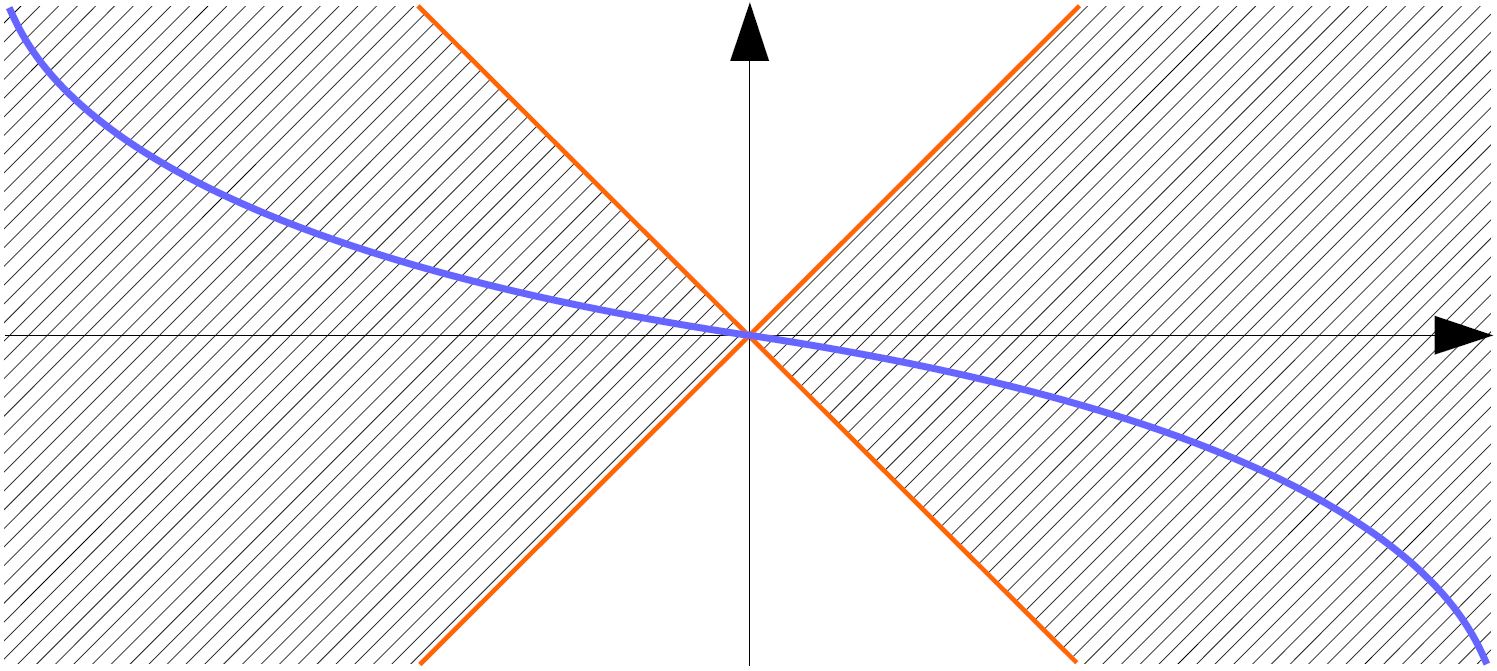}\label{Fig:LIP}}{\vspace{-0.2cm}}\hspace{-0.1em}
	\subfloat[]{\includegraphics[scale=0.175]{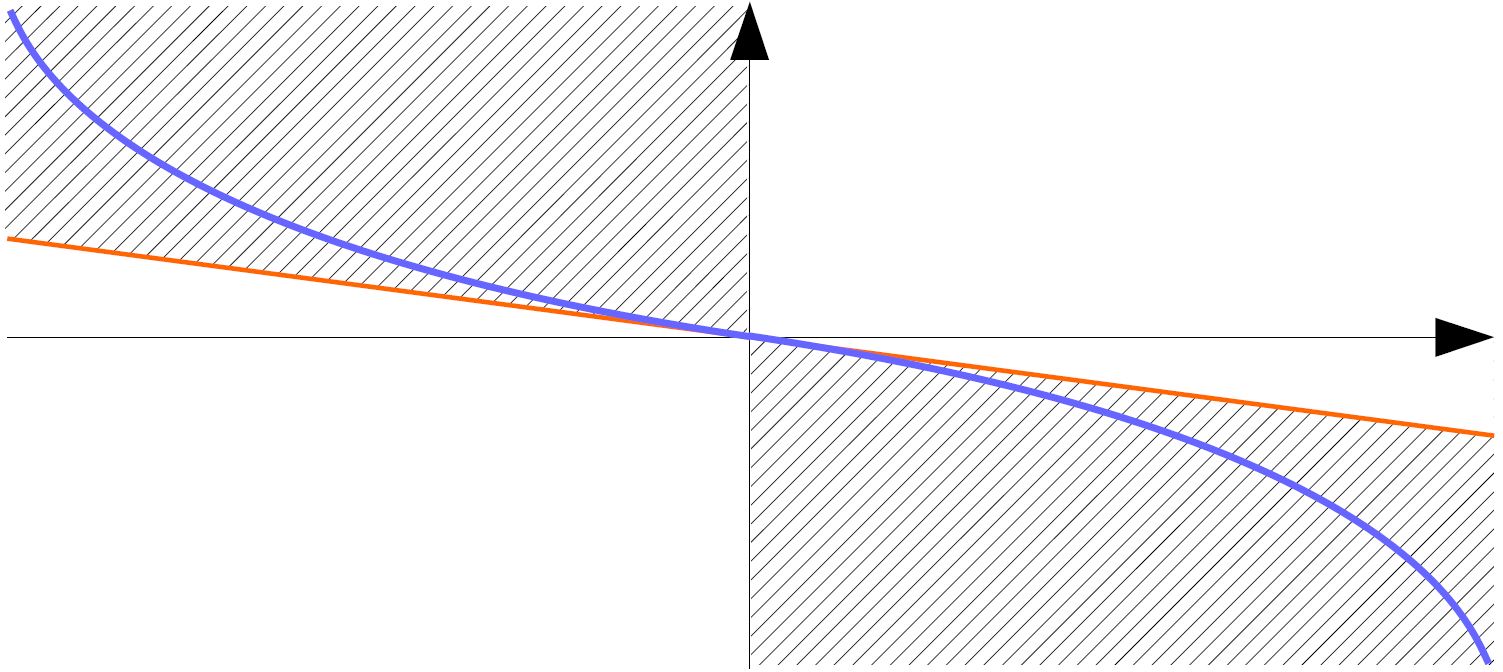}\label{Fig:OSL}}{}\hspace{-0.1em}
	\vspace{-0.1cm}
	\caption{Since the function $f(x) = - x^3 - 100 x$ (in blue) lies in the shaded area, then it satisfies two conditions: \textit{(a)} Lipschitz continuous and \textit{(b)} OSL. Notice that the OSL condition is characterized by a single straight line (in orange) having negative slope. }
	\label{Fig:LIP_OSL}	
	\vspace{-0.2cm}
\end{figure}

\vspace{-0.15cm}
\begin{myprs}\label{prs:osl_bounds}
	\reducewordspace
	For the nonlinear function $\m f :\mathbb{R}^n\times \mathbb{R}^m\rightarrow \mathbb{R}^g$ in \eqref{eq:gen_dynamic_systems}, there exist $\bar{\gamma},\barbelow{\gamma}\in\mathbb{R}$ such that for any $(\m x, \m u), (\hat{\m x}, \m u)\in \mathbf{\Omega}$ the following condition holds
	\begin{subequations}\label{eq:one_sided_Lipschitz_bounds_theorem}
		\begin{align}
			\barbelow{\gamma} \norm{\m x - \hat{\m x}}_2^{2}\leq \langle\m G(\m f(\m x,\m u)-\m f(\hat{\m x},\m u)),\m x-\hat{\m x}\rangle\leq \bar{\gamma} \norm{\m x - \hat{\m x}}_2^{2}, \label{eq:one_sided_Lipschitz}
		\end{align}
		where $\bar{\gamma}$ and $\barbelow{\gamma}$ are given as
		\begin{align}
			\bar{\gamma} &= \max_{(\m{x},\m u)\in \mathbf{\Omega}}\lambda_{\mathrm{max}}\left(\frac{1}{2}\left(\mathbf{\Xi} (\m x,\m u) + \mathbf{\Xi}^{\top} (\m x,\m u)\right)\right) \label{eq:one_sided_Lipschitz_upper}\\
			\barbelow{\gamma} &= \min_{(\m{x},\m u)\in \mathbf{\Omega}}\lambda_{\mathrm{min}}\left(\frac{1}{2}\left(\mathbf{\Xi}(\m x,\m u) + \mathbf{\Xi}^{\top} (\m x,\m u)\right)\right),\label{eq:one_sided_Lipschitz_lower}
		\end{align}
		where each of the $i$-th and $j$-th element of $\mathbf{\Xi} (\cdot)$ is specified as
		\begin{align}
			{\Xi}_{(i,j)} (\m x,\m u) := \sum_{k\in \mbb{I}(g)}\hspace{-0.1cm}G_{(i,k)}\dfrac{\partial f_k}{\partial x_j}(\m x,\m u).\label{eq:one_sided_Lipschitz_matrix}
		\end{align}
	\end{subequations}
\end{myprs}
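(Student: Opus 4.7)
The plan is to derive both bounds as a direct consequence of the Key Lemma (Lemma~\ref{lem:key}) combined with the Rayleigh-quotient characterization of the extreme eigenvalues of a symmetric matrix. The Key Lemma has precisely the right shape: it converts an inner product of a ``function difference'' with an arbitrary vector into an inner product involving a Jacobian-type matrix evaluated at an intermediate point on the segment joining $\m x$ and $\hat{\m x}$. All I then need is to compare this bilinear expression against $\norm{\m x - \hat{\m x}}_2^2$.

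First, I would apply Lemma~\ref{lem:key} with the specific choices $\m F := \m G$ and $\m w := \m x - \hat{\m x}$. The left-hand side of \eqref{eq:key1} then becomes precisely the inner product $\langle \m G(\m f(\m x,\m u) - \m f(\hat{\m x},\m u)), \m x - \hat{\m x}\rangle$ that appears in the OSL condition, and the matrix $\mathbf{\Xi}(\cdot)$ supplied by the lemma coincides, entry by entry, with the one defined in \eqref{eq:one_sided_Lipschitz_matrix}. Hence there exists $\bar{\tau}\in(0,1)$ such that, setting $\m y := \hat{\m x} + \bar{\tau}(\m x - \hat{\m x})$,
\begin{align*}
\langle \m G(\m f(\m x,\m u) - \m f(\hat{\m x},\m u)), \m x - \hat{\m x}\rangle = (\m x - \hat{\m x})^{\top} \mathbf{\Xi}(\m y, \m u)(\m x - \hat{\m x}).
\end{align*}

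Next, since the quadratic form induced by any square matrix equals the one induced by its symmetric part, I would rewrite the right-hand side as $(\m x - \hat{\m x})^{\top} \mathbf{\Xi}_s(\m y, \m u)(\m x - \hat{\m x})$, where $\mathbf{\Xi}_s := \tfrac{1}{2}(\mathbf{\Xi} + \mathbf{\Xi}^{\top})$ is symmetric. Applying the standard Rayleigh-quotient bounds for symmetric matrices at the fixed point $(\m y,\m u)$ then yields
\begin{align*}
\lambda_{\min}(\mathbf{\Xi}_s(\m y,\m u))\,\norm{\m x - \hat{\m x}}_2^2 \leq (\m x - \hat{\m x})^{\top} \mathbf{\Xi}_s(\m y,\m u)(\m x - \hat{\m x}) \leq \lambda_{\max}(\mathbf{\Xi}_s(\m y,\m u))\,\norm{\m x - \hat{\m x}}_2^2.
\end{align*}

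To close the argument, I would invoke convexity of $\mathbf{\Omega}$ (guaranteed by Assumption~\ref{asmp:1}, since $\mathbf{\Omega}$ is an orthotope), which ensures $(\m y, \m u)\in\mathbf{\Omega}$, together with compactness of $\mathbf{\Omega}$ and continuity of the entries of $\mathbf{\Xi}_s$ inherited from $\m f\in\mathcal{C}^1$, to guarantee that the extrema in \eqref{eq:one_sided_Lipschitz_upper}--\eqref{eq:one_sided_Lipschitz_lower} are attained. Replacing $\lambda_{\max}(\mathbf{\Xi}_s(\m y,\m u))$ by its global maximum $\bar{\gamma}$ over $\mathbf{\Omega}$ and $\lambda_{\min}(\mathbf{\Xi}_s(\m y,\m u))$ by its global minimum $\barbelow{\gamma}$ delivers \eqref{eq:one_sided_Lipschitz}. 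The argument has no serious obstacle; the one delicate point worth flagging is that the Key Lemma must be applied with $\m F$ of size $n\times g$ (rather than strictly $g\times g$ as written in Lemma~\ref{lem:key}) so that $\mathbf{\Xi}$ becomes an $n\times n$ matrix and the quadratic form is meaningful. The proof of the Key Lemma itself transfers verbatim to this slightly more general rectangular setting, so this is merely a notational clarification rather than a substantive gap.
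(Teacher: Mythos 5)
Your proposal is correct and follows essentially the same route as the paper's own proof: apply the Key Lemma with $\m F = \m G$ and $\m w = \m x - \hat{\m x}$, pass to the symmetric part of $\mathbf{\Xi}$, and invoke the Rayleigh quotient together with the global extrema over $\mathbf{\Omega}$. Your remark that the Key Lemma must be read with $\m F$ rectangular ($n\times g$) so that $\mathbf{\Xi}$ is $n\times n$ is a fair notational catch that the paper leaves implicit, but it does not change the substance of the argument.
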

\begin{proof}
	Applying key lemma with $\m w := \m x - \hat{\m x}$, we get
	\begin{align}
		&\langle \m G( f(\m x, \m u) - \m f(\hat{\m x}, \m u) ), (\m x -\hat{\m x}) \rangle \nonumber \\
		&= (\m x -\hat{\m x} )^{\top}\m\Xi^{\top}(\m z,\m u)(\m x -\hat{\m x} )\label{eq:osl_bounds_proof_5}
	\end{align}
	where the matrix $\m\Xi(\cdot)$ in \eqref{eq:osl_bounds_proof_5} is described in \eqref{eq:one_sided_Lipschitz_matrix} and $\m z =\hat{\m x} + \bar \tau(\m x - \hat{\m x})$ for some $\bar \tau\in [0, 1]$.
	Now \eqref{eq:osl_bounds_proof_5} is equal to
	\begin{align*}
		(\m x -\hat{\m x} )^{\top}\left(\frac{1}{2}\left(\mathbf{\Xi} (\m z,\m u) + \mathbf{\Xi}^{\top} (\m z,\m u)\right)\right)(\m x -\hat{\m x} ).
	\end{align*} Since $\left(\frac{1}{2}\left(\mathbf{\Xi} (\m z,\m u) + \mathbf{\Xi}^{\top} (\m z,\m u)\right)\right)$ is symmetric, we can apply the Rayleigh quotient \cite{horn2013matrix} to obtain{ 	
		\begin{align*}\barbelow{\gamma} \norm{\m x - \hat{\m x}}_2^{2} \leq \langle\m G(\m f(\m x,\m u)-\m f(\hat{\m x},\m u)),\m x-\hat{\m x}\rangle &\leq \bar{\gamma} \norm{\m x - \hat{\m x}}_2^{2}, 
	\end{align*}}
	  {which establishes \eqref{eq:one_sided_Lipschitz} whilst $\bar{\gamma}$ and $\barbelow{\gamma}$ are  given by \eqref{eq:one_sided_Lipschitz_upper} and \eqref{eq:one_sided_Lipschitz_lower} respectively.}
\end{proof}
\vspace{-0.15cm}
From Proposition \ref{prs:osl_bounds}, the OSL constant for NDS \eqref{eq:gen_dynamic_systems} is given by $\gamma_s = \bar{\gamma}$. This result generalizes the approach to compute OSL constant in \cite{Abbaszadeh2010} in two ways. First, our result applies for a more general form of NDS expressed in \eqref{eq:gen_dynamic_systems} and secondly, we also obtain a lower bound for the left-hand side of OSL condition presented in Tab. \ref{tab:nonlinear_class}, which is useful for determining QIB constants, as later explained in the next section. Yet, the non closed-form expression for $\bar{\gamma}$ described in \eqref{eq:one_sided_Lipschitz_upper} and similarly in \cite{Abbaszadeh2010} makes it difficult to compute via many deterministic global optimization methods. { To that end, instead of solving  \eqref{eq:one_sided_Lipschitz_upper}, it is  more desirable to solve another problem that, if successfully solved, produces an upper bound towards the solution of \eqref{eq:one_sided_Lipschitz_upper}.}
In what follows we discuss several solutions to this problem. First, a more direct conservative way to compute $\gamma_s$ is proposed below. 
\vspace{-0.15cm}
\begin{theorem}\label{cor:osl_bounds}
	\reducewordspace
	The nonlinear function $\m f :\mathbb{R}^n\times \mathbb{R}^m\rightarrow \mathbb{R}^g$ in \eqref{eq:gen_dynamic_systems} is OSL continuous in $\mathbf{\Omega}$ satisfying
	\begin{subequations} 
		\begin{align}
			\vphantom{\left(\frac{v_f}{l}\right)} \langle\m G(\m f(\m x,\m u)-\m f(\hat{\m x},\m u)),\m x-\hat{\m x}\rangle\leq \gamma_s \norm{\m x - \hat{\m x}}_2^{2}
		\end{align}
		for all $(\m x, \m u), (\hat{\m x}, \m u)\in \mathbf{\Omega}$ with
		\begin{align}\label{eq:osl_bound_for_gamma_s}
			\hspace{-0.3cm}	\gamma_s = \Bigg(\max_{(\m x, \m u)\in \m \Omega} \sum_{i, j\in \mbb{I}(g)} \Bigg|\sum_{k\in \mbb{I}(g)} G_{(i,k)} \dfrac{\partial f_k}{\partial x_j}(\m x, \m u)\Bigg|^2\hspace{0.05cm}\Bigg)^{\hspace{-0.05cm}\!1/2}.
		\end{align}
	\end{subequations}
\end{theorem}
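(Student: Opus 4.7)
The plan is to mirror the strategy used for Theorem~\ref{prs:Lipschitz_less_cons} but applied with $\m w := \m x - \hat{\m x}$ rather than $\m w := \m f(\m x,\m u) - \m f(\hat{\m x},\m u)$, and to trade the spectral norm of $\m\Xi$ (used implicitly in Proposition~\ref{prs:osl_bounds}) for the Frobenius norm so that a closed-form expression suitable for deterministic global optimization emerges.

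First I would invoke the Key Lemma (Lemma~\ref{lem:key}) with $\m F = \m G$ and $\m w := \m x - \hat{\m x}$. This yields a $\bar\tau\in(0,1)$ and a point $\m z := \hat{\m x} + \bar\tau(\m x - \hat{\m x})$ such that
\begin{align*}
\langle \m G(\m f(\m x,\m u)-\m f(\hat{\m x},\m u)), \m x - \hat{\m x}\rangle = \langle \m\Xi(\m z,\m u)(\m x - \hat{\m x}), \m x - \hat{\m x}\rangle,
\end{align*}
with $\m\Xi(\cdot)$ exactly the matrix defined in \eqref{eq:one_sided_Lipschitz_matrix}. Since $(\m x,\m u),(\hat{\m x},\m u)\in\mathbf{\Omega}$ and $\mathbf{\Omega}$ is a box (hence convex), we have $(\m z,\m u)\in\mathbf{\Omega}$, so all subsequent estimates remain within the feasible set.

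Second, I would bound the right-hand side by Cauchy–Schwarz and then pass from the induced $2$-norm to the Frobenius norm:
\begin{align*}
\langle \m\Xi(\m z,\m u)(\m x - \hat{\m x}), \m x - \hat{\m x}\rangle
\leq \norm{\m\Xi(\m z,\m u)}_2 \norm{\m x - \hat{\m x}}_2^2
\leq \norm{\m\Xi(\m z,\m u)}_F \norm{\m x - \hat{\m x}}_2^2.
\end{align*}
The purpose of the second inequality is precisely to produce an expression that is explicit in the partial derivatives of $\m f$: writing out the Frobenius norm,
\begin{align*}
\norm{\m\Xi(\m z,\m u)}_F^2 = \sum_{i,j\in\mbb{I}(g)} \Bigl|\Xi_{(i,j)}(\m z,\m u)\Bigr|^2 = \sum_{i,j\in\mbb{I}(g)}\Bigl|\sum_{k\in\mbb{I}(g)} G_{(i,k)} \tfrac{\partial f_k}{\partial x_j}(\m z,\m u)\Bigr|^2.
\end{align*}

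Third, since $\bar\tau$ (and hence $\m z$) depends on $\m x,\hat{\m x}$ and is generally unknown, I would uniformize the bound by replacing $\m z$ with the maximizer over $\mathbf{\Omega}$. This yields $\norm{\m\Xi(\m z,\m u)}_F \leq \gamma_s$ with $\gamma_s$ as stated in \eqref{eq:osl_bound_for_gamma_s}, which, combined with the two previous displays, gives the desired OSL inequality. The main expected obstacle is not any single calculation but rather the judicious choice of norm bound: the induced $2$-norm of $\m\Xi$ (as in Proposition~\ref{prs:osl_bounds}) yields the tightest constant but no closed form, whereas the Frobenius-norm relaxation used here is explicit in the partial derivatives of $\m f$ and is therefore compatible with the interval-based global optimization algorithm developed later in the paper. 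This conservatism, paid as the price for tractability, parallels the gap between $\gamma_{l_1}$ and $\gamma_{l_2}$ in Theorem~\ref{prs:Lipschitz_less_cons} and Corollary~\ref{cor:Lipschitz}.
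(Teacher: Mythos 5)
Your proposal is correct and follows essentially the same route as the paper: both rest on the Key Lemma (Lemma~\ref{lem:key}) applied with $\m F = \m G$, $\m w = \m x - \hat{\m x}$, followed by the relaxation of $\norm{\m\Xi}_2$ to the Frobenius norm $\bigl(\sum_{i,j}\abs{\Xi_{(i,j)}}^2\bigr)^{1/2}$ and a maximization over $\mathbf{\Omega}$. The only (cosmetic) difference is that the paper routes through Proposition~\ref{prs:osl_bounds} and bounds $\bar\gamma = \max_{(\m x,\m u)\in\mathbf{\Omega}}\lambda_{\max}\bigl(\tfrac12(\m\Xi+\m\Xi^{\top})\bigr)$, which forces a case distinction on the sign of $\bar\gamma$ before squaring, whereas your direct Cauchy--Schwarz estimate on the quadratic form $\langle\m\Xi(\m z,\m u)(\m x-\hat{\m x}),\m x-\hat{\m x}\rangle$ avoids that step.
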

\begin{proof} { 
		We simply find an upper bound for $\bar \gamma$ given in \eqref{eq:one_sided_Lipschitz_upper}. For fixed, but arbitrary $(\m x, \m u)\in \m \Omega$, let $\m v\neq 0$ be an eigenvector for the maximal eigenvalue $\lambda_{\max}(\m x, \m u)$ of $\frac{1}{2}\left(\m \Xi + {\m \Xi}^{\top}\right)$. From the eigenvalue equation and the fact that $\langle \m \Xi \m v, \m v \rangle = \langle \m \Xi^{\top} \m v, \m v \rangle$ we can deduce that $\lambda_{\max}(\m x, \m u) \norm{\m v}_2^2 = \langle \m \Xi(\m x, \m u)  \m v, \m v \rangle$
		and hence
		\begin{align}\label{eq:proof_osl_eq1}
			\lambda_{\max}(\m x, \m u) \norm{\m v}_2^2 \leq \norm{\m\Xi(\m x, \m u)}_2 \norm{\m v}^2_2.
		\end{align}
		To proceed, note that \eqref{eq:osl_bound_for_gamma_s} holds if $\bar \gamma \leq 0$. Otherwise, there is some point $(\m x^*, \m u^*)$ at which $\bar \gamma = \lambda_{\max}(\m x^*, \m u^*) > 0$. We can square \eqref{eq:proof_osl_eq1} and recall the definition of $\m\Xi(\cdot)$ to get
		\begin{align*}
			\lambda_{\max}(\m x^*, \m u^*)^2 \leq \max_{(\m x, \m u)\in \m \Omega} \sum_{i, j\in \mbb{I}(g)} \Bigg|\sum_{k\in \mbb{I}(g)} G_{(i,k)} \dfrac{\partial f_k}{\partial x_j}(\m x, \m u)\Bigg|^2, 
		\end{align*}
		and thus the desired result.}
\end{proof}
\vspace{-0.3cm}
Another less straightforward, that is potentially less conservative approach to calculate $\gamma_s$ than \eqref{eq:osl_bound_for_gamma_s} is to make use one of the consequence from \textit{Gershgorin's circle theorem}. In essence, this particular theorem guarantees that each eigenvalue of a matrix is always confined by a disk characterized by the diagonal and non-diagonal entries of that matrix \cite{barany2017gershgorin}. The next proposition recapitulates this approach to compute an upper bound for the greatest eigenvalue of a symmetric matrix.
\vspace{-0.16cm}
\begin{myprs}\label{prs:gershgorin_max_eigenvalue}
	For any $\m \Psi \in \mbb{S}^n$, the following inequality holds
	\begin{align}
		\lambda_{\mathrm{max}}\left(\m \Psi \right) \leq \max_{i\in \mbb{I}(n)}\Bigg(\Psi_{(i,i)} + \sum_{j \in\mbb{I}(n)\setminus i}\abs{\Psi_{(i,j)}}\Bigg).\label{prs:gershgorin_inequality_max_eigenvalue}
	\end{align}
\end{myprs}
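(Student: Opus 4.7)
My plan is to apply Gershgorin's circle theorem directly, since it is already cited in the paragraph preceding the proposition. Gershgorin's theorem states that every eigenvalue of an $n \times n$ matrix $\m A$ lies in the union of the closed discs $\bigl\{z \in \mathbb{C} : |z - A_{(i,i)}| \le \sum_{j \in \mbb{I}(n) \setminus i} |A_{(i,j)}|\bigr\}$. The first step is to invoke symmetry: because $\m \Psi \in \mbb{S}^n$, all its eigenvalues are real, and in particular $\lambda_{\mathrm{max}}(\m \Psi) \in \mathbb{R}$. This reality fact is the key hypothesis that allows a one-sided (rather than two-sided absolute-value) bound.

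Next, I pick an index $i^{*} \in \mbb{I}(n)$ for which the Gershgorin disc centered at $\Psi_{(i^{*}, i^{*})}$ contains $\lambda_{\mathrm{max}}(\m \Psi)$. This yields
\begin{align*}
\abs{\lambda_{\mathrm{max}}(\m \Psi) - \Psi_{(i^{*}, i^{*})}} \le \sum_{j \in \mbb{I}(n) \setminus i^{*}} \abs{\Psi_{(i^{*}, j)}}.
\end{align*}
Since the quantity inside the absolute value on the left is real, this inequality immediately implies $\lambda_{\mathrm{max}}(\m \Psi) \le \Psi_{(i^{*}, i^{*})} + \sum_{j \in \mbb{I}(n) \setminus i^{*}} \abs{\Psi_{(i^{*}, j)}}$. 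Upper-bounding the right-hand side by the maximum over all $i \in \mbb{I}(n)$ then produces exactly \eqref{prs:gershgorin_inequality_max_eigenvalue}.

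If a self-contained proof is preferred (so that Gershgorin's theorem is not used as a black box), the standard eigenvector argument can be adapted in a few lines: let $\m v$ be an eigenvector of $\m \Psi$ associated with $\lambda_{\mathrm{max}}(\m \Psi)$, pick $k$ with $|v_k|$ maximal and rescale so that $v_k > 0$; reading off the $k$-th component of $\m \Psi \m v = \lambda_{\mathrm{max}}(\m \Psi) \m v$, applying the triangle inequality, and dividing by $v_k$ gives the disc inequality above at $i^{*} = k$. The remainder of the argument is then identical.

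There is no substantive obstacle in this proof; the result is essentially textbook. The single point worth flagging is the role of symmetry, which is the reason one obtains the clean upper bound of the proposition instead of a two-sided bound on $|\lambda_{\mathrm{max}}(\m \Psi)|$. This observation also matches how the proposition will be used downstream: the caller only needs an upper bound on $\lambda_{\mathrm{max}}$ of the symmetrized Jacobian matrix $\frac{1}{2}(\m \Xi + \m \Xi^{\top})$ from Proposition \ref{prs:osl_bounds}, and Proposition \ref{prs:gershgorin_max_eigenvalue} will supply a closed-form objective in terms of diagonal entries plus absolute row sums that is directly amenable to the interval-based global maximization routine developed later in the paper.
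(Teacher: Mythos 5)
Your proof is correct and takes essentially the same approach as the paper: the paper presents Proposition \ref{prs:gershgorin_max_eigenvalue} as a direct consequence of Gershgorin's circle theorem (citing it in the preceding paragraph) and supplies no further proof, which is exactly the route you follow, with the symmetry of $\m \Psi$ correctly identified as what makes $\lambda_{\mathrm{max}}$ real so that the one-sided bound follows from the disc inequality. Your optional self-contained eigenvector argument is also sound and closely mirrors the technique the paper does write out in its proof of the related bound in Theorem \ref{thm:vu_max_eigenvalue}.
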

\begin{proof}
	\begin{subequations}
		Suppose $\lambda\in\mbb{R}$ be an eigenvalue of $\m \Psi \in \mbb{S}^n$ and $\m v\in\mbb{R}^n$ be the corresponding eigenvector. Since $\m v\neq \m 0$, then there exists $v_i$ such that $\abs{v_j} \leq \abs{v_i}$ for all $j\in \mbb{I}(n)\setminus i$. We then perform the following transformation: $\m v \leftarrow \mathrm{sgn}(v_i)\frac{\m v}{\abs{v_i}}$. Here, $\mathrm{sgn}(\cdot)$ denotes the sign function.
		Since the new $\m v$ also satisfies $(\m \Psi-\lambda\mI)\m v = \m 0$, then at the $i$-th row we have
		\begin{align}
			\lambda v_i = \sum_{j\in \mbb{I}(n)}\Psi_{(i,i)} v_j = \Psi_{(i,i)} v_i + \sum_{j\in \mbb{I}(n)\setminus i}\Psi_{(i,j)} v_j. \label{eq:gershgorin_inequality_max_proof_1}
		\end{align}
		Observe that the prior transformation leads to $v_i = 1$ and $\abs{v_j} \leq 1$ for all $j\in \mbb{I}(n)\setminus i$. Therefore, from \eqref{eq:gershgorin_inequality_max_proof_1} and applying the triangle inequality, we obtain 
		\begin{align}
			\lambda - \Psi_{(i,i)} &\leq \sum_{j\in \mbb{I}(n)\setminus i}\abs{\Psi_{(i,j)}}\abs{ v_j} \leq \sum_{j\in \mbb{I}(n)\setminus i}\abs{\Psi_{(i,j)}}. \label{eq:gershgorin_inequality_max_proof_2}
		\end{align}
		As \eqref{eq:gershgorin_inequality_max_proof_2} applies to any eigenvalue of $\m \Psi$, then \eqref{prs:gershgorin_inequality_max_eigenvalue} holds.
	\end{subequations}
\end{proof}
\vspace{-0.16cm}
Proposition \ref{prs:gershgorin_max_eigenvalue} provides an amenable way which can be used for computing OSL constant $\gamma_s$ provided that $\m \Psi = \frac{1}{2}\left(\mathbf{\Xi} (\m x,\m u) + \mathbf{\Xi}^{\top} (\m x,\m u)\right)$. That is,
\begin{align*}
	\gamma_s =  \max_{i\in \mbb{I}(n)}\Bigg(\max_{(\m{x},\m u)\in \mathbf{\Omega}}\Bigg(\Psi_{(i,i)} + \sum_{j \in\mbb{I}(n)\setminus i}\abs{\Psi_{(i,j)}}\Bigg)\hspace{-0.1cm}\Bigg).
\end{align*}
Note that in \eqref{prs:gershgorin_inequality_max_eigenvalue} it is possible for $\lambda_{\mathrm{max}}\left(\m \Psi \right)$ to be nonpositive assuming that $\m \Psi $ is diagonally dominant with $\Psi_{(i,i)}\leq 0$ for each $i\in\mbb{I}(n)$.  
In the following theorem, we develop an alternative method, aside from \eqref{prs:gershgorin_inequality_max_eigenvalue}, to provide an upper bound for the greatest eigenvalue of any symmetric matrix.
\vspace{-0.16cm}
\begin{theorem}\label{thm:vu_max_eigenvalue}
	It holds for any $\m \Psi \in \mbb{S}^n$ that
	\begin{subequations}
		\begin{align}
			\lambda_{\mathrm{max}}\left(\m \Psi \right) \leq \max_{i\in \mbb{I}(n)}\left(\Psi_{(i,i)} +\zeta_n\max_{j\in \mbb{I}(n)\setminus i}\abs{\Psi_{(i,j)}}\right),\label{thm:vu_inequality_max_eigenvalue}
		\end{align}	
		where $\zeta_n \in \mbb{R}_{++}$ is a scalar that depends on the dimension $n$ and the optimal value of the following maximization problem { 
			\begin{align}
				\zeta_n = \hspace*{-0.0cm}\max_{\m v}\;\; & \tfrac{1}{v_i}-1 \label{eq:vu_inequality_max_eigenvalue_const_1}\\
				\subjectto  \;\;	 & \sum_{j \in\mbb{I}(n)} \abs{v_j} = 1, \;\m v\in\mbb{R}^n \label{eq:vu_inequality_max_eigenvalue_const_2}\\
				&v_i > 0,\;\abs{ v_j}  \leq v_i,\;\forall j\in \mbb{I}(n)\setminus i.\label{eq:vu_inequality_max_eigenvalue_const_3}
		\end{align}}
	\end{subequations}
\end{theorem}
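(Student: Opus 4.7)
The plan is to reduce the bound to a direct manipulation of the eigenvalue equation for $\m \Psi$ by carefully normalizing an eigenvector in the $\ell_1$ sense. First, let $\m v \neq \m 0$ be an eigenvector of $\m \Psi$ associated with $\lambda_{\max}(\m \Psi)$, rescale so that $\sum_j |v_j| = 1$, pick an index $i$ with $|v_i| = \max_{k}|v_k|$, and (by replacing $\m v$ with $-\m v$ if necessary) assume $v_i > 0$. This guarantees $v_i > 0$ and $|v_j|\leq v_i$ for every $j\in\mbb{I}(n)\setminus i$, so $\m v$ satisfies all three constraints \eqref{eq:vu_inequality_max_eigenvalue_const_2}--\eqref{eq:vu_inequality_max_eigenvalue_const_3} of the maximization defining $\zeta_n$.

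Next, I would expand the $i$-th component of the eigenvalue equation $\m \Psi \m v = \lambda_{\max}(\m \Psi)\, \m v$, yielding
$$
\lambda_{\max}(\m \Psi)\, v_i \;=\; \Psi_{(i,i)} v_i + \sum_{j\in \mbb{I}(n)\setminus i} \Psi_{(i,j)} v_j.
$$
Dividing through by $v_i>0$, applying the triangle inequality and the crude bound $|\Psi_{(i,j)}|\leq \max_{k\in \mbb{I}(n)\setminus i}|\Psi_{(i,k)}|$ inside the sum gives
$$
\lambda_{\max}(\m \Psi) - \Psi_{(i,i)} \;\leq\; \Big(\max_{j\in \mbb{I}(n)\setminus i} |\Psi_{(i,j)}|\Big)\sum_{j\in \mbb{I}(n)\setminus i} \frac{|v_j|}{v_i} \;=\; \Big(\max_{j\in \mbb{I}(n)\setminus i} |\Psi_{(i,j)}|\Big)\left(\tfrac{1}{v_i}-1\right),
$$
where the last equality uses $\sum_j |v_j|=1$ and $v_i>0$. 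Since $\m v$ is feasible for \eqref{eq:vu_inequality_max_eigenvalue_const_1}--\eqref{eq:vu_inequality_max_eigenvalue_const_3}, the factor $\frac{1}{v_i}-1$ is bounded above by $\zeta_n$, and weakening the resulting inequality by taking the maximum over $i$ on the right-hand side yields exactly \eqref{thm:vu_inequality_max_eigenvalue}.

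The main obstacle is of a bookkeeping nature rather than analytical: one must ensure that the index $i$ singled out as the dominant coordinate of the eigenvector coincides with the index $i$ sitting inside the outer $\max$ of \eqref{thm:vu_inequality_max_eigenvalue}, and that the value $\zeta_n$ is unambiguous. The latter holds because the feasible set in \eqref{eq:vu_inequality_max_eigenvalue_const_1}--\eqref{eq:vu_inequality_max_eigenvalue_const_3} is symmetric in $i$ (so $\zeta_n$ does not depend on the chosen $i$); the former is automatic because the estimate $\lambda_{\max}(\m \Psi) - \Psi_{(i,i)} \leq \zeta_n \max_{j\neq i}|\Psi_{(i,j)}|$ is certainly majorised by $\max_{i'}(\Psi_{(i',i')} + \zeta_n \max_{j\neq i'}|\Psi_{(i',j)}|) - \Psi_{(i,i)}$. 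A secondary subtlety is the sign-flip step: one has to note that replacing $\m v$ by $-\m v$ leaves the eigenvalue equation, $\ell_1$-norm, and the dominant-index condition invariant, which is what legitimizes the assumption $v_i>0$ required by \eqref{eq:vu_inequality_max_eigenvalue_const_3}.
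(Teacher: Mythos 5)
Your proposal is correct and follows essentially the same route as the paper: $\ell_1$-normalize and sign-flip an eigenvector so that its dominant coordinate $v_i$ is positive, expand the $i$-th row of the eigenvalue equation, and apply the triangle inequality together with $\sum_j\abs{v_j}=1$ to obtain $\lambda-\Psi_{(i,i)}\leq\bigl(\tfrac{1}{v_i}-1\bigr)\max_{j\neq i}\abs{\Psi_{(i,j)}}$. Your explicit observations that the normalized eigenvector is feasible for the program defining $\zeta_n$ (so $\tfrac{1}{v_i}-1\leq\zeta_n$) and that $\zeta_n$ is independent of the chosen index $i$ are, if anything, slightly more careful than the paper's own wording of that final step.
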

\vspace{-0.3cm}
\begin{proof}
	\begin{subequations}
		  {Suppose that $\lambda\in\mbb{R}$ is an eigenvalue of $\m \Psi \in \mbb{S}^n$ and $\m v\in\mbb{R}^n$ is the corresponding eigenvector.} Since $\m v\neq \m 0$, there exists $v_i$ such that $\abs{v_j} \leq \abs{v_i}$ for every $j\in \mbb{I}(n)\setminus i$.   {Next, the following transformation is applied to $\m v$}  
		\begin{align}
			\m v \leftarrow \mathrm{sgn}(v_i)\frac{\m v}{\sum_{j \in\mbb{I}(n)} \abs{v_j}},\label{eq:vu_thm_proof_1}
		\end{align}
		  {where $\sum_{j \in\mbb{I}(n)} \abs{v_j}= 1$.} As $(\m \Psi-\lambda\mI)\m v = \m 0$ for such $\m v$ described in \eqref{eq:vu_thm_proof_1}, then at the $i$-th row we have
		\begin{align}
			\lambda v_i = \sum_{j\in \mbb{I}(n)}\Psi_{(i,i)} v_j = \Psi_{(i,i)} v_i + \sum_{j\in \mbb{I}(n)\setminus i}\Psi_{(i,j)} v_j. \label{eq:vu_thm_proof_2} 
		\end{align}
		Notice that the transformation \eqref{eq:vu_thm_proof_1} yields $v_i > 0$ and $\abs{v_j} \leq {v_i}$ for all $j\in \mbb{I}(n)\setminus i$. Thus, from \eqref{eq:vu_thm_proof_2} and applying the triangle inequality, we obtain
		\begingroup
		\allowdisplaybreaks
		\begin{align}
			\lambda - \Psi_{(i,i)} &\leq \frac{1}{v_i}\Bigg(\sum_{j\in \mbb{I}(n)\setminus i}\abs{\Psi_{(i,j)}}\abs{ v_j}\Bigg)\nonumber \\
			&\leq \frac{1}{v_i}\Bigg(\max_{j\in \mbb{I}(n)\setminus i}\abs{\Psi_{(i,j)}}\sum_{j\in \mbb{I}(n)\setminus i}\abs{ v_j}\Bigg)\nonumber \\
			&=  \frac{\sum_{j\in \mbb{I}(n)}\abs{ v_j}-v_i}{v_i}\max_{j\in \mbb{I}(n)\setminus i}\abs{\Psi_{(i,j)}}\nonumber \\
			&=  \Big(\frac{1}{v_i}-1\Big)\max_{j\in \mbb{I}(n)\setminus i}\abs{\Psi_{(i,j)}}.\label{eq:vu_thm_proof_3} 
		\end{align}
		\endgroup
		Finally, since \eqref{eq:vu_thm_proof_3} is valid for every eigenvalue of $\m\Psi$, then we eventually obtain \eqref{thm:vu_inequality_max_eigenvalue} where $\zeta_n = \frac{1}{v_i}-1$ is computed through solving optimization problem specified in \eqref{eq:vu_inequality_max_eigenvalue_const_1}-\eqref{eq:vu_inequality_max_eigenvalue_const_3}. 
	\end{subequations}
\end{proof}
\vspace{-0.15cm}{ 
	Notice that finding $\zeta_n$ is a difficult task on its own since the problem given in \eqref{eq:vu_inequality_max_eigenvalue_const_1}-\eqref{eq:vu_inequality_max_eigenvalue_const_3} is not convex. 
	To that end, in the following proposition we provide an approach to compute $\zeta_n$ via a convex optimization problem, which later used to analytically determine $\zeta_n$.
	\vspace{-0.15cm}
	\begin{myprs}\label{prs:computing_zeta}
		The optimal value of the problem described in \eqref{eq:vu_inequality_max_eigenvalue_const_1}-\eqref{eq:vu_inequality_max_eigenvalue_const_3} is $\zeta_n^* = n-1$ where $n$ is the dimension of $\m \Psi$.
	\end{myprs}
\begin{proof}
	First, observe that since $v_i > 0$, then maximizing $\tfrac{1}{v_i}$ pushes $v_i$ to be as close as possible to zero. Hence, it is equivalent to minimizing $v_i$ given that $v_i > 0$. As such, $\zeta_n^*$ can be computed as $\zeta_n^* = \tfrac{1}{v_i^*}-1$.
	This process reformulates \eqref{eq:vu_inequality_max_eigenvalue_const_1}-\eqref{eq:vu_inequality_max_eigenvalue_const_3} into a minimization problem
	\begin{subequations}\label{eq:computing_zeta}
		\begin{align}
			v_i^* = \hspace*{-0.0cm}\min_{\m v\in\mbb{R}^n}\;\; & v_i \label{eq:computing_zeta_1}\\
			\subjectto  \;\; &	 \eqref{eq:vu_inequality_max_eigenvalue_const_2},\;\eqref{eq:vu_inequality_max_eigenvalue_const_3}.\label{eq:computing_zeta_2}
		\end{align}
	\end{subequations}
Suppose that there exists an index $k$ such that $v_k \leq 0$. Since this implies $v_k < v_i$, then $v_k$ would be the minimizer. However, this is not possible since being a minimizer enforces the constraint $v_k > 0$. This process can be repeated for other indices in $\mbb{I}(n)$. From this observation, once can relax the absolute values appearing in \eqref{eq:vu_inequality_max_eigenvalue_const_2} and \eqref{eq:vu_inequality_max_eigenvalue_const_3} without sacrificing optimality. The resulting problem becomes 
\begin{subequations}\label{eq:computing_zeta2}
\begin{align}
	\min_{\m v\in\mbb{R}^n}\;\; & v_i \label{eq:computing_zeta2_1}\\
\subjectto  \;\;	 & \sum_{j \in\mbb{I}(n)} {v_j} = 1, \; \label{eq:computing_zeta2_2}\\
&v_i > 0,\;{ v_j}  \leq v_i,\;\forall j\in \mbb{I}(n)\setminus i.\label{eq:computing_zeta2_3}
\end{align}
\end{subequations}
Without loss of generality, let us assume for now that $i = 1$. For a sufficiently small $\epsilon >0$, the Lagrangian of \eqref{eq:computing_zeta2} is
\begin{align*}
\mathcal{L}(\m v,\m \lambda,\mu) &= v_1 + \lambda_1\left(\epsilon - v_1\right) +\sum_{j\in \mbb{I}(n)\setminus 1} \lambda_j\left(v_j - v_1\right) \\&\quad + \mu \left(\sum_{j \in\mbb{I}(n)} v_j- 1\right),
\end{align*}
where $\m \lambda\in\mbb{R}^{n}$ and $ \mu\in\mbb{R}$ are the Lagrange multipliers for the inequality and equality constraints. Since the Lagrangian is differentiable, the gradient of it with respect to $\m v$ is
\begin{align*}
\nabla_{v} \mathcal{L}(\m v,\m \lambda,\mu) = \bmat{1 - \sum_{j\in \mbb{I}(n)} \lambda_j + \mu \\ \lambda_2 + \mu \\ \vdots \\ \lambda_n + \mu }.
\end{align*}
Observe that, since the primal problem (problem \eqref{eq:computing_zeta2}) is convex and the equality constraint is affine, then any points $\left(\m v^*,\m \lambda^*,\mu^*\right)$ satisfying the Karush-Kuhn-Tucker (KKT) conditions are primal and dual optimal with zero duality gap \cite{Boyd2004Convex}. To that end, it is sufficient to find $\left(\m v^*,\m \lambda^*,\mu^*\right)$ that satisfy the KKT conditions. First, let $v^*_i = v^*_j$ for all $i\neq j$. From \eqref{eq:computing_zeta2_2}, we get $v^*_i = \frac{1}{n}$ for all $i\in\mbb{I}(n)$. This particular choice of $\m v^*$ satisfies all the constraints in \eqref{eq:computing_zeta2} (in this case, we can always set $\epsilon = \frac{1}{2n}$). Now, from the complementary slackness, it can be inferred that $\lambda^*_1 = 0$  and $\lambda^*_j > 0$ for all $j\in \mbb{I}(n)\setminus 1$. By setting $\nabla_{v} \mathcal{L}(\m v^*,\m \lambda^*,\mu^*) = 0$, the following set of $n$ equations with $n$ variables are obtained
\begin{align*}
	 1 - \sum_{j\in \mbb{I}(n)\setminus 1} \lambda^*_j + \mu^* &= 0,\; 
	 \lambda^*_2 + \mu^* = 0,\; \hdots, \;
	 \lambda^*_n + \mu^* = 0.
\end{align*}
One immediate solution is to set $\mu^* = -\frac{1}{n}$ and $\lambda^*_j = \frac{1}{n}$ for all $j\in \mbb{I}(n)\setminus 1$. This shows that $ v^*_1 = \frac{1}{n}$. By repeating these steps for $i = 2,3,\hdots,n$, one gets $ v^*_i = \frac{1}{n}$ and finally $\zeta_n^* = n-1$.
\end{proof}
\vspace{-0.15cm}
	In summary, this section proposes three different approaches to compute OSL constant, all of which are based on computing an upper bound for the maximum eigenvalue of matrix $\mathbf{\Xi}  (\cdot)$. Define $\gamma_{s_1}$, $\gamma_{s_2}$, and $\gamma_{s_3}$ as OSL constants obtained from \eqref{eq:osl_bound_for_gamma_s}, \eqref{prs:gershgorin_inequality_max_eigenvalue}, and \eqref{thm:vu_inequality_max_eigenvalue} respectively. Since Theorem \ref{cor:osl_bounds} gives a nonnegative upper bound towards maximum eigenvalue, then we have  $\gamma_{s_2}\leq \gamma_{s_1}$ and $\gamma_{s_3}\leq \gamma_{s_1}$.
	However, further investigation is required to determine the relation between $\gamma_{s_2}$ and $\gamma_{s_3}$.  }

\vspace{-0.3cm}
\subsection{Quadratic Inner-Boundedness (QIB)}\label{ssec:qib}
The concept of QIB has been extensively used alongside OSL condition for a wide variety of observer design with applications ranging from state estimation to feedback stabilization \cite{Abbaszadeh2010,zhang2012full,song2015robust,liu2014static,Rastegari2019,Gholami2019 }. Analogously to OSL, the constants $\gamma_{q1}$ and $\gamma_{q2}$ for QIB given in Tab. \ref{tab:nonlinear_class} may also be any real numbers. The following theorem summarizes our result for computing QIB constants.
\vspace{-0.1cm}
\begin{theorem}\label{prs:quadratic_inner_bounded}
	The nonlinear function $\m f :\mathbb{R}^n\times \mathbb{R}^m\rightarrow \mathbb{R}^g$ in \eqref{eq:gen_dynamic_systems} is locally QIB in $\mathbf{\Omega}$ such that for any $(\m x, \m u), (\hat{\m x}, \m u)\in \mathbf{\Omega}$ the following holds
	\vspace{-0.1cm}
	\begingroup
	\allowdisplaybreaks
	\begin{subequations}\label{eq:quadratic_inner_bounded_theorem}
		\begin{align}
			&\langle\m G(\m f(\m x,\m u)-\m f(\hat{\m x},\m u)),\m G(\m f(\m x,\m u)-\m f(\hat{\m x},\m u))\rangle\leq \nonumber \\
			& \;\gamma_{q1} \norm{\m x - \hat{\m x}}_2^{2}+\gamma_{q2}\langle\m G(\m f(\m x,\m u)-\m f(\hat{\m x},\m u)),\m x-\hat{\m x}\rangle, \label{eq:quadratic_inner_bounded}
		\end{align}
		where for $\epsilon_1,\epsilon_2\in \mbb{R}_{+}$, $\gamma_{q2} = \epsilon_2-\epsilon_1$ and $\gamma_{q1}$ is specified as
		\begin{align}
			\gamma_{q1} = \hspace{-0.0cm}\epsilon_1\bar{\gamma}-\epsilon_2\barbelow{\gamma}+\max_{(\m{x},\m u)\in  \mathbf{\Omega}}\hspace{-0.0cm}\sum_{i\in\mbb{I}(n)}\hspace{-0.0cm}\norm{\nabla_{\hspace{-0.05cm}x} \hspace{0.04cm}\xi_i(\m x,\m u)}_2^2,\hspace{-0.00cm} \label{eq:gamma_q1_qib}
		\end{align}
		where $\xi_i(\m x,\m u) := \sum_{j\in\mbb{I}(g)} G_{(i,j)}f_j(\m x, \m u)$ for $i\in\mbb{I}(n)$ and
		$\bar{\gamma}$ and $\barbelow{\gamma}$ are the optimal values of \eqref{eq:one_sided_Lipschitz_upper} and \eqref{eq:one_sided_Lipschitz_lower}.
	\end{subequations}
	\endgroup
\end{theorem}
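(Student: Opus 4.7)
The plan is to combine two earlier results: the Lipschitz bound from Theorem~\ref{prs:Lipschitz_less_cons}, applied not to $\m f$ but to the composed function $\m \xi := \m G \m f$, together with the two-sided OSL bounds from Proposition~\ref{prs:osl_bounds}. The specific split of $\gamma_{q1}$ and $\gamma_{q2}$ in the statement is reverse-engineered exactly so that these two bounds can be combined by adding nonnegative slack terms.

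First, I observe that $\xi_i(\m x,\m u)=\sum_{j\in\mbb{I}(g)} G_{(i,j)} f_j(\m x,\m u)$ defines a map $\m \xi:\mbb{R}^n\times\mbb{R}^m\to\mbb{R}^n$ with $\m \xi = \m G \m f$, and $\m\xi \in \mathcal{C}^1$ on $\m\Omega$ since $\m f$ is. Applying Theorem~\ref{prs:Lipschitz_less_cons} to $\m \xi$ (with the roles of the output dimension $g$ replaced by $n$) immediately yields
\begin{align*}
\norm{\m G(\m f(\m x,\m u)-\m f(\hat{\m x},\m u))}_2^2 \leq M \norm{\m x-\hat{\m x}}_2^2,
\end{align*}
where $M := \max_{(\m{x},\m u)\in\m\Omega}\sum_{i\in\mbb{I}(n)}\norm{\nabla_{\hspace{-0.05cm}x}\xi_i(\m x,\m u)}_2^2$. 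This is already a bound of QIB form, but with $\gamma_{q2}=0$ and $\gamma_{q1}=M$.

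The second step is to inflate this trivial bound so as to match the required expressions. Let $Q:=\langle\m G(\m f(\m x,\m u)-\m f(\hat{\m x},\m u)),\m x-\hat{\m x}\rangle$ and $\Delta:=\norm{\m x-\hat{\m x}}_2^2$. Substituting $\gamma_{q1}=\epsilon_1\bar\gamma-\epsilon_2\barbelow\gamma+M$ and $\gamma_{q2}=\epsilon_2-\epsilon_1$, a direct algebraic rearrangement gives
\begin{align*}
\gamma_{q1}\Delta + \gamma_{q2}Q \;=\; M\Delta \;+\; \epsilon_1\bigl(\bar\gamma\Delta - Q\bigr) \;+\; \epsilon_2\bigl(Q - \barbelow\gamma\Delta\bigr).
\end{align*}
By Proposition~\ref{prs:osl_bounds} both parenthesized quantities are nonnegative, and since $\epsilon_1,\epsilon_2\in\mbb{R}_+$ the two correction terms are nonnegative. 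Hence $\gamma_{q1}\Delta+\gamma_{q2}Q \geq M\Delta$, and chaining with the Lipschitz-type bound on $\m\xi$ above completes~\eqref{eq:quadratic_inner_bounded}.

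I do not expect a real obstacle here: the hardest conceptual step is recognizing that Theorem~\ref{prs:Lipschitz_less_cons} can be repurposed for $\m G \m f$ (which is why $\xi_i$ has been introduced in the statement), and then spotting the decomposition that turns the sign-constrained $\epsilon_1,\epsilon_2$ into coefficients of Proposition~\ref{prs:osl_bounds}'s two nonnegative slack quantities $\bar\gamma\Delta-Q$ and $Q-\barbelow\gamma\Delta$. Everything else is routine. One minor care is needed with dimensions: the Jacobian $\mathrm{D}_x\m\xi$ is an $n\times n$ matrix, so the Lipschitz constant for $\m\xi$ involves a sum over $i\in\mbb{I}(n)$ rather than $\mbb{I}(g)$, matching the index set appearing in~\eqref{eq:gamma_q1_qib}.
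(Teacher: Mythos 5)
Your proof is correct and follows essentially the same route as the paper: both establish $\norm{\m G(\m f(\m x,\m u)-\m f(\hat{\m x},\m u))}_2^2 \leq M\norm{\m x-\hat{\m x}}_2^2$ (the paper by re-invoking the key lemma with $\m w=\m G(\m f(\m x,\m u)-\m f(\hat{\m x},\m u))$, you by applying Theorem~\ref{prs:Lipschitz_less_cons} to $\m\xi=\m G\m f$, which amounts to the same computation) and then add the two nonnegative OSL slack terms $\epsilon_1(\bar\gamma\Delta-Q)$ and $\epsilon_2(Q-\barbelow\gamma\Delta)$ from Proposition~\ref{prs:osl_bounds}. Your single-identity rearrangement of $\gamma_{q1}\Delta+\gamma_{q2}Q$ is a slightly cleaner way of packaging the paper's final algebraic step, but the argument is the same.
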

\begin{proof}
	\begin{subequations}
		Applying again Lemma \ref{lem:key} with $\m w = \m G(\m f(\m x, \m u)-\m f(\hat{\m x}, \m u))$, we can deduce the following result
		\begin{align}
			&\langle\m G(\m f(\m x,\m u)-\m f(\hat{\m x},\m u)),\m G(\m f(\m x,\m u)-\m f(\hat{\m x},\m u))\rangle\leq\nonumber \\ &\quad\quad 
			\Bigg(\max_{(\m{x},\m u)\in  \mathbf{\Omega}}\hspace{-0.0cm}\sum_{i\in\mbb{I}(n)}\hspace{-0.0cm}\norm{\nabla_{\hspace{-0.05cm}x} \hspace{0.04cm}\xi_i(\m x,\m u)}_2^2\hspace{-0.00cm} \Bigg)\hspace{-0.05cm}\norm{\m x - \hat{\m x}}_2^{2}.  \label{eq:qib_proof_5}
		\end{align}
		From \eqref{eq:one_sided_Lipschitz}, we get for $\epsilon_1,\epsilon_2\in \mbb{R}_{+}$
		\begin{align}
			0 &\leq \epsilon_1\bar{\gamma} \norm{\m x - \hat{\m x}}_2^{2} -\epsilon_1\langle\m G(\m f(\m x,\m u)-\m f(\hat{\m x},\m u)),\m x-\hat{\m x}\rangle\label{eq:qib_proof_6}\\
			0 &\leq -\epsilon_2\barbelow{\gamma} \norm{\m x - \hat{\m x}}_2^{2} +\epsilon_2\langle\m G(\m f(\m x,\m u)-\m f(\hat{\m x},\m u)),\m x-\hat{\m x}\rangle.\label{eq:qib_proof_7}
		\end{align} 
		Combine these two inequalities by multiplying both with $(-1)$ and adding them to
		\begin{align}
			\begin{split}
				- &(\epsilon_2 - \epsilon_1) \langle \m G(\m f(\m x,\m u)
				-\m f(\hat{\m x},\m u)),\m x-\hat{\m x}\rangle \\
				&- (\epsilon_1 \bar{\gamma} - \epsilon_2 \barbelow{\gamma}) \norm{\m x - \hat{\m x}}_2^2 \leq 0
			\end{split}\label{eq_qib_proof_8}
		\end{align}
		Finally, we can deduce \eqref{eq:quadratic_inner_bounded} by using  \eqref{eq:qib_proof_5}, \eqref{eq_qib_proof_8} and noting that $k\norm{\m x - \hat{\m x}}_2^2$ equals
		\begin{align*}
			&k\norm{\m x - \hat{\m x}}_2^2 + (\eps_2\barbelow{}{\gamma} - \eps_1\bar{\gamma}) \langle\m G(\m f(\m x,\m u)-\m f(\hat{\m x},\m u)),\m x-\hat{\m x}\rangle \\
			&+ (\eps_1 \bar{\gamma}-\eps_2\barbelow{\gamma}) \langle\m G(\m f(\m x,\m u)-\m f(\hat{\m x},\m u)),\m x-\hat{\m x}\rangle\\
			&\leq \gamma_{q_1} \norm{\m x - \hat{\m x}}_2^2 + (\eps_2\barbelow{\gamma} - \eps_1\bar{\gamma}) \langle\m G(\m f(\m x,\m u)-\m f(\hat{\m x},\m u)),\m x-\hat{\m x}\rangle
		\end{align*}
		with $\gamma_{q_1}$ given in
		\eqref{eq:gamma_q1_qib}.
	\end{subequations}
\end{proof}
\vspace{-0.22cm}
This result allows the QIB constants $\gamma_{q1}$ and $\gamma_{q2}$ to be parametrized with nonnegative variables $\epsilon_1$ and $\epsilon_2$, hence giving one kind of degree of freedom that can be useful for observer/controller design. 
From \eqref{eq:qib_proof_5} and using a similar approach as in Corollary \ref{cor:Lipschitz}, one can also verify that
\begin{align*}
	\max_{(\m{x},\m u)\in  \mathbf{\Omega}}\hspace{-0.0cm}\sum_{i\in\mbb{I}(n)}\hspace{-0.0cm}\norm{\nabla_{\hspace{-0.05cm}x} \hspace{0.04cm}\xi_i(\m x,\m u)}_2^2 \leq \hspace{-0.0cm}\sum_{i\in\mbb{I}(n)}\hspace{-0.0cm}\max_{(\m{x},\m u)\in  \mathbf{\Omega}}\norm{\nabla_{\hspace{-0.05cm}x} \hspace{0.04cm}\xi_i(\m x,\m u)}_2^2,
\end{align*}
which prospectively allows $\gamma_{q1}$ to be computed more efficiently in a distributive manner.

\vspace{-0.4cm}
\subsection{Quadratic Boundedness (QB)}\label{ssec:qb}
Unlike the other function sets, QB condition is mainly utilized to generate conditions for designing stabilizing feedback control actions, with a much less common applications in observer design \cite{Guo2017}. This condition is introduced in \cite{Siljak2000} and has been widely applied since then for numerous stabilization purposes \cite{Siljak2002}.
  {Following \cite{Siljak2000,Siljak2002}, the subsequent assumptions for $\m f(\cdot)$ are considered: $\m f(\cdot)$ is independent of $\m u$ and $\m f(0) = 0$.} As such the following result is established.
\vspace{-0.15cm}{ 
	\begin{theorem}\label{prs:quadratic_bounded}
		The nonlinear function $\m f :\mathbb{R}^n\rightarrow \mathbb{R}^g$ in \eqref{eq:gen_dynamic_systems} is locally QB in $\mathbfcal{X}$ such that 
		\begin{subequations}
			\begin{align}
				&\langle \m f(\m x),\m f(\m x)\rangle \leq  \m x^{\top}\m \Gamma^{\top}\m \Gamma \m x, \label{eq:quadratic_bounded}
				\vspace{-0.2cm}
			\end{align}
			for $\m x\in \mathbfcal{X}$ where $\m \Gamma\in\mbb{R}^{n\times n}$ can be computed as follow
			
			\begin{align}
				\vspace{-0.05cm}
				\hspace{-0.0cm}\m \Gamma = \mathrm{Diag}\hspace{-0.00cm}\left(\hspace{-0.05cm}\left\{\max_{\m{x}\in \mathbfcal{X}}  \sqrt{n\hspace{-0.02cm}\sum_{i\in \mbb{I}(g)}\hspace{-0.1cm}\left(\dfrac{\partial f_i}{\partial x_{j}}(\m x)\right)^2} \hspace{0.05cm}\right\}^{\hspace{-0.05cm}n}_{\hspace{-0.05cm}j = 1}\right).\label{eq:quadratic_bounded_2}
			\end{align}
		\end{subequations}
\end{theorem}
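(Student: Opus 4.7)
The plan is to exploit the assumption $\m f(\mathbf{0}) = \mathbf{0}$ to convert the pointwise quadratic bound into a Lipschitz-type estimate along the ray from $\mathbf{0}$ to $\m x$, applied component-wise. Concretely, for each $i\in \mbb{I}(g)$, I would introduce the auxiliary scalar function $\phi_i(\tau) := f_i(\tau \m x)$. By Assumption \ref{asmp:1}, $\phi_i$ is continuously differentiable on $[0,1]$; since the orthotope $\mathbfcal{X}$ is convex and contains $\mathbf{0}$ (implicit in $\m f(\mathbf{0})=\mathbf{0}$), we have $\tau \m x \in \mathbfcal{X}$ for every $\tau \in [0,1]$. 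Lemma \ref{lem:MVT} then yields a $\bar\tau_i \in (0,1)$ with $f_i(\m x) = \phi_i(1)-\phi_i(0) = \nabla_{\hspace{-0.05cm}x} f_i(\bar\tau_i \m x)^{\top} \m x$.

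Next, I would apply the Cauchy--Schwarz inequality in the form $\bigl(\sum_{j=1}^n a_j\bigr)^2 \leq n \sum_{j=1}^n a_j^2$ to the choice $a_j = \frac{\partial f_i}{\partial x_j}(\bar\tau_i \m x)\, x_j$. This gives the per-component bound
\begin{align*}
f_i(\m x)^2 \;\leq\; n \sum_{j\in\mbb{I}(n)} \left(\dfrac{\partial f_i}{\partial x_j}(\bar\tau_i \m x)\right)^{\!2} x_j^2.
\end{align*}
Summing over $i\in \mbb{I}(g)$ and swapping the order of summation produces
\begin{align*}
\langle \m f(\m x), \m f(\m x)\rangle \;\leq\; \sum_{j\in\mbb{I}(n)} x_j^2 \left( n\sum_{i\in\mbb{I}(g)} \left(\dfrac{\partial f_i}{\partial x_j}(\bar\tau_i \m x)\right)^{\!2}\right).
\end{align*}

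The concluding step is to upper bound each coordinate-wise coefficient by its global maximum over $\mathbfcal{X}$ (valid because $\bar\tau_i \m x \in \mathbfcal{X}$), and to identify this maximum with $\Gamma_{jj}^2$, so that the right-hand side becomes $\sum_{j\in \mbb{I}(n)} \Gamma_{jj}^2 x_j^2 = \m x^{\top} \m \Gamma^{\top}\m \Gamma \m x$ with the diagonal $\m \Gamma$ defined in \eqref{eq:quadratic_bounded_2}. The square root then commutes with the maximum (both are monotone on $\mbb{R}_+$), matching the form displayed in the statement.

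The main obstacle I anticipate is a subtle but important one: the mean-value parameter $\bar\tau_i$ depends on the index $i$, so one cannot invoke a single vector-valued mean value theorem here, a pitfall explicitly highlighted earlier in the paper. This is why I would apply Lemma \ref{lem:MVT} scalar-by-scalar rather than attempt to reuse Lemma \ref{lem:key} directly; once this is handled, the remaining arithmetic is a standard Cauchy--Schwarz and a pointwise-to-supremum step. A minor secondary issue is justifying $\mathbf{0}\in \mathbfcal{X}$, which I would state as an implicit consequence of the QB setup in \cite{Siljak2000,Siljak2002} together with the assumption $\m f(\mathbf{0})=\mathbf{0}$.
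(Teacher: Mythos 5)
Your overall architecture (mean value theorem along the segment from $\mathbf{0}$ to $\m x$, Cauchy--Schwarz in the form $\bigl(\sum_j a_j\bigr)^2\le n\sum_j a_j^2$, then a pointwise-to-supremum step) matches the paper's, but the step you single out as your way around the vector-valued MVT pitfall is exactly where the argument breaks. Applying Lemma \ref{lem:MVT} scalar-by-scalar gives a \emph{different} intermediate point $\bar\tau_i\m x$ for each component $i$, so after summing over $i$ the coefficient of $x_j^2$ is $n\sum_{i}\bigl(\frac{\partial f_i}{\partial x_j}(\bar\tau_i\m x)\bigr)^2$, a sum of terms evaluated at different points of $\mathbfcal{X}$. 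This quantity is not the evaluation of $\m z\mapsto n\sum_i\bigl(\frac{\partial f_i}{\partial x_j}(\m z)\bigr)^2$ at any single point, so it cannot be bounded by $\max_{\m z\in\mathbfcal{X}}n\sum_i\bigl(\frac{\partial f_i}{\partial x_j}(\m z)\bigr)^2=\Gamma_{jj}^2$; it is only bounded by $n\sum_i\max_{\m z\in\mathbfcal{X}}\bigl(\frac{\partial f_i}{\partial x_j}(\m z)\bigr)^2$, which in general strictly exceeds $\Gamma_{jj}^2$ (take $g=2$, $n=1$, with $\partial f_1/\partial x_1$ and $\partial f_2/\partial x_1$ peaking at different points). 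So your route does prove a QB inequality, but with a more conservative diagonal matrix than the $\m\Gamma$ of \eqref{eq:quadratic_bounded_2}; it does not prove the theorem as stated.

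The paper avoids this by the same device as Lemma \ref{lem:key}: it applies the scalar MVT to the single auxiliary function $\phi(\tau)=\langle\m f(\m x)-\m f(\hat{\m x}),\m f(\hat{\m x}+\tau(\m x-\hat{\m x}))\rangle$, which produces one common intermediate point $\m z$ for all components; a Cauchy--Schwarz over $i$ then lets one divide by $\norm{\m f(\m x)-\m f(\hat{\m x})}_2$, and since every partial derivative is evaluated at the same $\m z$ the supremum step is legitimate and yields exactly \eqref{eq:quadratic_bounded_2} (the conclusion follows by setting $\hat{\m x}=\m 0$ and using $\m f(\m 0)=\m 0$). To repair your proof, replace the component-wise MVT with this inner-product trick, taking the auxiliary vector to be $\m f(\m x)-\m f(\m 0)=\m f(\m x)$. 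Your remark about needing $\m 0\in\mathbfcal{X}$ is fair and is implicit in the paper's proof as well.
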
}
\vspace{-0.43cm}
\begin{proof}
	Let $\m x,\hat{ \m  x}\in \mathbfcal{X}$ and define two functions $\phi:\mathbb{R}\rightarrow \mathbb{R}$ and $\m g:\mathbb{R}\rightarrow \mathbb{R}^n$ as follows
	\begin{subequations}
		\begin{align*}
			\phi(\tau) &:= \langle {\m f}(\m x)-{\m f}(\hat{ \m  x}),\m f(\m g(\tau))\rangle,\;\;
			\m g(\tau) := \hat{\m x} + \tau (\m x -\hat{\m x} ).
		\end{align*}
		Since $\phi(\cdot)$ is continuous on $[0,1]$ and differentiable on $(0,1)$, then from mean value theorem there exists such $\bar{\tau}\in(0,1)$ that
		\begin{align}
			\frac{d\phi}{d\tau}(\bar{\tau}) &= \frac{\phi(1)-\phi(0)}{1-0} = \phi(1)-\phi(0). \label{eq:qb_proof_1} 
		\end{align}
		For brevity, define $\tilde{\m f}(\m x,\hat{ \m  x}) :={\m f}(\m x)-{\m f}(\hat{ \m  x})$. The left-hand side of \eqref{eq:qb_proof_1} is equivalent to
		\begin{align}
			\frac{d\phi}{d\tau}(\bar{\tau}) &= \hspace{-0.05cm}\sum_{i\in\mbb{I}(g)} \hspace{-0.05cm}\tilde{ f}_i(\m x,\hat{ \m  x})\frac{d f_i\left(\m g(\bar{\tau})\right)}{d\tau}\nonumber \\
			&= \hspace{-0.05cm}\sum_{i\in\mbb{I}(g)} \hspace{-0.05cm}\tilde{ f}_i(\m x,\hat{ \m  x})\hspace{-0.05cm}\left(\sum_{j\in \mbb{I}(n)}\hspace{-0.05cm}\dfrac{\partial f_i\left(\m g(\bar{\tau})\right)}{\partial g_{j}}\cdot\dfrac{d g_{j}}{d \tau}\hspace{-0.05cm}\right).  \label{eq:qb_proof_2}
		\end{align}
		Since $\frac{d g_{j}}{d \tau} = x_j-\hat{x}_j$ and $\m g(\bar{\tau})\in \left(\hat{\m x},\m x \right)$ for which $\bar{\tau}\in(0,1)$ implies that there exists $\m z\in \left(\hat{\m x},\m x \right)$, \eqref{eq:qb_proof_2} is equivalent to
		\begin{align}
			\frac{d\phi}{d\tau}(\bar{\tau})&= \hspace{-0.05cm}\sum_{i\in\mbb{I}(g)} \hspace{-0.05cm}\tilde{ f}_i(\m x,\hat{ \m  x})\hspace{-0.05cm}\left(\sum_{j\in \mbb{I}(n)}\hspace{-0.05cm}\dfrac{\partial f_i\left(\m z\right)}{\partial x_{j}}\left(x_j-\hat{x}_j\right)\hspace{-0.05cm}\right).  \label{eq:qb_proof_3}
		\end{align}
		  {By realizing that the right hand side of \eqref{eq:qb_proof_1} is}
		\begin{align*}
			\phi(1)-\phi(0) \hspace{-0.05cm}&= \hspace{-0.05cm}\langle \tilde{ \m f}(\m x,\hat{ \m  x}),\m f(\m g(1))-\m f(\m g(0))\rangle
			\hspace{-0.05cm}=\hspace{-0.05cm} \norm{\tilde{ \m f}(\m x,\hat{ \m  x})}_2^2,
		\end{align*}
		then from the above, \eqref{eq:qb_proof_1}, and \eqref{eq:qb_proof_3}, one can verify that
		\begin{align}
			&\norm{\tilde{ \m f}(\m x,\hat{ \m  x})}_2^2 = \hspace{-0.1cm}\sum_{i\in\mbb{I}(g)} \hspace{-0.05cm}\tilde{ f}_i(\m x,\hat{ \m  x})\hspace{-0.1cm}\left(\sum_{j\in \mbb{I}(n)}\hspace{-0.05cm}\dfrac{\partial f_i\left(\m z\right)}{\partial x_{j}}\left(x_j-\hat{x}_j\right)\hspace{-0.1cm}\right) \nonumber \\ 
			&\quad\leq \hspace{-0.1cm}\sum_{i\in\mbb{I}(g)} \hspace{-0.00cm}\abs{\tilde{ f}_i(\m x,\hat{ \m  x})\hspace{-0.1cm}\left(\sum_{j\in \mbb{I}(n)}\hspace{-0.05cm}\dfrac{\partial f_i\left(\m z\right)}{\partial x_{j}}\left(x_j-\hat{x}_j\right)\hspace{-0.1cm}\right)\hspace{-0.05cm}} \nonumber \\ 
			&\quad\leq \norm{\tilde{ \m f}(\m x,\hat{ \m  x})}_2 \left(\sum_{i\in\mbb{I}(g)}\abs{\sum_{j\in \mbb{I}(n)}\hspace{-0.05cm}\dfrac{\partial f_i\left(\m z\right)}{\partial x_{j}}\left(x_j-\hat{x}_j\right)\hspace{-0.00cm}}^2\right)^{\hspace{-0.05cm}\!1/2}, \nonumber
		\end{align}
		  {which implies that}
		\begingroup
		\allowdisplaybreaks
		\begin{align}
			&\tilde{ \m f}(\m x,\hat{ \m  x})^{\top}\tilde{ \m f}(\m x,\hat{ \m  x}) \leq \sum_{i\in\mbb{I}(g)}\hspace{0.05cm}\abs{\sum_{j\in \mbb{I}(n)}\hspace{-0.05cm}\dfrac{\partial f_i\left(\m z\right)}{\partial x_{j}}\left(x_j-\hat{x}_j\right)\hspace{-0.00cm}}^2\nonumber \\ 
			&\quad\leq \sum_{i\in\mbb{I}(g)}\hspace{-0.1cm}\left(\sum_{j\in \mbb{I}(n)}\hspace{-0.00cm}\abs{\dfrac{\partial f_i\left(\m z\right)}{\partial x_{j}}\left(x_j-\hat{x}_j\right)}\hspace{-0.02cm}\right)^{\hspace{-0.1cm}2}\nonumber \\ 
			&\quad\leq \sum_{i\in\mbb{I}(g)}\hspace{-0.05cm}n\hspace{-0.05cm}\left(\sum_{j\in \mbb{I}(n)}\hspace{-0.1cm}\left(\dfrac{\partial f_i\left(\m z\right)}{\partial x_{j}}\right)^2\hspace{-0.02cm}\left(x_j-\hat{x}_j\right)^2\hspace{-0.05cm}\right)\nonumber \\ 
			&\quad\leq \max_{\m{x}\in \mathbfcal{X}} \sum_{i\in\mbb{I}(g)}\hspace{-0.00cm}\hspace{-0.02cm}\sum_{j\in \mbb{I}(n)}\hspace{-0.05cm}n\hspace{-0.05cm}\left(\dfrac{\partial f_i\left(\m x\right)}{\partial x_{j}}\right)^2\hspace{-0.02cm}\left(x_j-\hat{x}_j\right)^2\hspace{-0.05cm}\nonumber \\ 
			&\quad\leq \sum_{j\in \mbb{I}(n)}\hspace{-0.1cm}\left(\hspace{-0.05cm}\max_{\m{x}\in \mathbfcal{X}}\sum_{i\in\mbb{I}(g)}\hspace{-0.05cm}n\hspace{-0.05cm}\left(\dfrac{\partial f_i\left(\m x\right)}{\partial x_{j}}\right)^{\hspace{-0.1cm}2}\hspace{-0.00cm}\hspace{-0.00cm}\right)\hspace{-0.05cm}\left(x_j-\hat{x}_j\right)^2\nonumber \\
			&\quad=(\m x-\hat{  \m x})^{\top}\m \Gamma^{\top}\m \Gamma (\m x-\hat{  \m x}), \label{eq:qb_proof_4}
		\end{align}
		\endgroup
		where the diagonal matrix $\m \Gamma$ is described in \eqref{eq:quadratic_bounded_2}. Since $\m f(0) = 0$, then from \eqref{eq:qb_proof_4} one can immediately obtain \eqref{eq:quadratic_bounded}.
	\end{subequations}
\end{proof}
\vspace{-0.15cm}
As seen in \eqref{eq:quadratic_bounded_2}, the matrix $\m \Gamma$ is diagonal. As such, one can indeed solve the global maximization problem for each diagonal entry of $\m \Gamma$ individually in a distributive manner without trading computational time with conservativeness. 
\vspace{-0.1cm}
\begin{myrem}(Difference between QB and QIB)
	Although the last two function sets---QB and QIB---seemingly share canny resemblance due to their terminology, the two sets are drastically different. Admittedly, the terminology is somewhat confusing since it suggests a connection between the two. QB is simply a bound on the values the function can attain, implying that the norm of $\m f(\cdot)$ is bounded by a quadratic function. On the other hand, QIB is akin to the Lipschitz condition, since it compares the values of $\m f(\cdot)$ at two different points. 
\end{myrem}
\vspace{-0.1cm}

{ 
	\vspace{-0.35cm}
	\subsection{Remark on Improving Scalability and New Insights}\label{ssec:lip_qib_relation}
	So far we have presented systematic methods of transforming NDS parameterization as global maximization problems for various function sets. 
	  {Herein, we \textit{(a)} discuss some methods to improve scalability of the proposed approaches for NDS parameterization and \textit{(b)} review the relation between Lipschitz continuous and QIB function sets and provide some necessary corrections.}
	
	We now discuss important remarks and strategies for improving the scalability when solving global optimization problems for NDS parameterization---listed as follows. 
	\begin{itemize}[leftmargin=*]
		\item The first strategy, as mentioned in Remark \ref{rem:search_space_reduction}, can be referred to as \textit{search space reduction}. For instance, in the context of Lipschitz parameterization where not every components of $\m{x}$ and $\m u$ appears as an argument in $\nabla_{\hspace{-0.05cm}x} f_i(\m{x},\m u)$, one can neglect these components from $\m \Omega$ such that it is sufficient to solve
		$\max_{(\m{x},\m u)\in {\mathbf{\Omega}_{\m r_i}}}\norm{\nabla_{\hspace{-0.05cm}x} f_i(\m{x},\m u)}_2^2$ where $\mathbf{\Omega}_{\m r_i}$ denotes the reduced search space.  
		\item The second strategy exploits the mathematical structure of $\m f(\cdot)$.   {Most of NDS representing many modern infrastructures are comprised of several individual subsystems that are intertwined together, forming complex networked NDS.} The majority of these subsystems, to some extent, share similar forms of nonlinear dynamics. As an illustration, consider a network of highway traffic \cite{nugroho2018,nugroho2018journal}. Assuming a free-flow condition, the nonlinearity on the $i$-th highway segment not connected to ramps is given as $f_i(\m x) = \delta\left(x^2_i-x^2_{i-1}\right)$ where $\delta$ is a constant. If the $i+j$-th highway segment is also not connected to ramps for some index $j$, then $f_{i+j}(\m x) = \delta\left(x^2_{i+j}-x^2_{i+j-1}\right)$ and thus from \eqref{eq:gamma_Lipschitz_2}, it suffices to solve $\max_{(\m{x},\m u)\in \mathbf{\Omega}_i}\norm{\nabla_{\hspace{-0.05cm}x} f_i(\m{x},\m u)}_2^2$ once since the optimal values for $i$ and $i+j$ are the same (assuming that $\mathbf{\Omega}_i = \mathbf{\Omega}_{i+j}$ for some index $j$).
		\item The third strategy employs a distributed computing paradigm, where in the context of computing Lipschitz constant, its computational burden is shared within several computers (or machines) as the problem $\max_{(\m{x},\m u)\in \mathbf{\Omega}}\norm{\nabla_{\hspace{-0.05cm}x} f_i(\m{x},\m u)}_2^2$ is solved on each machine for all $i \in\mbb{I}(g)$, thereby reducing the overall computational time.
	\end{itemize} 
	Note that the second strategy implies that the complexity of our approaches does \textit{not} necessarily depend on the problem dimension---instead, it relies on the number of different nonlinearities in the NDS. 
	At last, it is always possible to combine these strategies altogether to maximize the computational efficiency. 
	
	Next,  we discuss some recent results on Lipschitz continuity and QIB function sets. 
	It is generally known that, for $\m G = \m I$ and $\m f(\cdot)$ satisfies Lipschitz continuity condition with constant $\gamma_l$, then $\m f(\cdot)$ is also QIB with $\gamma_{q1} = \gamma_l^2$ and $\gamma_{q2} = 0$ \cite{Abbaszadeh2013}. 
	It is claimed in \cite{Abbaszadeh2013}, however, that the converse does not hold in general. The following statement is proven in that paper: if $\m f(\cdot)$ is OSL and QIB with $\gamma_{q2}$ positive, then $\m f(\cdot)$ is Lipschitz continuous.
	Our investigation reveals that if a function $\m f(\cdot)$ is QIB then it \emph{is necessarily} Lipschitz continuous. This finding is summarized in the following theorem.
	\vspace{-0.15cm}
	\begin{theorem}\label{thm:new_qib_lip}
		Suppose that $\m f(\cdot)$  is QIB with constants $\gamma_{q1},\gamma_{q2}\in\mbb{R}$ and $\m G = \m I$. Then, $\m f(\cdot)$ is also Lipschitz continuous where the constants $\gamma_{q1}, \gamma_{q2}$ necessarily satisfy the inequality
		\begin{align}
			2 \gamma_{q1}+|\gamma_{q2}|^2 \geq 0. \label{eq:qib_constants_condition}
		\end{align}
	\end{theorem}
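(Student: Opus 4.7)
The plan is a short chain of two standard inequalities (Cauchy--Schwarz, then Young) applied to the QIB defining relation, which will simultaneously deliver both the Lipschitz estimate and the necessity of the claimed inequality.

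First I would abbreviate $\m \delta := \m f(\m x,\m u) - \m f(\hat{\m x},\m u)$ and $\m s := \m x - \hat{\m x}$. With $\m G = \m I$, the QIB condition from Tab.~\ref{tab:nonlinear_class} becomes
\begin{align*}
    \|\m \delta\|_2^2 \leq \gamma_{q1} \|\m s\|_2^2 + \gamma_{q2} \langle \m \delta, \m s\rangle
\end{align*}
for every $(\m x,\m u),(\hat{\m x},\m u)\in\m \Omega$. The Cauchy--Schwarz inequality yields $\gamma_{q2}\langle\m \delta,\m s\rangle \leq |\gamma_{q2}|\,\|\m \delta\|_2\,\|\m s\|_2$ (this bound holds regardless of the sign of $\gamma_{q2}$, since if $\gamma_{q2}<0$ the inner product can be as negative as $-\|\m \delta\|_2\|\m s\|_2$). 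Substituting gives $\|\m \delta\|_2^2 \leq \gamma_{q1}\|\m s\|_2^2 + |\gamma_{q2}|\,\|\m \delta\|_2\,\|\m s\|_2$.

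Next I would apply Young's inequality (AM--GM with weights $\tfrac12,\tfrac12$) to split the cross term as $|\gamma_{q2}|\,\|\m \delta\|_2\,\|\m s\|_2 \leq \tfrac{1}{2}\|\m \delta\|_2^2 + \tfrac{|\gamma_{q2}|^2}{2}\|\m s\|_2^2$. Plugging back and absorbing the $\tfrac12\|\m \delta\|_2^2$ into the left-hand side gives, after multiplying through by $2$,
\begin{align*}
    \|\m \delta\|_2^2 \;\leq\; \bigl(2\gamma_{q1} + |\gamma_{q2}|^2\bigr)\,\|\m s\|_2^2.
\end{align*}
This single estimate contains everything: on the one hand, since $\m \Omega$ contains at least two distinct points (so that one can pick $\m x\neq \hat{\m x}$, hence $\|\m s\|_2^2>0$), the non-negativity of the left-hand side forces $2\gamma_{q1}+|\gamma_{q2}|^2\geq 0$, which is precisely \eqref{eq:qib_constants_condition}; on the other hand, taking square roots yields the Lipschitz bound $\|\m \delta\|_2 \leq \gamma_l\|\m s\|_2$ with explicit constant $\gamma_l = \sqrt{2\gamma_{q1}+|\gamma_{q2}|^2}$, establishing Lipschitz continuity in the sense of Tab.~\ref{tab:nonlinear_class}.

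I do not anticipate serious obstacles: the derivation is essentially two inequalities chained together, and the only point requiring care is the sign handling of $\gamma_{q2}\langle\m \delta,\m s\rangle$ (to justify that $|\gamma_{q2}|$ is the correct multiplier regardless of the sign of $\gamma_{q2}$) and the observation that the constraint $2\gamma_{q1}+|\gamma_{q2}|^2\geq 0$ is extracted as a \emph{necessary} consequence by exhibiting any $\m s\neq \m 0$ in $\m \Omega$. The only mildly subtle step to emphasize in the write-up is that Young's inequality is used with equal weights rather than with a free parameter $\eps$, since any other split would yield a looser constant than $\sqrt{2\gamma_{q1}+|\gamma_{q2}|^2}$; optimizing over the Young weight confirms that the $(1/2,1/2)$ split is sharp for this particular combination.
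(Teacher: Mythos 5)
Your proof is correct, and it is essentially the canonical argument: the paper itself does not reproduce a proof of this theorem (it defers to the cited reference \cite{Nugroho2020insights}), but your chain of Cauchy--Schwarz followed by Young with equal weights, absorption of $\tfrac12\norm{\m\delta}_2^2$, and the observation that nonnegativity of $\norm{\m\delta}_2^2$ forces $2\gamma_{q1}+\abs{\gamma_{q2}}^2\geq 0$ delivers exactly the Lipschitz constant $\gamma_l=\sqrt{2\gamma_{q1}+\abs{\gamma_{q2}}^2}$ quoted in the paper's discussion after the theorem, so there is every reason to believe this is the intended route.

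One small correction to your closing remark: the $(\tfrac12,\tfrac12)$ Young split is \emph{not} sharp in general. Using $ \abs{\gamma_{q2}}\norm{\m\delta}_2\norm{\m s}_2\leq \tfrac{\eps}{2}\norm{\m\delta}_2^2+\tfrac{\abs{\gamma_{q2}}^2}{2\eps}\norm{\m s}_2^2$ and absorbing gives $\norm{\m\delta}_2^2\leq \tfrac{2\eps\gamma_{q1}+\abs{\gamma_{q2}}^2}{\eps(2-\eps)}\norm{\m s}_2^2$ for $\eps\in(0,2)$; the derivative of this coefficient at $\eps=1$ is $2\gamma_{q1}$, so $\eps=1$ minimizes it only when $\gamma_{q1}=0$ (e.g., for $\gamma_{q1}=\gamma_{q2}=1$, $\eps=0.9$ already gives a coefficient below $3$). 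This does not affect the theorem --- the equal-weight choice still yields a valid Lipschitz constant and the stated necessary inequality --- but you should either drop the sharpness claim or restrict it to the case $\gamma_{q1}=0$.
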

	\vspace{-0.15cm}
	The above result (proved in \cite{Nugroho2020insights}) shows that QIB \textit{implies} Lipschitz continuity. Moreover since Lipschitz continuity implies QIB, it can be concluded that both class of  functions are the same. In fact, for given QIB constants $\gamma_{q1}, \gamma_{q2}$, the Lipschitz constant is computed as $\gamma_l = \sqrt{2\gamma_{q1} + \abs {\gamma_{q2}}^2}.$ Note that the condition \eqref{eq:qib_constants_condition} is necessary, i.e., it holds for every QIB functions, but \textit{not} sufficient.   
	In addition to this, our recent work \cite{Nugroho2020insights} also pinpoints some mistakes in the numerical section of \cite{Abbaszadeh2013,zhang2012full}, where we provide the corresponding correct conditions. 
	The next section focuses on the development of a simple interval-based algorithm to solve global maximization problems. }

\vspace{-0.2cm}
\section{A Derivative-Free Interval-Based Global Maximization via Branch-and-Bound Routines}\label{sec:interval_optimization}
After showing that the parameterization of NDS can actually be transformed into solving global maximization problems, our next step is finding a way to solve such problems.
In general there are two distinct approaches that can be pursued in solving global optimization problems: \textit{point-based method} and \textit{interval-based method} \cite{moa2007interval}. Note that the disadvantages of the former include the lack of optimality guarantees and the possibility of such methods converging to local maxima/minima \cite{moa2007interval}. In the context of NDS parameterization, if the solution from point-based optimization algorithms is suboptimal, then the resulting parameter is an \textit{under-approximation}. {For the interval-based algorithm, however, as it provides both upper and lower bounds towards the optimal solution, the resulting upper bound can be used to compute the corresponding parameter of the NDS. 
Based on the above considerations, we utilize an interval-based global optimization method for parameterizing the NDS. The following sequel succinctly discusses the interval-based global optimization method considered in this paper.}

\vspace{-0.2cm}
\subsection{Fundamentals of Interval Arithmetic and Interval Function  }\label{ssec:fia}
{
An {interval}  $\left[a,b\right]\subset \mbb{R}$ is formally defined as follows 
\begin{align}
\left[a,b\right] := \{x\in\mbb{R}\,|\,a\leq x \leq b\}.\nonumber
\end{align} 
In IA, it is common to represent a real variable $x$ by its interval so that $x\in\left[x\right] := \left[\barbelow{x},\bar{x}\right]$. It follows directly from its definition that $\barbelow{x} = \mathrm{inf}\left(\left[x\right]\right)$ and $\bar{x} = \mathrm{sup}\left(\left[x\right]\right)$. An interval $\left[x\right]$ is said to be \textit{degenerate} if it contains one element $x$ such that $x = \barbelow{x} = \bar{x}$. }
The midpoint (or {center}) and width of an interval $\left[x\right]$ are respectively defined as 
\begin{align*}
\mathrm{mid}\left(\left[x\right]\right) := \frac{1}{2}(\barbelow{x}+\bar{x}),\quad \mathrm{width}\left(\left[x\right]\right) := \bar{x}-\barbelow{x}.
\end{align*}
For brevity, we define $\abs{[x]} := \mathrm{width}\left(\left[x\right]\right)$. A binary operation of two intervals $X := \left[x\right]$ and $Y := \left[y\right]$ is defined as follow
\begin{align*}
X\diamond Y := \{z\in\mbb{R}\,|\,z = x\diamond y,\,x\in X,\,y\in Y\},
\end{align*}
where the notation $\diamond$ represents any binary operator from the set of elementary arithmetic operators $\{+,-,\times,\div\}$. Readers are referred to \cite{SCHULZEDARUP2018135,daumas2009verified} for more comprehensive IA operations. 

An $n$-orthotope $\mathbfcal{S} = \prod_{i\in\mbb{I}(n)} \mathcal{S}_i$ can be regarded as an \textit{element} of $n$-dimensional intervals $\mbb{IR}^n$, i.e., $\mathbfcal{S}\in\mbb{IR}^n$ and consequently $\mathcal{S}_i\in\mbb{IR}$. By definition, the notation $\m z\in \mathbfcal{S}$ means that $z_i\in\mathcal{S}_i$ for all $i\in\mbb{I}(n)$. For two $n$-orthotopes $\mathbfcal{S},\mathbfcal{S}'\in\mbb{IR}^n$, the notion $\mathbfcal{S}\subseteq\mathbfcal{S}'$ holds if for all $i\in\mbb{I}(n)$ we have $\mathcal{S}_i\subseteq\mathcal{S}'_i$ \cite{Moore2009}. 
The interval representation of a real-valued function $f(\cdot)$ is denoted by $f^{I}(\cdot)$ \cite{hansen2003global}. Such function is generally referred to as \textit{interval-valued function}. An interval function $f^{I}(\cdot)$, which is an interval representation of $f(\cdot)$, is termed as an \textit{interval extension} of $f(\cdot)$---see Definition \ref{def:interval_extension}. 
\vspace{-0.1cm}
\begin{mydef}\label{def:interval_extension}
	Let $f:\mbb{R}^n\rightarrow\mbb{R}$ be a mapping with $\mathbfcal{S}\subset \mbb{R}^n$ as the domain of interest. An interval function $f^I:\mbb{IR}^n\rightarrow\mbb{IR}$ is said to be an \textit{interval extension} of $f(\cdot)$ if, for all $\m z\in\mathbfcal{S}$, $f(\m z) = f^{I}\left([\m z, \m z]\right)$ where
	$[\m z, \m z] = \prod_{i\in\mbb{I}(n)} [z_i,z_i]$.
\end{mydef}
\vspace{-0.1cm}
{
The construction of $f^{I}(\cdot)$ can be done by simply substituting each occurrence of $x_i$ in $f(\cdot)$ with $\mathbfcal{S}_i$ for all $i\in\mbb{I}(n)$ \cite{moa2007interval}. At this point, we should be aware of one particular disadvantage of IA referred to as \textit{dependency effect}: the interval evaluation of $f(\cdot)$ depends on the expression of $f(\cdot)$ used to implement the computation \cite{moa2007interval}. 
With that in mind, in using IA for solving global optimization problems, one has to carefully choose the way of expressing the interval extension of the objective value in order to obtain the best results. }
The next definition presents one important property of an interval extension, namely \textit{inclusion isotonic}, which is useful for computing interval extensions. 
\vspace{-0.1cm}
\begin{mydef}\label{def:inclusion_isotonic}
	Let $f:\mbb{R}^n\rightarrow\mbb{R}$ be a mapping with $f^{I}:\mbb{IR}^n\rightarrow\mbb{IR}$ as its interval extension. For $\mathbfcal{S},\mathbfcal{S}'\in\mbb{IR}^n$ such that $\mathcal{S}_i\subseteq\mathcal{S}'_i$ for every $i\in\mbb{I}(n)$, then $f^{I}(\cdot)$ is \textit{inclusion isotonic} if and only if $f^{I}(\mathbfcal{S})\subseteq f^{I}(\mathbfcal{S})'$. 
\end{mydef}
\vspace{-0.1cm}
{
Herein, it is assumed that most computer programs represent real numbers using outward rounding \cite{van2010global}. This assumption allows every interval extension of any real function satisfies the inclusion isotonic property \cite{hansen2003global}. The fundamental theorem of interval arithmetic, presented below, guarantees that any interval extension of a real function satisfying the inclusion isotonicity encompasses the range of that function \cite{hansen2003global}. }
\vspace{-0.1cm}
\begin{mylem}[Fundamental Theorem of Interval Arithmetic]\label{lem:interval_arithmetic_fundamental_thrm}
	Let $f^{I}:\mbb{IR}^n\rightarrow\mbb{IR}$ be an interval extension of $f:\mbb{R}^n\rightarrow\mbb{R}$ which is inclusion isotonic. Then, for all $z_i\in\mathcal{S}_i$ where $i\in\mbb{I}(n)$ and $\mathbfcal{S}\in\mbb{IR}^n$, $f^{I}(\mathbfcal{S})$ contains the range of $f(\m z)$.
\end{mylem}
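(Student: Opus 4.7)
The plan is to prove this lemma by a direct chaining of the two hypotheses, namely the defining property of an interval extension (Definition~\ref{def:interval_extension}) and the inclusion isotonic property (Definition~\ref{def:inclusion_isotonic}). The target conclusion is that for every $\m z\in\mathbfcal{S}$, the real number $f(\m z)$ lies inside the interval $f^{I}(\mathbfcal{S})$, which is exactly the statement that $f^{I}(\mathbfcal{S})$ contains the range $\{f(\m z): \m z\in \mathbfcal{S}\}$.

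First I would fix an arbitrary $\m z\in\mathbfcal{S}$, which by the componentwise definition of $\mathbfcal{S}$ simply means $z_i\in\mathcal{S}_i$ for every $i\in\mbb{I}(n)$. Then I would form the degenerate $n$-dimensional interval $[\m z,\m z]:=\prod_{i\in\mbb{I}(n)}[z_i,z_i]$. The key elementary observation is that, for each coordinate, the singleton interval $[z_i,z_i]$ is contained in $\mathcal{S}_i$, and therefore $[\m z,\m z]\subseteq \mathbfcal{S}$ as elements of $\mbb{IR}^n$.

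Next, invoking inclusion isotonicity of $f^{I}$ on the pair $[\m z,\m z]\subseteq \mathbfcal{S}$ yields $f^{I}([\m z,\m z])\subseteq f^{I}(\mathbfcal{S})$. Simultaneously, the defining property of an interval extension gives the identity $f(\m z)=f^{I}([\m z,\m z])$. Combining these two facts I would conclude $f(\m z)\in f^{I}(\mathbfcal{S})$. Since $\m z$ was arbitrary in $\mathbfcal{S}$, it follows that $f^{I}(\mathbfcal{S})\supseteq \{f(\m z): \m z\in \mathbfcal{S}\}$, which is the claim.

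There is no genuine obstacle in this argument; it is essentially a two-line bookkeeping exercise once the correct objects are introduced. The only subtlety worth stating explicitly is the identification of a real point $\m z$ with its associated degenerate box $[\m z,\m z]\in \mbb{IR}^n$, which is the bridge that allows Definition~\ref{def:inclusion_isotonic} to be applied and produces the desired containment.
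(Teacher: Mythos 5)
Your proof is correct: the paper does not reproduce a proof of this lemma but defers it to the cited reference \cite{hansen2003global}, and your argument (embedding each point $\m z\in\mathbfcal{S}$ as the degenerate box $[\m z,\m z]\subseteq\mathbfcal{S}$, then chaining Definition~\ref{def:interval_extension} with inclusion isotonicity from Definition~\ref{def:inclusion_isotonic}) is precisely the standard argument given there. No gaps.
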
  
\vspace{-0.1cm}
The proof of Lemma \ref{lem:interval_arithmetic_fundamental_thrm} can be obtained from \cite{hansen2003global}. In the above lemma, the range of $f(\cdot)$ for a given domain $\mathbfcal{S}$ is defined as $\mathrm{range}(f):= \left[\inf_{\m z \in\mathbfcal{S}} f(\m z),\sup_{\m z \in\mathbfcal{S}} f(\m z)\right]$. If we know the domain of $f(\cdot)$, then we simply refer the upper and lower bounds of the range of $f(\cdot)$ as $\sup f(\m z)$ and $\inf f(\m z)$---the same goes for expressing the upper and lower bounds of any interval-valued function. 
For the sake of simplicity, we write $f(\mathbfcal{S})$ to represent the range of $f(\cdot)$, that is, $f(\mathbfcal{S}) = \mathrm{range}(f)$. The fundamental theorem of arithmetic ensures that $f(\mathbfcal{S})\subseteq f^{I}(\mathbfcal{S})$ always holds for any valid $f^{I}(\cdot)$. 
An interval extension $f^I(\cdot)$ satisfying $f^{I}(\mathbfcal{S})= f(\mathbfcal{S})$ is said to be \textit{minimal} or \textit{exact}, and we say that its endpoints are \textit{sharp} \cite{moa2007interval,hansen2003global}. 
{ Readers are referred to \cite{hansen2003global,Moore2009} for more thorough basics on interval arithmetic, functions, and analysis.}
Next, we discuss the principle of solving global maximization problems based on IA.

\setlength{\textfloatsep}{5pt}
{\small \begin{algorithm}[t]
		\caption{\text{Interval-Based Algorithm for Unconstrained} \text{Global Maximization}}\label{alg:BnB-IA}
		\DontPrintSemicolon 
		\textbf{input:} $\mathbf{\Omega}$, $\epsilon_h$, $ \epsilon_{\Omega}$, $h(\cdot)$, $h^I(\cdot)$\;
		\textbf{initialize:} $\mathbfcal{S}_1 = \mathbf{\Omega}$, $C = \{\mathbfcal{S}_1\}$,  $\tilde{C} = \emptyset$\;
		\textbf{compute:} $\bar{\mathbfcal{S}} = \argmax_{\mathbfcal{S}_i\in C} \sup \left( h^I\hspace{-0.05cm}\left(\mathbfcal{S}_i\right) \right)$, 
		$u = \sup \hspace{-0.00cm}\left( h^I\hspace{-0.05cm}\hspace{-0.0cm}\left(\bar{\mathbfcal{S}}\right) \hspace{-0.00cm}\right)$, $l = h\left(\mathrm{mid}\left(\mathbfcal{S}_1\right)\right) \label{alg:l_1}$\;
		\While{$u-l > \epsilon_h$ {\bf and}   $\abs{\bar{\mathbfcal{S}}}> \epsilon_{\Omega}$\label{alg:first_loop}}{
			$	\{C,\bar{\mathbfcal{S}},l,u\}\leftarrow \text{OptBnB}\left({C,\bar{\mathbfcal{S}},l}\right)$
		}
		\If{$u-l > \epsilon_h$}{
			\ForEach{$\mathbfcal{S}_j\in C$ {\bf and} $\abs{\mathbfcal{S}_j}> \epsilon_{\Omega}$}{
				$\tilde{C}\leftarrow \tilde{C}\cup \{\mathbfcal{S}_j\}$\;
			}
		}
		\While{$u-l > \epsilon_h$ {\bf and}  $\tilde{C}\neq \emptyset$\label{alg:second_loop}}{
			$\barbelow{\mathbfcal{S}}\leftarrow \argmax_{\mathbfcal{S}_i\in \tilde{C}} \inf \left( h^I\hspace{-0.05cm}\left(\mathbfcal{S}_i\right) \right)$\;
			$	\{C,\bar{\mathbfcal{S}},l,u\}\leftarrow \text{OptBnB}\left({C,\barbelow{\mathbfcal{S}},l}\right)$, $\tilde{C}\leftarrow \emptyset$\;
			\ForEach{$\mathbfcal{S}_j\in C$ {\bf and} $\abs{\mathbfcal{S}_j}> \epsilon_{\Omega}$}{
				$\tilde{C}\leftarrow \tilde{C}\cup \{\mathbfcal{S}_j\}$\;
			}
		}
		
		\If{$u-l > \epsilon_h$}{
			\ForEach{$\mathbfcal{S}_j \in C$}{
				$l \leftarrow  \max\left(l, h\left(\inf\left(\mathbfcal{S}_j\right)\right), h\left(\sup\left(\mathbfcal{S}_j\right)\right)\right)\label{alg:l_2}$\;	
				\ForEach{$\mathbfcal{S}_k\in C$}{
					\If{$\sup \left(  h^I\hspace{-0.05cm}\left(\mathbfcal{S}_k \right)\right) < l$}{%
						$C\leftarrow C\setminus \{\mathbfcal{S}_k\}$\;
					}
				}
				\If{$u-l \leq \epsilon_h$}{
					\textbf{break}\;
				}
			}
		}
		\textbf{output:} $C$, $l$, $u$\;
	\end{algorithm}
}
\setlength{\floatsep}{5pt}
\vspace{-0.2cm}

{\small \begin{algorithm}[t]
		\caption{\textit{OptBnB}}\label{alg:opt_bnb}
		\DontPrintSemicolon
		\textbf{input:} $C$, $\mathbfcal{S}'$, $l$\;
		$C \leftarrow C\setminus \{\mathbfcal{S}'\}$, 
		$\mathbf{\Omega}' = \argmax_{\mathbf{\Omega}_i\in\mathbfcal{S}'} \abs{\mathbf{\Omega}_i}$\;
		$\mathbf{\Omega}'_l = \left[\inf \left(\mathbf{\Omega}'\right),\abs{\mathbf{\Omega}'}\right]$, $\mathbf{\Omega}'_r = \left[\abs{\mathbf{\Omega}'}, \sup \left(\mathbf{\Omega}'\right)\right]$\;
		$\mathbfcal{S}_l = \mathbf{\Omega}'_1\times\cdots\times \mathbf{\Omega}'_l\times\cdots\times\mathbf{\Omega}'_p$, $\mathbfcal{S}_r = \mathbf{\Omega}'_1\times\cdots\times \mathbf{\Omega}'_r\times\cdots\times\mathbf{\Omega}'_p$\;
		$C \leftarrow C\cup\{\mathbfcal{S}_l,\mathbfcal{S}_r\}$\;
		$\mathbfcal{S}_m = \argmax_{\mathbfcal{S}_i\in C} \inf \left( h^I\hspace{-0.05cm}\left(\mathbfcal{S}_i\right) \right)$\;
		$l \leftarrow  \max\left(l, h\left(\mathrm{mid}\left(\mathbfcal{S}_m\right)\right)\right)\label{alg:l_3}$\;
		\ForEach{$\mathbfcal{S}_i \in C$}{
			\If{$\sup \left(  h^I\hspace{-0.05cm}\left(\mathbfcal{S}_i \right)\right) < l$}{%
				$C\leftarrow C\setminus \{\mathbfcal{S}_i\}$\;
			}
		}
		$\mathbfcal{S}'' = \argmax_{\mathbfcal{S}_i\in C} \sup \left( h^I\hspace{-0.05cm}\left(\mathbfcal{S}_i\right) \right)$, $u = \sup \hspace{-0.00cm}\left( h^I\hspace{-0.05cm}\hspace{-0.0cm}\left(\mathbfcal{S}''\right) \hspace{-0.00cm}\right)$\;
		\textbf{output:} $C$, $\mathbfcal{S}''$, $l$, $u$\;
\end{algorithm}}

\vspace{-0.2cm}
\subsection{An Interval-Based BnB Algorithm for Global Maximization}\label{ssec:gmia}
We start this section by considering the following problem
\begin{align}
h := \max_{\m \omega \in \mathbf{\Omega}} \,h(\m \omega),\label{eq:max_basic_problem}
\end{align}  
where the function $h:\mbb{R}^p\rightarrow\mbb{R}$ denotes the objective function to be maximized inside a $p$-orthotope $\mathbf{\Omega}$ and $h$ denotes the objective function value. Since we do not have any constraint other than $\mathbf{\Omega}$, then Problem \eqref{eq:max_basic_problem} is categorized as \textit{unconstrained global maximization problem}. The above problem generalizes the way of computing the corresponding constants for each class of nonlinearity presented in Tab. \ref{tab:nonlinear_class}. 
{In this paper, the notion $\m \omega^*$ denotes any maximizer of \eqref{eq:max_basic_problem} and $h^* = h(\m \omega^*)$ denotes the optimal value of \eqref{eq:max_basic_problem}. For this problem to be meaningful, it is also assumed that $\m \omega^* \in \m \Omega$.}

{ The term \textit{interval-based algorithm} usually refers to any algorithm that utilizes IA for solving global optimization problem.
IA has been widely utilized over past decades for solving global optimization problems, either constrained or unconstrained---see \cite{van2010global,ichida1979interval,moa2007interval,hansen2003global,Moore2009,Hansen1980,skelboe1974computation,jaulin2001applied}.
Global maximization problems, in general, can be categorized into \textit{fathoming} and \textit{localization} problems \cite{moa2007interval}. The former focuses on finding the value of the global maximum where the latter concerns with identifying the location of the global maximum. Indeed, various interval-based algorithms for global optimization have been developed since several past decades---for some references, see \cite{van2010global,ichida1979interval,moa2007interval,hansen2003global,Moore2009,skelboe1974computation,Hansen1980,jaulin2001applied}. However, as pointed out in \cite{moa2007interval}, these algorithms suffer from several drawbacks: \textit{(i)} they combine the fathoming problem with localization problem, \textit{(ii)} they do not provide any insight on the optimality of the computed solutions, \textit{(iii)} for localization, they do not consider inner-approximation (finding hyperrectangles contained within a specified distance $\delta > 0$) and outer-approximation (finding hyperrectangles covering that distance).}


{
In addition to the aforementioned limitations, the currently available interval-based global optimization tools utilize sophisticated methods from point-based optimization with Newton's method \cite{Moore2009,hansen2003global,Hansen1980} being the most prevalent. Note that Newton's method requires the computation of Jacobian matrix as well as the Hessian (see \cite{Hansen1980}) which requires the objective function $h(\cdot)$ to be belonged to $\mathcal{C}^2$. To have $h(\cdot)\in \mathcal{C}^2$, then it is necessary that $\m f(\cdot)\in \mathcal{C}^3$, which potentially restricts the applicability of our method to parameterize NDS.
With that in mind, since it is assumed that $\m f(\cdot)\in \mathcal{C}^1$ as presented in Assumption \ref{asmp:1}, we decide to utilize a derivative-free interval-based global optimization algorithm developed via a branch-and-bound (BnB) framework. In particular, we adopt the Algorithm MS3 developed in \cite{moa2007interval}, which is an improved variation of the Moore-Skelboe algorithm \cite{skelboe1974computation,Moore1976} with stopping criteria specified by some thresholds---see Algorithm \ref{alg:BnB-IA}. Readers are referred to \cite{moa2007interval} for the detailed advantages of this algorithm relative to other contemporaries for solving global optimization problems. The following definition describes the concept of \textit{cover} \cite{van2004termination} that is pivotal for the algorithm.}
%

\vspace{-0.15cm}
\begin{mydef}\label{def:cover}
Let $\mathbf{\Omega}$ be a set. If $C = \{\mathbfcal{S}_i\}^N_{i = 1}$ is a sequence of $N$ nonempty subsets of $\mathbf{\Omega}$, then we say that $C$ is a \textit{cover} in $\mathbf{\Omega}$. Specifically, if the following holds
\begin{align*}
\mathbf{\Omega} = \bigcup_{i\in\mbb{I}(N)} \mathbfcal{S}_i,
\end{align*}  
then we say that $C$ is a \textit{cover} for $\mathbf{\Omega}$.
\end{mydef}   
\vspace{-0.15cm}
In the above definition, $N:=\mathrm{card}(C)$, which denotes the number of elements in $C$. 
The objective of the interval-based algorithm is to find the best (\textit{smallest}) interval $[l,u]$ containing $h^*$. If this interval is degenerate, then we have successfully found $h^*$. Otherwise in the case when $l<u$, we are assured that the optimal solution cannot be greater than $u$ and smaller than $l$. This can be achieved by iteratively decreasing the upper bound $u$ and increasing the lower bound $l$.
The reduction of $u$ can be performed by splitting a subset $\mathbfcal{S}\in C$ having the biggest upper bound of interval evaluation in $C$. Likewise, increasing $l$ can be done by splitting subset $\mathbfcal{S}\in C$ having the smallest lower bound in $C$.  To that end, we present an interval-based algorithm---given in Algorithm \ref{alg:BnB-IA}---for solving \eqref{eq:max_basic_problem}. 
In this algorithm, it is necessary to define two positive constants $\epsilon_h$ and $\epsilon_{\Omega}$: the constant $\epsilon_h$ is useful to bound the interval containing the optimal solution, whereas $\epsilon_{\Omega}$ is useful to provide a lower bound towards the width of all subsets $\mathbfcal{S}_i$ in the cover---if the width of a subset $\mathbfcal{S}_i$ is smaller than $\epsilon_{\Omega}$, then we assume that $\mathbfcal{S}_i$ cannot be split into two smaller subsets. The \textit{width} of a subset $\mathbfcal{S}_i$ is defined as 
$\abs{\mathbfcal{S}_i} := \max_{{\Omega}_j\in \mathbfcal{S}_i} \abs{{\Omega}_j}.$
%
The algorithm is said to be successful in computing the best solution when it holds that $u-l\leq \epsilon_h$---in this case, we say that $h^*$ is $\epsilon_h-$\textit{optimal}. Otherwise, $u-l$ is the smallest interval containing $h^*$ that can be obtained. 
{ After Algorithm \ref{alg:BnB-IA} terminates, we can use $u$---the upper bound---to compute the corresponding constants for function sets listed in Tab. \ref{tab:nonlinear_class}; see Section \ref{sec:numerical_tests}.}
 {\small \begin{algorithm}[t]
 		\caption{\text{Computing Lipschitz Constants $\gamma_{l_1}$ and $\gamma_{l_2}$}}\label{alg:lip_comp}
 		\DontPrintSemicolon
 		{
 		\textbf{input:} $\mathbf{\Omega}$, $\epsilon_h$, $\epsilon_{\Omega}$, $\m f(\cdot)$, \textit{case} \;
 		\Switch{\textit{case}}{
 			\Case{$1$}{
 				\textbf{define:} $h(\m x):=\sum_{i\in \mbb{I}(n)}\norm{\nabla_{\hspace{-0.05cm}x} f_i(\m{x})}_2^2$ and construct $h^I(\m x)$\;
 				\textbf{obtain:} $u$ and $l$ from Algorithm \ref{alg:BnB-IA} given $\mathbf{\Omega}$, $\epsilon_h$, $ \epsilon_{\Omega}$, $h(\cdot)$, $h^I(\cdot)$ \label{alg:step_lip_1}\;
 				\textbf{output:} $\gamma_{l_1} = \sqrt{u}$\;
 			}
 			\Case{$2$}{
 				\ForEach{$z \in\{a,b,c,d,e\}$ \label{alg3:step8}}{
 					\textbf{define:} $h_z(\m x):=\norm{\nabla_{\hspace{-0.05cm}x} f_z(\m{x})}_2^2$ and construct $h^I_z(\m x)$ as well as $\mathbf{\Omega}_z$\;
 					\textbf{obtain:} $u_z$ and and $l_z$ from Algorithm \ref{alg:BnB-IA} given $\mathbf{\Omega}_z$, $\epsilon_h$, $ \epsilon_{\Omega}$, $h_z(\cdot)$, $h^I_z(\cdot)$\label{alg:step_lip_2}\;
 				}
 				\textbf{output:} $\gamma_{l_2} = \sqrt{c_a u_a+c_b u_b+c_c u_c+c_d u_d+c_e u_e}$\;
 			}
 		}}
 \end{algorithm}}
 It is important to note that this algorithm is a modification of Algorithm MS3 proposed in \cite{moa2007interval} with two main differences. Firstly, Algorithm \ref{alg:BnB-IA} is specifically designed for solving global maximization problem \eqref{eq:max_basic_problem} and presented using rigorous mathematical notations for the ease of analysis and implementation, and secondly, we have added an \textit{explicit} additional step to ensure that the splitting of a subset $\mathbfcal{S}\in C$ is possible only when the width of $\mathbfcal{S}$ is greater than $\epsilon_{\Omega}$. {The next theorem---see Appendix \ref{appdx:B} for the proof---shows that Algorithm \ref{alg:BnB-IA} computes the best interval for $h^*$.}
 \vspace{-0.1cm}
\begin{theorem}\label{thm:IA_based_algorithm}
When Algorithm \ref{alg:BnB-IA} terminates, then 
\begin{enumerate}
	\item $h^*\in\left[l,u\right]$,  where $\left[l,u\right]$ is the best interval for $h^*$,
	\item $\m \omega^*\in C = \prod_{i\in\mbb{I}(N)}\mathbfcal{S}_i$ where $C$ is a cover in $\m\Omega$ at the end of the algorithm.
\end{enumerate}
\end{theorem}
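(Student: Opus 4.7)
The plan is to identify two invariants maintained by Algorithm \ref{alg:BnB-IA} (together with the auxiliary \textit{OptBnB} routine) throughout its execution, and then apply them at termination. First I would verify that $l \le h^*$ holds from initialization onward. Line \ref{alg:l_1} sets $l = h(\mathrm{mid}(\mathbfcal{S}_1))$, and every subsequent update (lines \ref{alg:l_3} and \ref{alg:l_2}) takes a maximum with expressions of the form $h(\mathrm{mid}(\mathbfcal{S}_m))$, $h(\inf(\mathbfcal{S}_j))$, or $h(\sup(\mathbfcal{S}_j))$, each being an evaluation of $h$ at a point of $\m\Omega$ and thus bounded above by $h^*$. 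Since $l$ is monotone nondecreasing and each candidate value is $\le h^*$, the invariant is preserved.

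The second invariant is that $\m\omega^* \in \bigcup_{\mathbfcal{S}\in C} \mathbfcal{S}$ at every step. Initially $C=\{\m\Omega\}$ and this holds trivially. When \textit{OptBnB} splits a subset $\mathbfcal{S}'$ along its widest coordinate, the two pieces $\mathbfcal{S}_l,\mathbfcal{S}_r$ satisfy $\mathbfcal{S}_l \cup \mathbfcal{S}_r = \mathbfcal{S}'$, so the union over $C$ is unchanged. The only other modification of $C$ is the deletion of a subset $\mathbfcal{S}_k$ whenever $\sup(h^I(\mathbfcal{S}_k)) < l$. By Lemma \ref{lem:interval_arithmetic_fundamental_thrm}, $h(\mathbfcal{S}_k) \subseteq h^I(\mathbfcal{S}_k)$, so for any $\m\omega \in \mathbfcal{S}_k$ we have $h(\m\omega) \le \sup(h^I(\mathbfcal{S}_k)) < l \le h^*$, which forces $\m\omega \ne \m\omega^*$. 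Hence deletion never discards the maximizer, and the invariant survives.

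Combining the two invariants at termination immediately yields both conclusions of the theorem. Part (2) is a restatement of invariant two: $\m\omega^* \in C = \bigcup_i \mathbfcal{S}_i$. For part (1), pick $\bar{\mathbfcal{S}} \in C$ containing $\m\omega^*$; then $h^* = h(\m\omega^*) \in h(\bar{\mathbfcal{S}}) \subseteq h^I(\bar{\mathbfcal{S}})$, so $h^* \le \sup(h^I(\bar{\mathbfcal{S}})) \le \max_{\mathbfcal{S}_i\in C}\sup(h^I(\mathbfcal{S}_i)) = u$, and together with invariant one this gives $h^* \in [l,u]$. The ``best interval'' clause follows from the exit logic: the routine passes through three successive loops (bisection driven by $\sup(h^I)$, bisection driven by $\inf(h^I)$, and endpoint evaluation on remaining subsets), and only terminates when either $u-l\le\epsilon_h$ (the target tolerance is met) or every remaining subset has width at most $\epsilon_\Omega$, so no further admissible split is possible; in either case $[l,u]$ is the tightest interval the algorithm can return under its input parameters.

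The main obstacle will be the careful bookkeeping through the three phases, since \textit{OptBnB} modifies $C$, reselects the sup\hyp maximizing subset, and prunes inside the loop, and each of these micro-operations must be shown to preserve both invariants. Particular care is needed at the handoffs between phases—from the $\sup$-driven splitting to the $\inf$-driven splitting populated through $\tilde{C}$, and then to the endpoint-evaluation pass—to confirm that (a) no subset containing $\m\omega^*$ is ever dropped, and (b) when the algorithm exits with $u-l>\epsilon_h$, no remaining admissible split or endpoint evaluation could further tighten $[l,u]$, which is what formalizes the minimality claim relative to the resolution parameter $\epsilon_\Omega$.
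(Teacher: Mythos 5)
Your proposal is correct and follows essentially the same route as the paper's proof in Appendix B: maintain the invariants that the cover always contains a maximizer (pruning is safe because, by the fundamental theorem of interval arithmetic, any discarded subset satisfies $h(\m\omega)\le\sup(h^I(\mathbfcal{S}_k))<l\le h^*$) and that $l\le h^*\le u$, then invoke the termination conditions for the ``best interval'' claim. The only cosmetic difference is that you justify $l\le h^*$ directly from the fact that every update of $l$ is a point evaluation of $h$ on $\m\Omega$, whereas the paper packages the two invariants as standalone propositions and additionally spells out, via inclusion isotonicity, why each bisection weakly improves the bounds.
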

\vspace{-0.1cm}


Since the lower bound $l$ is not required for computing NDS' parameters, basically one can stop Algorithm \ref{alg:BnB-IA} when the first loop in Step \ref{alg:first_loop} terminates. However, by obtaining the best lower bound $l$, it gives us a sense on the location of $h^*$ as well as an optimality certificate through the definition of $\epsilon_h-$optimal. 

\vspace{-0.2cm}
\begin{myrem}\label{rem:improve_l}
	{
	The lower bound $l$ in Algorithm \ref{alg:BnB-IA} is determined via the evaluation of objective function using the midpoint of the corresponding subset, say, $\mathbfcal{S}_i$. As discussed in \cite{moa2007interval}, one can also use the evaluation of either corners of $\mathbfcal{S}_i$. 
	Other than these approaches, deterministic point-based techniques can also be employed such as gradient ascent, Newton method, conjugate gradient method, and interior-point method, as well as stochastic ones such as quasi-random algorithms and heuristics. }
\end{myrem}
\vspace{-0.2cm}

\vspace{-0.2cm}

\section{Numerical Examples}\label{sec:numerical_tests}
In this section we showcase the proposed methodologies and algorithms for parameterizing some NDS that include traffic networks, synchronous generator, and the motion of a moving object into the following function sets: Lipschitz continuous, OSL, and QIB. { All simulations are performed using {MATLAB} R2020a running on a 64-bit Windows 10 with 3.4GHz Intel\textsuperscript{R} Core\textsuperscript{TM} i7-6700 CPU and 16 GB of RAM.
YALMIP's \cite{Lofberg2004} optimization package together with MOSEK's \cite{andersen2000mosek} SDP solver
are used to solve any semidefinite programs. All MATLAB codes associated in this section are available on GitHub \cite{Nugroho2020github}.}

\setlength{\textfloatsep}{10pt}
\begin{table*}
	\scriptsize
	\vspace{-0.1cm}
	\centering 
	\caption{{Numerical results for Lipschitz constants computation on traffic networks. Notations:  $\gamma_{l_1}$ and  $\gamma_{l_2}$ are Lipschitz constants, $\barbelow{\gamma}_{l_1}$ and $\barbelow{\gamma}_{l_2}$ are the corresponding lower bounds, \textit{gap} refers to {optimality gap}, which essentially computes the difference between $u$ and $l$, $\epsilon_h$-\textit{optimal} is checked when the solution is $\epsilon_h-$optimal, and $\Delta t$ is the total computation time.}}
	\label{tab:traffic_simul_data}
	\vspace{-0.2cm}
	\renewcommand{\arraystretch}{1.5}{
	\begin{tabular}{|c|c|c|c|c|c|c|c|c|c|c|c|}
		\hline
		\multirow{2}{*}{$n$} & \multicolumn{5}{c|}{Case 1} & \multicolumn{5}{c|}{Case 2} & \multicolumn{1}{c|}{Analytical \cite{nugroho2018journal}}\\ \cline{2-12} 
		& $\gamma_{l_1}$  &  $\barbelow{\gamma}_{l_1}$ & \textit{gap} $(\times 10^{-5})$  & \textit{$\epsilon_h$-optimal?} & $\Delta t$ (s) &   $\gamma_{l_2}$  &  $\barbelow{\gamma}_{l_2}$ & \textit{gap} $(\times 10^{-5})$  & \textit{$\epsilon_h$-optimal?} & $\Delta t$ (s) & $\gamma_l$  \\ \hline\hline
		$31$	& $0.4579$ &   $0.4578$ &  $9.8531$ & \cmark & $  1.24$  & $0.4579$ & $0.4552$ & $7.6168$ &  \cmark & $0.0330$ & $0.5069$ \\ \hline
		$61$	& $0.6445$ &   $0.6444$ &  $9.9494$ & \cmark & $  4.81$  & $0.6445$ & $0.6408$ & $7.6168$ &  \cmark & $0.0345$ & $0.7141$ \\ \hline
		$91$	& $0.7881$ &   $0.7881$ &  $9.9974$ & \cmark & $ 11.13$  & $0.7881$ & $0.7836$ & $7.6168$ &  \cmark & $0.0266$ & $0.8735$ \\ \hline
		$121$	& $0.9093$ &   $0.9093$ &  $9.9975$ & \cmark & $ 20.55$  & $0.9093$ & $0.9041$ & $7.6168$ &  \cmark & $0.0258$ & $1.0080$ \\ \hline
		$151$	& $1.0162$ &   $1.0161$ &  $9.9736$ & \cmark & $ 33.46$  & $1.0162$ & $1.0103$ & $7.6168$ &  \cmark & $0.0249$ & $1.1265$ \\ \hline
		$181$	& $1.1128$ &   $1.1128$ &  $9.9976$ & \cmark & $ 49.47$  & $1.1128$ & $1.1064$ & $7.6168$ &  \cmark & $0.0188$ & $1.2338$ \\ \hline
		$211$	& $1.2017$ &   $1.2017$ &  $9.9976$ & \cmark & $ 68.99$  & $1.2017$ & $1.1948$ & $7.6168$ &  \cmark & $0.0181$ & $1.3324$ \\ \hline
		$241$	& $1.2844$ &   $1.2844$ &  $9.9977$ & \cmark & $ 93.13$  & $1.2844$ & $1.2770$ & $7.6168$ &  \cmark & $0.0277$ & $1.4242$ \\ \hline
		$271$	& $1.3622$ &   $1.3621$ &  $9.9857$ & \cmark & $120.51$  & $1.3622$ & $1.3543$ & $7.6168$ &  \cmark & $0.0232$ & $1.5104$ \\ \hline
		$301$	& $1.4357$ &   $1.4357$ &  $9.9977$ & \cmark & $151.54$  & $1.4357$ & $1.4274$ & $7.6168$ &  \cmark & $0.0213$ & $1.5919$ \\ \hline
	\end{tabular}}
	\vspace{-0.1cm}
\end{table*}
\setlength{\floatsep}{10pt}

\setlength{\textfloatsep}{10pt}
\begin{table*}
	\scriptsize
	\vspace{-0.1cm}
	\centering 
	\caption{{Lipschitz constants of highway traffic network computed with different methods: interval-based algorithm (IA), point-based global search (GS), and point-based quasi-random search with Sobol (S) and Halton (H) sequences.}}
	\label{tab:traffic_simul_data_comparison}
	\vspace{-0.2cm}
	\renewcommand{\arraystretch}{1.5}{
		\begin{tabular}{|c|c|c|c|c|c|c|c|c|c|c|c|}
			\hline
			\multirow{2}{*}{$n$} & \multicolumn{5}{c|}{Case 1} & \multicolumn{5}{c|}{Case 2} & \multicolumn{1}{c|}{$\max_{\m x\in\mathbf{\Omega}} \norm{\mathrm{D}_x \m f(\m x)}_2$}\\ \cline{2-12} 
			& $\gamma_{l_1}$(IA)  &  $\barbelow{\gamma}_{l_1}$(IA) & $\gamma_{l_1}$(GS)  & $\gamma_{l_1}$(S) & $\gamma_{l_1}$(H) &   $\gamma_{l_2}$(IA)  &  $\barbelow{\gamma}_{l_2}$(IA) & $\gamma_{l_2}$(GS)  & $\gamma_{l_2}$(S) & $\gamma_{l_2}$(H) &  $\gamma_l\,$(GS)  \\ \hline\hline
	$31$	& $0.4579$ &   $0.4578$ &  $0.4579$ & $0.3426$ & $0.3396$  & $0.4579$ & $0.4552$ & $0.4579$ &  $0.4523$ & $0.4497$  &  $0.1328$  \\ \hline
$61$	& $0.6445$ &   $0.6444$ &  $0.6445$ & $0.4556$ & $0.4435$  & $0.6445$ & $0.6408$ & $0.6445$ &  $0.6365$ & $0.6328$  &  $0.1329$  \\ \hline
$91$	& $0.7881$ &   $0.7881$ &  $0.7881$ & $0.5314$ & $0.5196$  & $0.7881$ & $0.7836$ & $0.7881$ &  $0.7783$ & $0.7738$  &  $0.1330$  \\ \hline
$121$	& $0.9093$ &   $0.9093$ &  $0.9093$ & $0.6054$ & $0.5832$  & $0.9093$ & $0.9041$ & $0.9093$ &  $0.8980$ & $0.8927$  &  $0.1330$  \\ \hline
$151$	& $1.0162$ &   $1.0161$ &  $1.0162$ & $0.6686$ & $0.6373$  & $1.0162$ & $1.0103$ & $1.0162$ &  $1.0035$ & $0.9976$  &  $0.1330$  \\ \hline
$181$	& $1.1128$ &   $1.1128$ &  $1.1128$ & $0.7242$ & $0.6964$  & $1.1128$ & $1.1064$ & $1.1128$ &  $1.0990$ & $1.0925$  &  $0.1330$  \\ \hline
$211$	& $1.2017$ &   $1.2017$ &  $1.2017$ & $0.7791$ & $0.7524$  & $1.2017$ & $1.1948$ & $1.2017$ &  $1.1867$ & $1.1797$  &  $0.1330$  \\ \hline
$241$	& $1.2844$ &   $1.2844$ &  $1.2844$ & $0.8239$ & $0.8086$  & $1.2844$ & $1.2770$ & $1.2844$ &  $1.2684$ & $1.2610$  &  $0.1330$  \\ \hline
$271$	& $1.3622$ &   $1.3621$ &  $1.3622$ & $0.8713$ & $0.8540$  & $1.3622$ & $1.3543$ & $1.3622$ &  $1.3452$ & $1.3373$  &  $0.1330$  \\ \hline
$301$	& $1.4357$ &   $1.4357$ &  $1.4357$ & $0.9083$ & $0.9019$  & $1.4357$ & $1.4274$ & $1.4357$ &  $1.4178$ & $1.4095$  &  $0.1330$  \\ \hline 
	\end{tabular}}
	\vspace{-0.4cm}
\end{table*}
\setlength{\floatsep}{10pt}

\begin{figure}[t]
	\vspace{-0.05cm}
	\centering 	\includegraphics[scale=0.29]{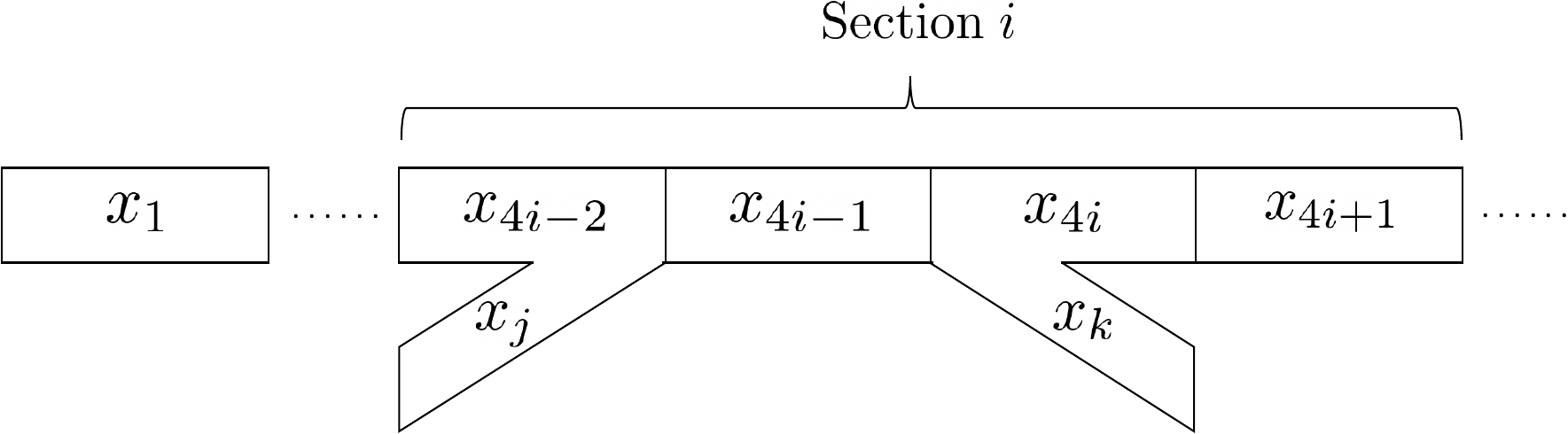}
	\vspace{-0.25cm}
	\caption{{The structure of the highway for traffic networks parameterization. Each Section $i$ for $i = 1,2,\hdots,s$ consists of four mainline segments, one on-ramp, and one off-ramp.}}
	\label{fig:traffic_structure}
	\vspace*{-0.05cm}
\end{figure}

\begin{figure}[t]
	\vspace{-0.1cm}
	\centering 	\includegraphics[scale=0.80]{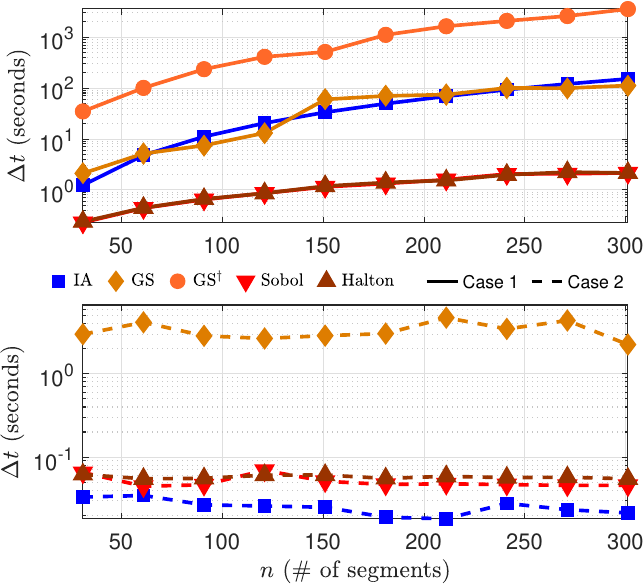}
	\vspace{-0.2cm}
	\caption{{Total computation time on calculating Lipschitz constants for the two cases with interval-based algorithm (IA), MATLAB's global search (GS), and stochastic, quasi-random search with Sobol and Halton sequences. The results from MATLAB's global search indicated by GS$^{\dagger}$ are obtained from using the 2-norm of Jacobian matrix \eqref{eq:lip_jacobian}.}}
	\label{fig:lip_traffic_comp_time}
	\vspace*{-0.2cm}
\end{figure}

\vspace{-0.2cm}
\subsection{Lipschitz Parameterization for Traffic Networks}\label{ssec:trans}
  { The dynamic model of traffic networks, particularly on a stretched highway consisting of on- and off-ramps, can be modeled by partitioning it into several segments of equal length $l$ such that the dynamics of each highway segment can be expressed in the form of ordinary differential equations \cite{nugroho2018}
	\begin{align*}
		\dot{\rho}_i (t)  
		&=  \frac{1}{l}\left(\sum_{j\in \mc{Q}^{\mr{in}}_i} q_{j}(t)-\sum_{j\in \mc{Q}^{\mr{out}}_i} q_{j} (t)\right),
	\end{align*}
	where $\mc{Q}^{\mr{in}}_i$ and $\mc{Q}^{\mr{out}}_i$ represent the sets of all incoming and outgoing traffic flows $q_{j}$ for segment $i$, and ${\rho}_i$ represents traffic density on that segment.} The above model assumes that the highway is uncongested---in this situation, the traffic density satisfies $\rho_i\in[0,\rho_c]$ where $\rho_c$ denotes the critical density. 
By employing Greenshield's fundamental diagram to represent the relation between traffic density and traffic flow \cite{greenshields1935study}, the nonlinearity of the overall dynamic model for this particular system ({ supposing that no segment is connected to both on- and off-ramps}) can be distinguished into five different types of nonlinearities, detailed as follows \cite{nugroho2018,nugroho2018journal}
\begin{subequations}\label{eq:traffic_nonlinearities}
	\begin{enumerate}[label=$\alph*$)]
		\item $f_a(\m x) = \delta x^2_i$ \hfill\refstepcounter{equation}(\theequation)
		\item $f_b(\m x) = \delta\left(x^2_i-x^2_{i-1}-x^2_j\right)$ \hfill\refstepcounter{equation}(\theequation)
		\item $f_c(\m x) = \delta\left(x^2_i-x^2_{i-1}\right)$ \hfill\refstepcounter{equation}(\theequation)\label{eq:traffic_nonlinearities_c}
		\item $f_d(\m x) = \delta\left(x^2_i-x^2_{i-1}+\alpha_jx^2_j\right)$ \hfill\refstepcounter{equation}(\theequation)
		\item $f_e(\m x) = -\delta \alpha_ix^2_i$, \hfill\refstepcounter{equation}(\theequation)
	\end{enumerate}	
\end{subequations}
where $x_i := \rho_i$ and $\delta := \frac{v_f}{l\rho_m}$; $v_f$ and $\rho_m$ denote the free flow speed and maximum density (maximum density $\rho_m$ is equal to $2\rho_c$). In this case, the following values, which are adapted from \cite{Contreras2016}, are considered: $v_f = 31.3$ m/s, $\rho_m = 0.053$ vehicles/m, and $l = 500$ m. The splitting ratio for off-ramps are chosen to be $\alpha_i = 0.5$.   {The highway consists of a certain structure, depicted in Fig. \ref{fig:traffic_structure}, such that it has $s$ sections and is containing $n$ number of segments in total (including ramps) where $n = 6s+1$. } 

\begin{figure*}[h]
	\centering
	  
	\hspace{-0.2cm}
	\subfloat[]{\includegraphics[scale=0.465]{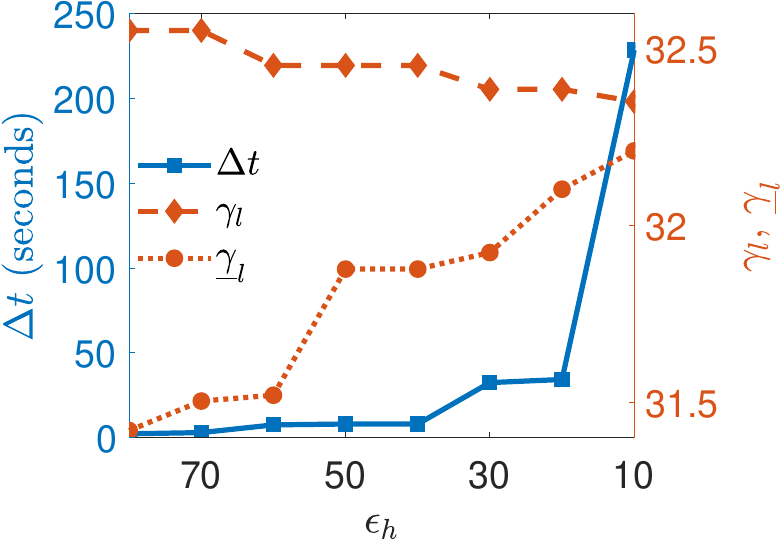}\label{Fig:gen_1a}}{}\hspace{-0.1em}
	\subfloat[]{\includegraphics[scale=0.465]{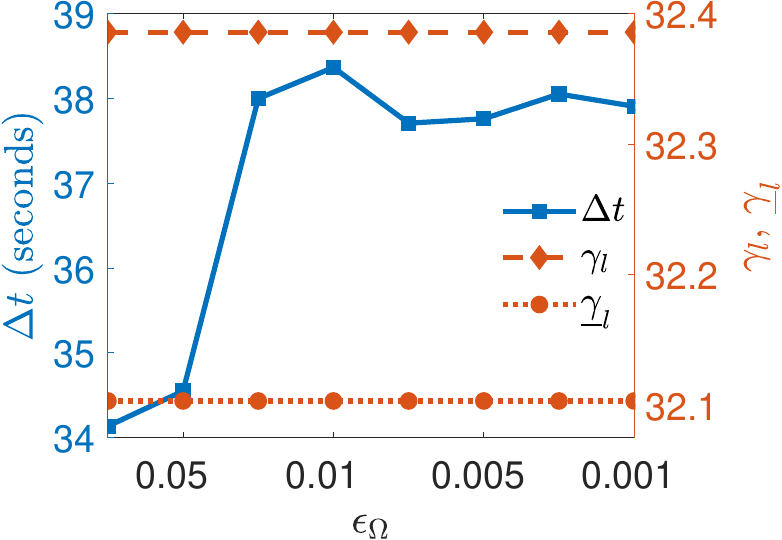}\label{Fig:gen_1b}}{}\hspace{-0.1em}
	\subfloat[]{\includegraphics[scale=0.465]{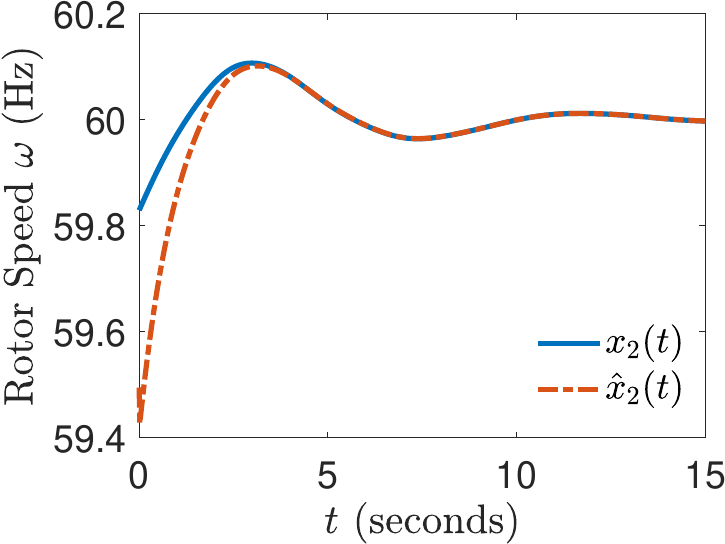}\label{Fig:gen_1c}}{\vspace{-0.1cm}}\hspace{-0.0em}
	\vspace{-0.05cm}
	\caption{{Numerical test results for synchronous generator: \textit{(a)} computational time $(\Delta t)$ and Lipschitz constant $(\gamma)$ for various $\epsilon_h$, \textit{(b)} the corresponding results for various $\epsilon_{\Omega}$, and \textit{(c)} the trajectories of estimated and actual generator rotor speed $(\omega)$ using a Lipschitz-based observer \cite{Phanomchoeng2010} with $\gamma_l = 32.3514$.}}
	\label{Fig:gen_1}	
	\vspace{-0.4cm}
\end{figure*} 

{
Herein, two cases are considered which depend on the way Lipschitz constants are computed: \textit{(i)} \textit{Case 1} which uses \eqref{eq:gamma_Lipschitz} provided in Theorem \ref{prs:Lipschitz_less_cons}, and \textit{(ii)} \textit{Case 2} which uses \eqref{eq:gamma_Lipschitz_2} provided in Corollary \ref{cor:Lipschitz}.
Specific for Case 2, firstly,
the corresponding set $\m \Omega$ is reduced so that the variables considered in the computation of $\max_{\m{x}\in \mathbf{\Omega}}\norm{\nabla_{\hspace{-0.05cm}x} f_z(\m{x})}_2^2$ for each $z\in \{a,b,c,d,e\}$ in \eqref{eq:traffic_nonlinearities} are only the ones that appear in the expression of \eqref{eq:traffic_nonlinearities}. Secondly, we only solve five different maximization problems corresponding to each type of nonlinearity in \eqref{eq:traffic_nonlinearities}. Both of these steps apply the approaches discussed in Section \ref{ssec:lip_qib_relation}. The expressions of $\norm{\nabla_{\hspace{-0.05cm}x} f_z(\m{x})}_2^2$ for each $z$ are listed as follows
\begin{subequations}\label{eq:traffic_nonlinearities_derivatives}
	\begin{enumerate}[label=$\alph*$)]
		\item $\norm{\nabla_{\hspace{-0.05cm}x} f_a(\m{x})}_2^2 = 4\delta^2 x^2_i$ \hfill\refstepcounter{equation}(\theequation)
		\item $\norm{\nabla_{\hspace{-0.05cm}x} f_b(\m{x})}_2^2 = 4\delta^2\left(x^2_i+x^2_{i-1}+x^2_j\right)$ \hfill\refstepcounter{equation}(\theequation)
		\item $\norm{\nabla_{\hspace{-0.05cm}x} f_c(\m{x})}_2^2 = 4\delta^2\left(x^2_i+x^2_{i-1}\right)$ \hfill\refstepcounter{equation}(\theequation)\label{eq:traffic_nonlinearities_c_der}
		\item $\norm{\nabla_{\hspace{-0.05cm}x} f_d(\m{x})}_2^2 = 4\delta^2\left(x^2_i+x^2_{i-1}+\alpha_j^2x^2_j\right)$ \hfill\refstepcounter{equation}(\theequation)
		\item $\norm{\nabla_{\hspace{-0.05cm}x} f_e(\m{x})}_2^2 = 4\delta^2 \alpha_i^2x^2_i$. \hfill\refstepcounter{equation}(\theequation)
	\end{enumerate}	
\end{subequations}
  {Due to the particular structure of the traffic network illustrated in Fig. \ref{fig:traffic_structure}, then for every $z\in \{a,b,c,d,e\}$, there exist $c_z\in \mbb{N}$ highway segments which nonlinearities are of type $z$.} 
Algorithm \ref{alg:lip_comp} presents the steps to compute Lipschitz constants for both cases---note that $\gamma_{l_1}$ corresponds to Case 1 whereas $\gamma_{l_2}$ corresponds to Case 2.
In this test, we compare different number of total segments $n = 31,61,\hdots,301$. 
Constants $\epsilon_h$ and $\epsilon_{\Omega}$ for Algorithm \ref{alg:lip_comp} are set to be $1\times 10^{-4}$ and $1\times 10^{-7}$ respectively whereas the region of interest is set to be $\m \Omega = [0,\rho_c]^n$ since the traffic is assumed to be uncongested.} 

{
  
After running Algorithm \ref{alg:lip_comp} using the aforementioned settings, we obtain results that are detailed in Tab. \ref{tab:traffic_simul_data}. All of the computed Lipschitz constants $\gamma_{l_1}$ and $\gamma_{l_2}$ are $\epsilon_h-$optimal, since the gap between $u$ and $l$ are all less than the predefined $\epsilon_h$. 
Interestingly, notice that the values of $\gamma_{l_1}$ and $\gamma_{l_2}$ are the same for both cases. This demonstrates that the formulations posed in \eqref{eq:gamma_Lipschitz} and \eqref{eq:gamma_Lipschitz_2} can indeed return similar Lipschitz constants with a fine precision. For both cases, the resulting Lipschitz constants also considerably smaller than analytical results, which can be advantageous in observer/controller design.
It is worth noting that, despite giving the same constants, Case 1 requires a significantly higher computation time than Case 2. This observation is expected since in Case 2, only five small optimization problems are solved and as such, the problem's complexity does \textit{not} depend on the dimension of $\m f(\cdot)$.    

Next, we compare the interval-based algorithm (Algorithm \ref{alg:lip_comp}) with stochastic point-based algorithm based on quasi-random \textit{low discrepancy sequences} (LDS). This approach is utilized in \cite{Nugroho2019} for Lipschitz constant approximation of synchronous generator model. We apply Algorithm 1 from \cite{Nugroho2019} with $10^4$ points from Sobol and Halton sequences generated through MATLAB's \texttt{sobolset} and \texttt{haltonset} functions. 
In addition, we also implement MATLAB's global search function \texttt{GlobalSearch} along with \texttt{fmincon} to find Lipschitz constants. 
The results of this comparison are provided in Tab. \ref{tab:traffic_simul_data_comparison} and Fig. \ref{fig:lip_traffic_comp_time}, where we also compare the results from using \eqref{eq:lip_jacobian} with MATLAB's global search.
It can be seen from this table that, for both cases, the Lipschitz constants computed via MATLAB's global search match with the ones from interval-based algorithm, while in contrast, the results from quasi-random search unsurprisingly are considerably lower for Case 1 but not so much for Case 2 (this numerically confirms that \eqref{eq:gamma_Lipschitz} is indeed less conservative than \eqref{eq:gamma_Lipschitz_2}). Notice the increasing gaps of Lipschitz constants computed from utilizing interval-based algorithm and quasi-random search in Case 1 as $n$ increases, which in turn suggests that the Lipshitz under-approximations computed via quasi-random search are getting worse as the problem's dimension increases. Lipschitz constant computation using \eqref{eq:lip_jacobian} with MATLAB's global search returns values that are relatively constant, which is $0.1330$. Despite giving the smallest Lipschitz constants, it is difficult to justify whether this value satisfies \eqref{eq:locally_Lipschitz} since MATLAB's global search does not guarantee optimality.
On the matter of computational time however (see Fig. \ref{fig:lip_traffic_comp_time}), particularly for Case 1, both interval-based algorithm and MATLAB's global search require much higher computation time compared to using LDS, which are much more scalable considering that $10^4$ points from the LDS are used. Nonetheless, for Case 2, our interval-based algorithm outperforms the other methods with the lowest computation time whereas MATLAB's global search requires much higher computation time. Note that MATLAB's global search with \eqref{eq:lip_jacobian} is the \textit{least} scalable compared to other methods in both cases. }


\begin{figure*}[ht]
	\centering
	\vspace{-0.2cm}
	\hspace{-0.62cm}
	\subfloat[]{\includegraphics[scale=0.47]{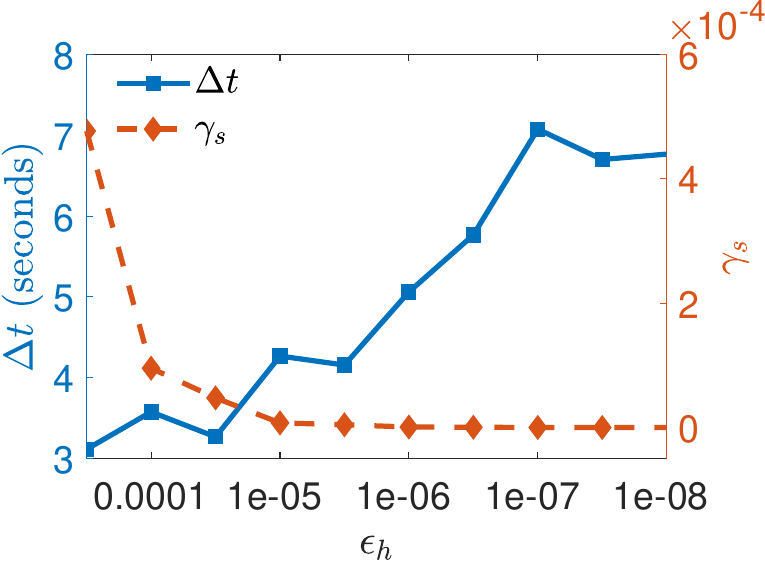}\label{Fig:mov_obj_1a}}{}\hspace{-0.1em}
	\subfloat[]{\includegraphics[scale=0.47]{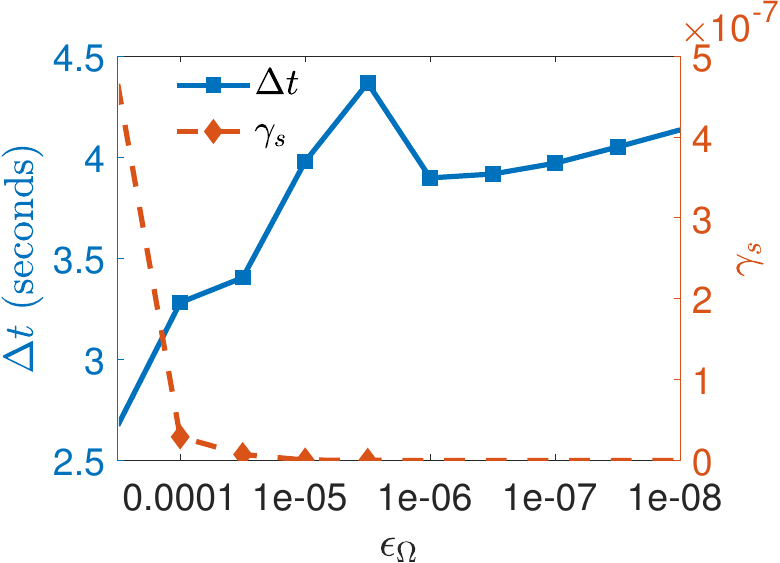}\label{Fig:mov_obj_1b}}{}\hspace{-0.1em}
	\subfloat[]{\includegraphics[scale=0.47]{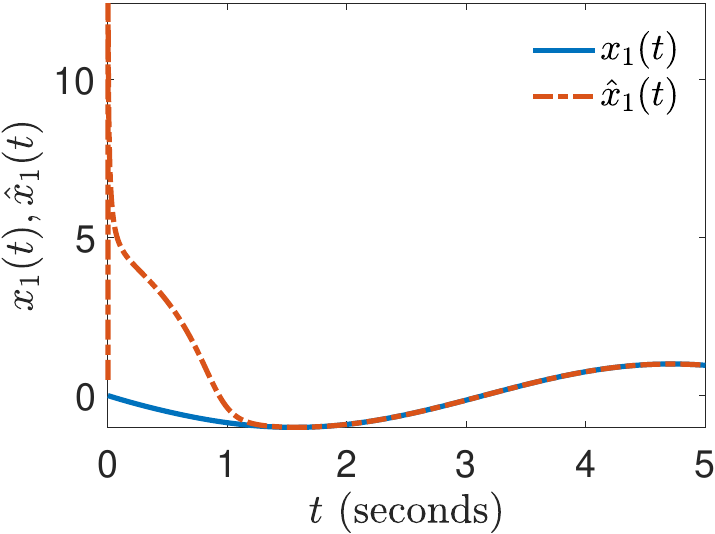}\label{Fig:mov_obj_1c}}{\vspace{-0.1cm}}\hspace{-0.0em}
	\vspace{-0.02cm}
	\caption{Numerical test results for dynamics of a moving object: \textit{(a)} computational time $(\Delta t)$ and OSL constant $(\gamma_s)$ for various $\epsilon_h$, \textit{(b)} the corresponding results for various $\epsilon_{\Omega}$, and \textit{(c)} the trajectories of actual $x_1(t)$ and estimated $\hat{x}_1(t)$ state with $\gamma_s = 0$, $\gamma_{q1} = 25015$, and $\gamma_{q2} = -9999.89$.}
	\label{Fig:mov_obj}	
	\vspace{-0.5cm}
\end{figure*}

\setlength{\textfloatsep}{10pt}
\begin{table}[t]
	\centering
	\vspace{-0.1cm}
	\caption{{Comparison of Lipschitz constants for synchronous generator model computed via interval-based algorithm (IA), point-based global search (GS), and point-based quasi-random search with Sobol (S) and Halton (H) sequences.}}\label{tab:gen_lipschitz_comparison}
	\vspace{-0.2cm}
	\renewcommand{\arraystretch}{1.45}{
		\begin{threeparttable}
		\begin{tabular}{|c|c|c|c|c|c|}
			\hline
			Method	& IA & GS& GS\tnote{$\dagger$} & S & H \\ \hline \hline
			${\gamma}_{l}$	& $32.3514$ & $32.3360$ & $32.3356$ & $32.1041$ & $31.9437$ \\ \hline
			$  \barbelow{\gamma}_{l}$	& $32.2115$ & - & - & - & -\\ \hline
	\end{tabular}
\begin{tablenotes}
	\item[$\dagger$] the Lipschitz constant is computed using \eqref{eq:lip_jacobian}.
\end{tablenotes}
\end{threeparttable}
}
	\vspace{-0.2cm}
\end{table}
\setlength{\floatsep}{10pt}

\vspace{-0.2cm}
\subsection{Lipschitz Parameterization for Synchronous Generator}\label{ssec:gen}
  {The synchronous generator model considered in this example is taken from \cite{Nugroho2019,Nugroho2020DSE}, which is a fourth-order differential equation where the nonlinear dynamics are represented by the function $\m f :\mathbb{R}^4\times \mathbb{R}^4\rightarrow \mathbb{R}^4$ detailed the following form
	\begingroup
	\allowdisplaybreaks
	\vspace{-0.2cm}
	\begin{subequations}{\label{eq:state_space_gen_param}}
		\begin{align}\
			\hspace{-0.0cm}f_1(\m x, \m u) &= -\alpha_1\label{eq:f_1}\\
			\hspace{-0.0cm}f_2(\m x, \m u) &= \alpha_3x_4u_4\cos x_1 - \alpha_3x_3u_4\sin x_1-\alpha_3x_4u_3\sin x_1 \nonumber\\ &\,\quad- \alpha_3x_3u_3\cos x_1 +\alpha_4u_3u_4\cos 2x_1 \nonumber\\ &\,\quad+ \tfrac{1}{2}\alpha_4\left(u_4^2-u_3^2\right)\sin 2x_1 + \alpha_6 \label{eq:f_2}\\
			\hspace{-0.0cm}f_3(\m x, \m u) &= \alpha_8u_4\cos x_1 - \alpha_8u_3\sin x_1\label{eq:f_3}\\
			\hspace{-0.0cm}f_4(\m x, \m u) &= \alpha_{10}u_3\cos x_1 + \alpha_{10}u_4\sin x_1,\label{eq:f_4} 
		\end{align}
	\end{subequations} 
	\endgroup
	where $\alpha_1$, $\alpha_3$, $\alpha_4$, $\alpha_6$, $\alpha_8$, and $\alpha_{10}$ are constants, $\m x$ is the state vector, and $\m u$ is the input vector such that
	\begin{align*}
		&\boldsymbol{x} = \bmat{x_1 \quad x_2 \quad x_3 \quad x_4}^\top = \bmat{\delta \quad \omega \quad e'_{\mathrm{q}} \quad e'_{\mathrm{d}}}^\top \\
		&\boldsymbol{u} = \bmat{u_1 \quad u_2 \quad u_3 \quad u_4}^\top = \bmat{T_{\mathrm{m}} \quad E_{\mathrm{fd}} \quad i_{\mathrm{R}} \quad i_{\mathrm{I}}}^\top,
	\end{align*}
	where 
	$\delta$ is the rotor angle,
	$\omega$ is the rotor speed, and $e'_{\mathrm{q}}$ and $e'_{\mathrm{d}}$ are the transient voltage along $\mathrm{q}$ and $\mathrm{d}$ axes, 
	$T_{\mathrm{m}}$ is the mechanical torque, $E_{\mathrm{fd}}$ is the internal field voltage, $i_{\mathrm{R}}$ and $i_{\mathrm{I}}$ are the real and imaginary currents of the generator \cite{kundur1994power}. }
{In this case we consider a 16-machine, 68-bus system that is extracted from the PST toolbox \cite{chow1992toolbox} by considering Generator 16 in the  network, in which the corresponding parameters are obtained from \cite{chow1992toolbox}.}
The generator's dynamic response are generated from applying a three-phase fault at bus $6$ of line $6-11$ where each generator is using a transient model with IEEE Type DC1 excitation system and
a simplified turbine-governor system \cite{qi2018comparing}.
{
To obtain lower and upper bounds on $\m x$ and $\m u$ (which are needed to construct $\m \Omega$), their maxima and minima are measured from the dynamic response within a $10$ seconds time frame. 
We use the expression from \eqref{eq:gamma_Lipschitz} to compute the corresponding Lipschitz constant. As $\m f(\cdot)$ in \eqref{eq:state_space_gen_param} contains sinusoidal functions, then we implement the $n$-th degree approximations of interval extensions of $\sin(\cdot)$ and $\cos(\cdot)$ (denoted by $\sin^I_n(\cdot)$ and $\cos^I_n(\cdot)$ respectively) from \cite{daumas2009verified}, which are specified as follows
\vspace{-0.01cm}
\begin{align*}
\sin^I_n(\omega) \hspace{-0.05cm}&=\hspace{-0.05cm} \begin{cases}
 [\barbelow{s}(\barbelow{\omega},n),\bar{s}(\bar{\omega},n)], \;\qquad\quad\;\;\text{if}\;\omega\subseteq\left[-\tfrac{\barbelow{\pi}(n)}{2},\tfrac{\barbelow{\pi}(n)}{2}\right] \\
[\barbelow{s}(\bar{\omega},n),\bar{s}(\barbelow{\omega},n)], \;\qquad\quad\;\;\text{if}\;\omega\subseteq\left[-\tfrac{\bar{\pi}(n)}{2},\barbelow{\pi}(n)\right] \\
[\min\left(\barbelow{s}(\barbelow{\omega},n),\barbelow{s}(\bar{\omega},n)\right),1], \,\text{if}\;\omega\subseteq[0,\barbelow{\pi}(n)]  \\
-\sin^I_n(-\omega), \;\qquad\qquad\quad\;\;\text{if}\;\omega\subseteq[-{\barbelow{\pi}(n)},0] \\
[-1,1],\;\qquad\qquad\qquad\quad\;\,\;\text{otherwise}
\end{cases} \\
\cos^I_n(\omega) \hspace{-0.05cm}&=\hspace{-0.05cm} \begin{cases}
[\barbelow{c}(\bar{\omega},n),\bar{c}(\barbelow{\omega},n)], \;\qquad\quad\;\;\text{if}\;\omega\subseteq[0,{\barbelow{\pi}(n)}] \\
\cos^I_n(-\omega),\;\qquad\qquad\qquad\;\text{if}\;\omega\subseteq[-{\barbelow{\pi}(n)},0] \\ 
[\min\left(\barbelow{c}(\barbelow{\omega},n),\barbelow{c}(\bar{\omega},n)\right),1], \,\text{if}\;\omega\subseteq\left[-\tfrac{\barbelow{\pi}(n)}{2},\tfrac{\barbelow{\pi}(n)}{2}\right]  \\
[-1,1],\;\qquad\qquad\qquad\quad\;\,\;\text{otherwise}
\end{cases} 
\end{align*}
where in the above expressions, $\barbelow{s}(x,n)$ and $\bar{s}(x,n)$ are the $n$-th degree partial approximations of $\sin(x)$ whereas $\barbelow{c}(x,n)$ and $\bar{c}(x,n)$ are the $n$-th degree partial approximations of $\cos(x)$ such that the following inequalities hold \cite{daumas2009verified}
\begin{align*}
\barbelow{s}(x,n) &\leq  \sin(x) \leq \bar{s}(x,n),\;\;
\barbelow{c}(x,n) \leq  \cos(x) \leq \bar{c}(x,n).
\end{align*}
Likewise, $\barbelow{\pi}(n)$ and $\bar{\pi}(n)$ are the $n$-th degree partial approximations of the irrational constant $\pi$ such that $\barbelow{\pi}(n)\leq \pi \leq \bar{\pi}(n)$; interested readers can refer to \cite{daumas2009verified} for the detailed mathematical expressions of these approximations. Of course, higher value of $n$ can yield better approximations. In this test, we consider $n = 4$ as it provides sufficient approximations for this model. To obtain sharper bounds when evaluating interval functions, the corresponding subset is subdivided into $10$ segments (or smaller subsets) and the resulting final intervals are computed based on the union of each intervals from the evaluation of these segments---see \cite{ichida1979interval} for more details.

The goals of the first numerical experiment in this section are twofold.  {First, we would like to find the relation between the values of the computed Lipschitz constant and computation time with different precision on Algorithm \ref{alg:BnB-IA} which is carried out by assigning different values for $\epsilon_h$ and $\epsilon_{\Omega}$.} Secondly, we are interested in investigating the practical usefulness of the computed Lipschitz constant for actual observer design. 
To achieve the first objective, two scenarios are considered: first, the value for $\epsilon_h$ is fixed while changing the value of $\epsilon_{\Omega}$ and second we consider the other way, i.e., fixing $\epsilon_{\Omega}$ while using different values for $\epsilon_h$. The results of this experiment are given in Figs. \ref{Fig:gen_1a} and \ref{Fig:gen_1b}. 
It can be seen from Fig. \ref{Fig:gen_1a} that the computed Lipschitz constants $\gamma_l$ are decreasing as the values of $\epsilon_h$ becoming smaller, while on the other hand, the corresponding lower bounds $\barbelow{\gamma}_l$ are increasing.
This indicates that the gaps between $\gamma_l$ and $\barbelow{\gamma}_l$ are getting smaller as $\epsilon_h$ is reduced, in spite of the expense of increasing computation time. Similar trend is not observed when $\epsilon_h$ is fixed while $\epsilon_{\Omega}$ varies---resulting in non-varying Lipschitz values and their lower bounds---see Fig. \ref{Fig:gen_1b}.
These results suggest that, for this particular case, a better Lipschitz constant can be obtained from increasing the algorithm's precision---that is, reducing the value of $\epsilon_h$. 
In comparison with the analytical Lipschitz constant computed using the formula provided in \cite{Nugroho2019} which is equal to $749.8279$, the Lipschitz constant obtained using the proposed approach is indeed far less conservative.
Second, to answer the second objective, we use the computed Lipschitz constant $\gamma_l = 32.3514$ and implement a Lipschitz-based observer for NDS developed in \cite{Phanomchoeng2010} for performing dynamic state estimation on this generator. We find that, with the given Lipschitz constant, the computed observer gain matrix is able to provide a converging estimation error. This finding is corroborated by the converging estimated generator rotor speed $\omega$, denoted by $\hat{x}_2(t)$, to the actual one ${x}_2(t)$, as shown in Fig. \ref{Fig:gen_1c}. 
The results of this numerical test showcase the applicability of the proposed approach in computing Lipschitz constant for practical observer design and state estimation.

In the second numerical test, we again compare the interval-based algorithm (Algorithm \ref{alg:BnB-IA}) with point-based methods including MATLAB's global search and quasi-random search algorithms with Sobol and Halton sequences. The interval-based algorithm is configured with $\epsilon_h = 10$ and $\epsilon_{\Omega} = 10^{-4}$. In addition to using \eqref{eq:gamma_Lipschitz}, we also use the $2$-norm of the Jacobian matrix as provided in \eqref{eq:lip_jacobian} to compute Lipschitz constant with MATLAB's global search.
The results of this test are shown in Tab. \ref{tab:gen_lipschitz_comparison}. It is observed that the interval-based algorithm provides both Lipschitz constant and its lower bound with relatively tight gap, while the Lipschitz constants obtained from  other approaches are comparable. Note that MATLAB's global search with \eqref{eq:lip_jacobian} gives slightly smaller value than from using \eqref{eq:gamma_Lipschitz}, while quasi-random search again gives under-approximations of Lipschitz constant.             
%
}

\vspace{-0.2cm}

\subsection{OSL and QIB Parameterization}\label{ssec:mov_obj}
In the final numerical example, we consider the equation that represents the motion of a moving object in $2-$D plane given in \cite{Abbaszadeh2010} and described in the following state-space equation 
\begin{align}\label{eq:state_space_mov_obj}
\dot{\m x} = \bmat{1&-1\\ 1& 1}\m x + \bmat{-x_1(x_1^2+x_2^2) \\ -x_2(x_1^2+x_2^2)} , \;\; 
\m y  = \bmat{0 & 1}\m x,
\end{align}
with domain of interest $\m \Omega = \left\{\m x\in \mbb{R}^2\,|\, x_i\in [-r,r],\,i=1,2\right\}$ where $r = 5$. Here we aim to \textit{(a)} find the OSL constant $\gamma_s$ and QIB constants $\gamma_{q1}$ and $\gamma_{q2}$ that are \textit{(b)} useful for observer design. To that end, first we focus on finding $\gamma_s$ using the method given in Proposition \ref{prs:gershgorin_max_eigenvalue}.  The corresponding matrix $\m \Xi (\m x)$ for the nonlinear functions in \eqref{eq:state_space_mov_obj} is of the form
\begin{align*}
\m \Xi (\m x) = \bmat{-3x_1^2-x_2^2 & -2x_1x_2 \\ -2x_1x_2 & -x_1^2-3x_1^2}.
\end{align*} 
Since $\m \Xi (\m x)$  in the above is already symmetric, then according to  \eqref{prs:gershgorin_inequality_max_eigenvalue}, $\gamma_s$ can be computed as
\begin{align}
\gamma_s = \max_{i\in\{1,2\}}\Bigg(\max_{\m{x}\in \mathbf{\Omega}}\Bigg(\m \Xi_{(i,i)} +\sum_{j \in\{1,2\}\setminus i}\abs{\m \Xi_{(i,j)}}\Bigg)\hspace{-0.1cm}\Bigg). \label{eq:osl_mov_obj_gershgorin}
\end{align}	
  {In using Algorithm \ref{alg:BnB-IA} to solve \eqref{eq:osl_mov_obj_gershgorin}, we first fix $\epsilon_{\Omega}$ to $10^{-8}$ and use various values for $\epsilon_h$ ranging from $5\times 10^{-4}$ to $10^{-8}$ then in turn, we fix $\epsilon_{h}$ such that $\epsilon_{h} = 10^{-12}$ and use various values for $\epsilon_{\Omega}$.} The results of this numerical example is illustrated in Figs. \ref{Fig:mov_obj_1a} and \ref{Fig:mov_obj_1b}. It can be seen from these figures that, as $\epsilon_h$ getting smaller, the value of $\gamma_s$ is decreasing to a value near zero. Therefore, we conclude that the OSL constant for this system is zero. This results corroborates the fact that, as proven in \cite{Abbaszadeh2010}, the system is indeed globally OSL with $\gamma_s = 0$.   {Notice that in Fig. \ref{Fig:mov_obj_1b}, the computation time is increasing until $\epsilon_{\Omega} = 5\times10^{-6}$ and stays at around $4$ seconds afterwards since for $\epsilon_{\Omega} > 5\times10^{-6}$, the condition $u-l<\epsilon_h$ can be achieved without the need for optimizing the lower bound.} 

Next, we put our attention on determining the QIB constant. To obtain such constant, we solve the following problems using Algorithm \ref{alg:BnB-IA}
	\begin{align*}
	\barbelow{\gamma} &= \min_{i\in\{1,2\}}\Bigg(\min_{\m{x}\in \mathbf{\Omega}}\Bigg(\m \Xi_{(i,i)} -\sum_{j \in\{1,2\}\setminus i}\abs{\m \Xi_{(i,j)}}\Bigg)\hspace{-0.1cm}\Bigg) \\ 
	\gamma_m &= \max_{\m{x}\in \mathbf{\Omega}}\sum_{i\in \{1,2\}}\norm{\nabla_{\hspace{-0.05cm}x} f_i(\m{x})}_2^2, 
	\end{align*}
in which we obtain $\barbelow{\gamma} = -150$ and $\gamma_m = 2.5\times10^4$. According to Proposition \ref{prs:quadratic_inner_bounded}, for any $\epsilon_1,\epsilon_2\in \mbb{R}_{+}$, the QIB constant can be constructed as $\gamma_{q2} = \epsilon_2-\epsilon_1$ and $\gamma_{q1} = \epsilon_1\gamma_s -\epsilon_2\barbelow{\gamma}+\gamma_m$. By setting the constants $\epsilon_1 = 10^5$ and $\epsilon_2 = 10^{-1}$, we obtain $\gamma_{q1} = 25015$ and $\gamma_{q2} = -9999.89$. To test the applicability of the computed OSL and QIB constants for state estimation, we implement an observer developed in \cite{zhang2012full} using the computed constants, in which we are successfully able to get a converging estimation error. Fig. \ref{Fig:mov_obj_1c} depicts the trajectories of the first actual and estimated state for this system.

\vspace{-0.1cm}

\section{Summary and Future Directions}\label{sec:conclusion_future_work}
{
To the best of our knowledge, this work is the first thorough research effort that presents analytical and computational methods to parameterize NDS and various corresponding function sets. Specifically, we propose a novel, unified framework for NDS parameterization for bounded Jacobian, Lipschitz continuous, OSL, QIB, and QB function sets. Our contributions are \textit{(1)} posing and analytically deriving closed-form global maximization problems for NDS parameterization, \textit{(2)} investigating a derivative-free, interval-based algorithm for NDS parameterization, and \textit{(3)} showcasing the proposed methodology for performing parameterization for some NDS models in traffic and power networks. In addition, we  offer various strategies for reducing the complexity in parameterizing NDS using the proposed interval-based global maximization algorithm.  }

{
The presented methods in this paper are not devoid of limitations; several future research directions are worthy of further investigation. First, there exist other classes of nonlinearities that are not considered in this paper such as \textit{incremental quadratic nonlinearity}  \cite{ACIKMESE20111339} and \textit{generalized sector nonlinearity} \cite{Vijayaraghavan2014}.   Parameterizing NDS for these kind of nonlinearities will be studied in our future work. Second, since we do not consider any disturbance---that may be caused by modeling uncertainty or time-varying parameters---on the nonlinearities, then future work will also include investigating robust NDS parameterization.
Third, investigating the impact of obtaining function sets and bounding constants, through our proposed algorithms, on LMI/SDP feasibility for observer and controller designs is another direction worthy of exploring.}

\vspace{-0.1cm}

{
\section*{Acknowledgments}
We would like to thank the associate editor Dr. Leonid Freidovich and the anonymous reviewers for their excellent feedback and constructive comments that improved the content, analysis, and presentation of the paper.}

\vspace{-0.1cm}

\bibliographystyle{IEEEtran}	\bibliography{bib_file}

\appendices

\section{Proof of Theorem \ref{thm:IA_based_algorithm}}\label{appdx:B}
The proof requires the following propositions:
\begin{myprs}\label{thm:global_maximization}
	Let $\m \omega^*$ be a maximizer for problem \eqref{eq:max_basic_problem}. If $C = \{\mathbfcal{S}_i\}^N_{i = 1}$ is a cover in $\mathbf{\Omega}$ that contains a maximizer, then we have $h^*\in[l,u]$ where 
	\begin{align} \label{eq:global_maximization_l_u} 
	l &= \max_{\mathbfcal{S}_i\in C} \inf \left( h^I\hspace{-0.05cm}\left(\mathbfcal{S}_i\right) \right), \quad
	u = \max_{\mathbfcal{S}_i\in C} \sup \left( h^I\hspace{-0.05cm}\left(\mathbfcal{S}_i\right) \right).
	\end{align}   
\end{myprs}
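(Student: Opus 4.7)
My plan is to use the Fundamental Theorem of Interval Arithmetic (Lemma \ref{lem:interval_arithmetic_fundamental_thrm}) as the single workhorse. That theorem tells us, for every $\mathbfcal{S}_i \in C$, that the true range $h(\mathbfcal{S}_i)$ is contained in its interval evaluation $h^I(\mathbfcal{S}_i)$. I will combine this containment with two elementary observations that are immediate from the hypotheses: \emph{(i)} at least one subset $\mathbfcal{S}_{i^*} \in C$ contains the maximizer $\m \omega^*$, and \emph{(ii)} every subset in $C$ is contained in $\mathbf{\Omega}$ (by Definition \ref{def:cover}), so $h$ restricted to any such subset is bounded above by $h^*$.

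For the upper bound $h^* \leq u$, I will first pick $i^*$ with $\m \omega^* \in \mathbfcal{S}_{i^*}$. Then $h^* = h(\m \omega^*)$ belongs to $h(\mathbfcal{S}_{i^*})$, which by Lemma \ref{lem:interval_arithmetic_fundamental_thrm} is a subset of $h^I(\mathbfcal{S}_{i^*})$. Hence $h^* \leq \sup h^I(\mathbfcal{S}_{i^*}) \leq \max_{\mathbfcal{S}_i \in C} \sup h^I(\mathbfcal{S}_i) = u$.

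For the lower bound $l \leq h^*$, I will fix an arbitrary $\mathbfcal{S}_i \in C$. Because $\mathbfcal{S}_i \subseteq \mathbf{\Omega}$ and $h^*$ is the global maximum on $\mathbf{\Omega}$, every $\m \omega \in \mathbfcal{S}_i$ satisfies $h(\m \omega) \leq h^*$, so $\sup h(\mathbfcal{S}_i) \leq h^*$. The containment $h(\mathbfcal{S}_i) \subseteq h^I(\mathbfcal{S}_i)$ then yields $\inf h^I(\mathbfcal{S}_i) \leq \inf h(\mathbfcal{S}_i) \leq \sup h(\mathbfcal{S}_i) \leq h^*$. Maximizing the left-hand side over $i$ gives $l \leq h^*$.

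Honestly, I do not expect any real technical obstacle: the result is essentially a direct corollary of the Fundamental Theorem of Interval Arithmetic together with the definition of a global maximum. The only subtlety worth flagging in the write-up is the asymmetric role of the two hypotheses---the upper-bound argument crucially relies on the cover containing a maximizer, whereas the lower-bound argument uses only that each element of $C$ lies inside $\mathbf{\Omega}$. I will also briefly note the standing assumption that $h^I$ is inclusion isotonic (valid here via outward rounding, as stated in Section \ref{ssec:fia}), since Lemma \ref{lem:interval_arithmetic_fundamental_thrm} requires it.
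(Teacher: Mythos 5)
Your proof is correct and follows essentially the same route as the paper's: the upper bound via the subset containing the maximizer together with the containment $h(\mathbfcal{S})\subseteq h^I(\mathbfcal{S})$, and the lower bound via the fact that $\inf h^I(\mathbfcal{S}_i)$ underestimates values of $h$ on $\mathbfcal{S}_i\subseteq\mathbf{\Omega}$, all of which are bounded by $h^*$. The only cosmetic difference is that the paper argues the lower bound directly on the subset attaining the maximum in the definition of $l$, whereas you establish the inequality for every $\mathbfcal{S}_i$ and then maximize, which is equivalent.
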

\begin{proof}
	Suppose that the cover $C = \{\mathbfcal{S}_i\}^N_{i = 1}$ in $\mathbf{\Omega}$ contains a maximizer for problem \eqref{eq:max_basic_problem}. As $C$ contains a maximizer, then $\m \omega^*\in \mathbfcal{S}_j$ for some $j\in\mbb{I}(N)$, which in turn implies that $h^* \leq \sup \left(h^I\hspace{-0.05cm}(\mathbfcal{S}_j)\right)$.  Since $\sup \left(h^I\hspace{-0.05cm}(\mathbfcal{S}_j)\right)\leq u$, then it follows that $h^* \leq u$. Next, consider $\mathbfcal{S}_k\in C$ such that $l = \inf \left(h^I\hspace{-0.05cm}(\mathbfcal{S}_k)\right)$. As $\mathbfcal{S}_k\neq 0$, then there exists some points $\m \omega\in \mathbfcal{S}_k$ for which $h(\m \omega)\leq h^*$. However, since $\inf \left(h^I\hspace{-0.05cm}(\mathbfcal{S}_k)\right)\leq h(\m \omega)$, then $l \leq h^*$. This shows that $l\leq h^*\leq u$. 
\end{proof}
\vspace{-0.1cm}
\begin{myprs}\label{thm:removing_subset}
	Let $\m \omega^*$ be a maximizer for problem \eqref{eq:max_basic_problem} and $C = \{\mathbfcal{S}_i\}^N_{i = 1}$ be a cover in $\mathbf{\Omega}$ containing a maximizer. If  there exists $\mathbfcal{S}_j\in C$ for some $j\in\mbb{I}(N)$ such that $\sup \left(h^I\hspace{-0.05cm}(\mathbfcal{S}_j)\right) < l$, then $\m \omega^*\notin \mathbfcal{S}_j$.
\end{myprs}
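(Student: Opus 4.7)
The plan is to prove Proposition \ref{thm:removing_subset} by contradiction, leveraging directly the bounding inequalities already established in Proposition \ref{thm:global_maximization}. Specifically, I will assume toward contradiction that $\m \omega^*\in \mathbfcal{S}_j$ despite the hypothesis $\sup(h^I(\mathbfcal{S}_j)) < l$, and derive an inconsistency with the known lower bound $l \leq h^*$.

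First I would recall that, by Lemma \ref{lem:interval_arithmetic_fundamental_thrm} (the fundamental theorem of interval arithmetic), the interval extension $h^I(\mathbfcal{S}_j)$ contains the range of $h(\cdot)$ over $\mathbfcal{S}_j$. Therefore, if $\m \omega^*$ were to lie in $\mathbfcal{S}_j$, we would have the chain
\begin{align*}
h^* = h(\m \omega^*) \leq \sup\left(h^I(\mathbfcal{S}_j)\right) < l,
\end{align*}
where the final strict inequality is precisely the hypothesis of the proposition.

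Next I would invoke Proposition \ref{thm:global_maximization}. Since $C$ is by assumption a cover in $\mathbf{\Omega}$ that contains a maximizer, that proposition guarantees $h^*\in [l, u]$, and in particular $l \leq h^*$. Combining this with the chain above yields $l \leq h^* < l$, which is impossible. Hence the assumption $\m \omega^*\in \mathbfcal{S}_j$ must fail, establishing $\m \omega^*\notin \mathbfcal{S}_j$ as claimed.

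I do not anticipate any significant technical obstacle here: the result is essentially a direct book-keeping consequence of the two facts that (i) any admissible interval extension over-approximates the range of $h(\cdot)$, and (ii) $l$ is already known to lower-bound $h^*$ whenever the cover contains a maximizer. The only point requiring mild care is to make explicit that the definition of $l$ in \eqref{eq:global_maximization_l_u} is the same $l$ appearing in the hypothesis $\sup(h^I(\mathbfcal{S}_j)) < l$, so that the contradiction closes cleanly. This observation justifies the pruning step used in Algorithms \ref{alg:BnB-IA} and \ref{alg:opt_bnb}, since any subset whose upper interval bound falls strictly below the current best lower bound $l$ can safely be discarded without losing any maximizer.
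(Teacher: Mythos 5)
Your proof is correct and follows essentially the same route as the paper's: assume $\m \omega^*\in \mathbfcal{S}_j$ for contradiction, use the interval extension's over-approximation of the range to get $h^*\leq \sup(h^I(\mathbfcal{S}_j))<l$, and contradict $l\leq h^*$ from Proposition \ref{thm:global_maximization}. The only cosmetic difference is that you explicitly cite Lemma \ref{lem:interval_arithmetic_fundamental_thrm} where the paper leaves that step implicit.
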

\begin{proof}
	\vspace{-0.1cm}
	Note that, as the cover $C$ contains a maximizer, then from Proposition \ref{thm:global_maximization} it holds that $l\leq h^*\leq u$. By contradiction, suppose that there exists such $\mathbfcal{S}_j\in C$, where $j\in\mbb{I}(N)$, that $\sup \left(h^I\hspace{-0.05cm}(\mathbfcal{S}_j)\right) < l$ and $\m \omega^*\in \mathbfcal{S}_j$. However, since the assertion $\m \omega^*\in \mathbfcal{S}_j$ implies $h^*\leq \sup \left(h^I\hspace{-0.05cm}(\mathbfcal{S}_j)\right)$, we must have $h^* < l$, which is a contradiction.
\end{proof}
We start the proof by assuming that $\mathbf{\Omega}$ is nonempty such that $\m \omega^*\in \mathbf{\Omega}$. From the initialization, obviously $C$ contains the maximizer. In the first part, we will show that the first loop in Step \ref{alg:first_loop} computes the best upper bound for $h^*$. Note that, due to Proposition \ref{thm:global_maximization}, at every instance we always have $l\leq h^*\leq u$ with $\m \omega^*\in C$.
Since $u = \sup \left( h^I\hspace{-0.05cm}\left(\bar{\mathbfcal{S}}\right) \hspace{-0.00cm}\right)$ where $\bar{\mathbfcal{S}}$ is an element of $C$ with the greatest upper bound, it is divided into two subsets namely $\mathbfcal{S}_l$ and $\mathbfcal{S}_r$. As $\mathbfcal{S}_l\subset \bar{\mathbfcal{S}}$ and $\mathbfcal{S}_r\subset \bar{\mathbfcal{S}}$, by inclusion isotonicity, it then holds that $h^I\hspace{-0.05cm}\left({\mathbfcal{S}}_l\right)\subset h^I\hspace{-0.05cm}\left(\bar{\mathbfcal{S}}\right)$ and $h^I\left({\mathbfcal{S}}_r\right)\subset h^I\hspace{-0.05cm}\left(\bar{\mathbfcal{S}}\right)$, implying that $\sup \left(h^I\hspace{-0.05cm}\left({\mathbfcal{S}_l}\right)\right)<\sup\left( h^I\hspace{-0.05cm}\left(\bar{\mathbfcal{S}}\right)\right)$ and $\sup \left( h^I\hspace{-0.05cm}\left({\mathbfcal{S}_r}\right)\right)< \sup \left(h^I\hspace{-0.05cm}\left(\bar{\mathbfcal{S}}\right)\right)$. This shows that splitting of $\bar{\mathbfcal{S}}$ yields a better upper bound. In Algorithm \ref{alg:opt_bnb}, any $\mathbfcal{S}_i\in C$ satisfying $\sup \left(h^I\hspace{-0.05cm}\left({\mathbfcal{S}_i}\right)\right) < l$ may be discarded since, according to Proposition \ref{thm:removing_subset}, such subset does not contain any maximizer. Thus, we always have $\m \omega^*\in C$ and $l\leq h^*\leq u$. When the first loop in Step \ref{alg:first_loop} terminates, we have $u-l\leq \epsilon_h$---therefore, $h^*$ is $\epsilon_h-${optimal} and the algorithm terminates---or $u-l > \epsilon_h$. In the latter, $\abs{\bar{\mathbfcal{S}}} \leq \epsilon_{\Omega}$ hence it cannot be split further. Either case, we have $u$ as the best upper bound for $h^*$. 

In the second part, it will be shown that $l$ provides the best lower bound for $h^*$. Keep in mind that the temporary cover $\tilde{C}$ may be nonempty only if $u-l > \epsilon_h$ and there exists $\mathbfcal{S}_i\in C$ such that $\abs{{\mathbfcal{S}}_i} > \epsilon_{\Omega}$. Note that $\bar{\mathbfcal{S}}\notin \tilde{C}$. Since we split $\barbelow{\mathbfcal{S}}$ into two subsets namely $\mathbfcal{S}_l$ and $\mathbfcal{S}_r$ where $\barbelow{\mathbfcal{S}}$ is an element of $\tilde{C}$ with the biggest lower bound, then as demonstrated above, we have $h^I\hspace{-0.05cm}\left({\mathbfcal{S}}_l\right)\subset h^I\hspace{-0.05cm}\left(\barbelow{\mathbfcal{S}}\right)$ and $h^I\left({\mathbfcal{S}}_r\right)\subset h^I\hspace{-0.05cm}\left(\barbelow{\mathbfcal{S}}\right)$, implying that $\inf \left(h^I\hspace{-0.05cm}\left(\barbelow{\mathbfcal{S}}\right)\right)<\inf\left( h^I\hspace{-0.05cm}\left({\mathbfcal{S}}_l\right)\right)$ and $\inf \left(h^I\hspace{-0.05cm}\left(\barbelow{\mathbfcal{S}}\right)\right)<\inf\left( h^I\hspace{-0.05cm}\left({\mathbfcal{S}}_r\right)\right)$. This shows that we attain a better lower bound by splitting $\barbelow{\mathbfcal{S}}$. When the second loop in Step \ref{alg:second_loop} terminates, we have obtained a $\epsilon_h-${optimal} solution or every $\mathbfcal{S}_i\in C$ satisfies $\abs{{\mathbfcal{S}}_i} \leq \epsilon_{\Omega}$---thus $\tilde{C} = \emptyset$. At this point, every subsets in $C$ cannot be split further. Thus, when $u-l > \epsilon_h$, $l$ can be updated by evaluating objective function evaluation  of corner points on each $\mathbfcal{S}_i\in C$ and removing $\mathbfcal{S}_j\in C$ satisfying $\sup \left(h^I\hspace{-0.05cm}\left({\mathbfcal{S}_j}\right)\right) < l$. 
At the end of Algorithm \ref{alg:BnB-IA}, we have achieved  $u-l \leq \epsilon_h$ or $u-l > \epsilon_h$---in the former case, we have a $\epsilon_h-${optimal} solution. Either case, $[l,u]$ is the best interval containing $h^*$. Since we know that $\m \omega^*\in C$, there exists $\mathbfcal{S}_j$ for some $j\in\mbb{I}(N)$ for which $\m \omega^*\in \mathbfcal{S}_j$. Thus, it immediately follows that $\m \omega^*\in \prod_{i\in\mbb{I}(N)}\mathbfcal{S}_i$. This ends the proof.
\newqed

\onecolumn

\section{Observer, Controller, and Observer-Based Stabilization for Different Classes of NDS}\label{appdx:A}
Table \ref{tab:ObserverDesign} summarizes several type of observer design procedure via LMI for NDS which nonlinearities satisfy one of the following condition: bounded Jacobian, Lipschitz continuous, one-sided Lipschitz, and quadratically inner-bounded. Table \ref{tab:ControllerDesign} shows some controller and observer-based stabilization design via LMI for NDS with such nonlinearities.

\setlength{\textfloatsep}{10pt}
\begin{table}[h]
	\vspace{-0.2cm}
	\centering 
	\caption{{Observer designs for NDS with different classes of nonlinearity through SDPs or LMIs~\cite{zhang2012full,Phanomchoeng2010,Jin2018,nugroho2018journal}.}}
	\label{tab:ObserverDesign}
	\vspace{-0.2cm}
	\renewcommand{\arraystretch}{2}
	\begin{tabular}{|l| l| l|}
		\hline 
		\textbf{Dynamics and Nonlinearity Class} & \textbf{Observer Structures} & \textbf{Observer Design via LMI} \\[0.5ex] 
		\hline 
		\hline
		$\begin{array}{l}
		\dot{\m x} = \m A\m x  +  \m f(\m x) +\m B \m u\\
		\m y  = \m C \m x +\m D \m u  \\ 
		\textbf{Glob./Loc. Lipschitz:}\\
		\norm{\m f(\m x)-f(\hat{\m x})}_2 \leq \gamma_l\norm{\m x - \hat{\m x}}_2 \\ 	\end{array}$ & $\begin{array} {lcl} 	\dot{\hat{\m x}} = \m A \hat{\m x}+ {{\m f}}(\hat{\m x}) +\m B \m u  \\\;\;\;\;\;\;+\m L(\m y  - \hat{\m y} )\\
		\hat{\m y}  =\m C \hat{\m x}+\m D \m u\\
		\m L=\m P^{-1}\m Y \\
		\text{Variables:}\,\, \m P,\m Y, \kappa 	 \end{array}$ & $\begin{array} {lcl}\mathrm{find}\;\; \m P\succ 0, \m Y, \kappa > 0\\
		\mathrm{s.t.} \;\; \begin{bmatrix}
		\m A ^{\top}\m P + \m P\m A -\m Y\m C \\ - \m C ^{\top}\m Y ^{\top}  +\kappa\gamma_l^2 \m I& * \\
		\m P & -\kappa \m I \end{bmatrix} \prec 0, \end{array}$\\ 
		\hline 
		$\begin{array}{l}
		\dot{\m x} = \m A  \m x  +  \m G\m f(\m x) +\m B \m u \\
		\m y  =  \m C  \m x + \m D  \m u  \\ 
		\textbf{One-Sided Lipschitz, Quadratically} \\
		\textbf{Inner-Bounded:}
		\\ \text{constants} \,\, \gamma_s,\gamma_{q1},\gamma_{q2} \,\, \text{(see Tab.~\ref{tab:nonlinear_class})} 			\end{array}$ & $\begin{array} {lcl} 	\dot{\hat{\m   x}} =   \m A \hat{ \m  x} + \m G{{\m f}}(\hat{\m x}) +\m B \m u \\\;\;\;\;\;\;+ \m L(  \m y  - \hat{ \m  y} )\\
		\hat{ \m  y}  = \m C \hat{  \m x}+ \m D \m   u\\
		\m	L= \dfrac{1}{2}\sigma \bm P^{-1}\m C^\top\\ 
		\text{Variables:}\,\, \m P, \sigma, \epsilon_1, \epsilon_2  \end{array}$ & $\begin{array} {ll} \mathrm{find}\;\;  \m P \succ 0, \{\sigma,\epsilon_1,\epsilon_2\}>0 \\
		\st\;\;	\begin{bmatrix}
		\m 	A^\top \bm P + \bm P\m A + \epsilon_1\gamma_s\m I \\+\epsilon_2\gamma_{q1}\m I-\sigma \m C^\top \m C &* \\ 
		\m G^{\top}\m P+\dfrac{\gamma_{q2}\epsilon_2-\epsilon_1}{2}\m I & -\epsilon_2 \m I  
		\end{bmatrix} \prec 0 
		\end{array}$\\ 
		\hline 
		$\begin{array}{l}
		\dot{\m x} = \m A  \m x + \m B_u \m u + \m G\m f(\m x) + \m B_w \m w  \\
		\m y  =  \m C  \m x +\m D_u \m u + \m D_w  \m w  \\ 
		\textbf{Glob./Loc. Lipschitz:}\\
		\norm{\m f(\m x)-f(\hat{\m x})}_2 \leq \gamma_{l}\norm{\m x - \hat{\m x}}_2 			\end{array}$ & $\begin{array} {lcl} 	\dot{\hat{\m   x}} =   \m A \hat{ \m  x}  + \m G{{\m f}}(\hat{\m x}) +\m B \m u \\ \;\;\;\;\;\;+  \m L(  \m y  - \hat{ \m  y} )\\
		\hat{ \m  y}  = \m C \hat{  \m x} +\m D_u \m u\\
		\m L = \m P^{-1}\m Y\\  
		\text{Variables:}\,\, \m P, \m Y,\kappa,\mu_1,\mu_2\\
		\text{Constant:}\,\, \alpha > 0\\
		\end{array}$ & $\begin{array} {ll} \mathrm{min} \;\; \mu_1 + \mu_2\\
		\st\;\;	\begin{bmatrix} \m A^{\top}\m P+\m P\m A-\m Y\m C& & \\
		-\m C^{\top}\m Y^{\top} + \kappa\gamma_{l}^2\m I + \alpha \m P& * & * \\
		\m G^{\top}\m P & -\kappa \m I & * \\ \m B_w^{\top}\m P-\m D_w^{\top}\m Y^{\top} & \m O & -\alpha\mu_1\m I \end{bmatrix} \prec 0  \\
		\;\;\;\;\;\;\;	\begin{bmatrix} -\m P & * & * \\ \m O & -\mu_2\m I & * \\
		\m Z & \m O & -\m I\end{bmatrix} \prec 0 \\
		\;\;\;\;\;\;\;	 \m P \succ 0, \m Y, \{\kappa,\mu_1,\mu_2\}>0 
		\end{array}$\\ 
		\hline 
		$\begin{array}{l}
		\dot{\m x} = \m A\m x  +  \m G\m f(\m x) +\m B \m u\\
		\m y  = \m C \m x +\m D \m u  \\ 
		\textbf{Bounded Jacobian:}\\
		\barbelow{\m f} \leq \mathrm{D}_x \m f(\m x) \leq \bar{\m f} \\
		\barbelow{b}_{ij}:= \barbelow{f}_{(i,j)},\; \bar{b}_{ij}:= \bar{ f}_{(i,j)}	\end{array}$ & $\begin{array} {lcl} 	\dot{\hat{\m x}} = \m A \hat{\m x} + \m G{{\m f}}(\hat{\m x}) +\m B \m u  \\\;+\m L(\m y  - \hat{\m y} )\\
		\hat{\m y}  =\m C \hat{\m x}+\m D \m u\\
		\m L=\m P^{-1}\m Y \\
		\text{Variables:}\, \m P,\m Y, \m \Lambda 	\\
		\text{Constants:}\,
		\barbelow{c}_{ij} = \frac{1}{2}\left(\barbelow{b}_{ij}+ \bar{b}_{ij}\right), \forall i,j\\
		\bar{c}_{ij} = \frac{1}{2}\left(\barbelow{b}_{ij}- \bar{b}_{ij}\right), \forall i,j\\
		\m W = \m I_g \otimes \m 1_{1\times n_x} \\
		\end{array}$ & $\begin{array} {lcl}\mathrm{find}\;\; \m P\succ 0, \m Y, \m \Lambda\\
		\mathrm{s.t.} \;\; \begin{bmatrix}
		\m A ^{\top}\m P + \m P\m A -\m Y\m C \\ - \m C ^{\top}\m Y ^{\top}  +\m \Theta_1\left(\m{\Lambda}\right)& * \\
		\m W^{\top}	\m G^{\top}\m P + \m\Theta_2\left(\m{\Lambda}\right)& \m\Theta_3\left(\m{\Lambda}\right)\end{bmatrix} \prec 0\\
		\;\;\;\;\;\m\Theta_1\left(\m{\Lambda}\right) = \mathrm{Diag}\left(\left\{\sum_{i=1}^{n}\lambda_{ij}\left(\bar{c}_{ij}^2-\barbelow{c}_{ij}^2\right) \right\}^{n}_{j = 1} \right) \\
		\;\;\;\;\;\m\Theta_2\left(\m{\Lambda}\right)^{\top} = \bmat{\left\{\mathrm{Diag}\left(\left[\lambda_{i1}\barbelow{c}_{i1},\hdots,\lambda_{in_x}\barbelow{c}_{in_x}\right]\right)\right\}^{n}_{i = 1}}\\
		\;\;\;\;\;\m\Theta_3\left(\m{\Lambda}\right) = \mathrm{Diag}\left(\mathrm{vec}\left(\m{\Lambda}\right)\right)\end{array}$\\ 
		\hline 
	\end{tabular}
	\vspace{-0.2cm}
\end{table}
\setlength{\floatsep}{10pt}

\setlength{\textfloatsep}{10pt}
\begin{table}[h]
	\centering 
	\caption{{Controller and observer-based stabilization designs for NDS with different classes of nonlinearity through SDPs or LMIs~\cite{Siljak2002,Yadegar2018,wu2015observer}.}}
	\label{tab:ControllerDesign}
	\vspace{-0.2cm}
	\renewcommand{\arraystretch}{2}
	\begin{tabular}{|l| l| l|}
		\hline 
		\textbf{Dynamics and Nonlinearity Class} & \textbf{Control/Observer Structures} & \textbf{Control Design via LMI} \\[0.5ex] 
		\hline 
		\hline
		$\begin{array}{l}
		\dot{\m x} = \m A\m x  +  \m G\m f(\m x) +\m B \m u\\
		\textbf{Quadratically Bounded:}\\
		\langle \m f(\m x),\m f(\m x)\rangle \leq  \m x^{\top}\m H^{\top}\m H \m x \\ 	\end{array}$ & $\begin{array} {lcl} \m u = \m K \m x\\	
		\m K = \m L\m P^{-1} \\
		\text{Variables:}\,\, \m P,\m L, \gamma	 \end{array}$ & $\begin{array} {lcl}\mathrm{min}\;\; \gamma\\
		\mathrm{s.t.} \;\; \begin{bmatrix}
		\m A\m P + \m P\m A ^{\top} +\m B \m L + \m L^{\top}\m B^{\top} & * &*\\
		\m G^{\top} & -\m I & * \\
		\m H\m P & \m O & -\gamma \m I \end{bmatrix} \prec 0,\\
		\quad\quad \m P \succ 0,\gamma > 0\end{array}$\\ 
		\hline 
		$\begin{array}{l}
		\dot{\m x} = \m A\m x  +  \m f(\m x) +\m B \m u\\
		\textbf{Glob./Loc. Lipschitz:}\\
		\norm{\m f(\m x)-f(\hat{\m x})}_2 \leq \gamma_l\norm{\m x - \hat{\m x}}_2 \\ 	\end{array}$ & $\begin{array} {lcl}  \m u = -\m K \m x\\	
		\m K = \m W\m Q^{-1} \\
		\text{Variables:}\,\, \m Q,\m W\\
		\text{Constant:}\,\, \rho > 0 \end{array}$ & $\begin{array} {lcl}\mathrm{find}\;\; \m Q\succ 0, \m W\\
		\mathrm{s.t.} \;\; \begin{bmatrix}
		\m A\m Q + \m Q\m A ^{\top} -\mB\mW-\mW^{\top}\mB^{\top}+\frac{1}{\rho}\mI & * \\
		\m Q & -\frac{1}{\rho\gamma_l^2} \m I \end{bmatrix} \prec 0, \end{array}$\\ 
		\hline 
		$\begin{array}{l}
		\dot{\m x} = \m A\m x  +  \m f(\m x) +\m B \m u\\
		\m y = \mC \m x \\
		\textbf{One-Sided Lipschitz, Quadratically} \\
		\textbf{Bounded:}
		\\ \text{constants} \,\, \gamma_s,\gamma_{q1},\gamma_{q2} \,\, \text{(see Tab.~\ref{tab:nonlinear_class})}  	\end{array}$ & $\begin{array} {lcl} 
		\dot{\hat{\m   x}} =   \m A \hat{ \m  x} + {{\m f}}(\hat{\m x}) + \m B  \m  u    \\\;\;\;\;\;\;+ \m K(  \m y  - \hat{ \m  y} )\\
		\hat{ \m  y}  = \m C \hat{  \m x}\\ \m u = \m F \hat{\m x}\\	
		\m K = \m R^{-1} \hat{\m K}^{\top}\\ \m F = \hat{\m F}^{\top}\m Q^{-1} \\
		\text{Variables:}\,\, \m Q,\m R, \hat{\m K}, \hat{\m F}, \\
		\quad\quad\quad\quad\;\;\epsilon_1,\epsilon_2,\epsilon_3,\epsilon_4\\
		\text{Constant:}\,\, \phi_1 > 0 \end{array}$ & $\begin{array} {lcl}\mathrm{find}\;\; \m Q\succ 0, \m R\succ 0, \hat{\m K}, \hat{\m F},\epsilon_1>0,\epsilon_2>0,\epsilon_3>0,\epsilon_4>0\\
		\mathrm{s.t.} \;\; \begin{bmatrix}
		\m S_1 & *\\ \m S_2^{\top} & \m S_3\end{bmatrix} \prec 0 \\
		\m S_1 = \bmat{\tilde{\m\Sigma}_{11}(\m Q,\hat{\m F})&*&*&*\\
			\m O & \tilde{\m\Sigma}_{22}(\m R,\hat{\m K},\epsilon_3,\epsilon_4,\gamma_s,\gamma_{q1})&*&*\\
			\mI&\mO&-2\epsilon_2\m I&*\\
			\mO&\mR + (\epsilon_4\gamma_{q2}-\epsilon_3)\mI&\mO&-2\epsilon_4\mI} \\
		\mS_2 = \bmat{-\m B\hat{\m F}^{\top}&\m O&\m Q&\m O\\
			\m O&\m I&\m O&\m O\\ \mO&\mO&\mO&\mI\\ \mO&\mO&\mO&\mO}\\
		\mS_3 = \mathrm{Diag}\left(\bmat{-\frac{\m Q}{\phi_1}&-\phi_1\m Q&-\psi(\epsilon_1,\epsilon_2)\mI& -\zeta(\epsilon_1,\epsilon_2)\mI}\right)
		\end{array}$\\ 
		\hline  
	\end{tabular}
	\vspace{-0.3cm}
\end{table}
\setlength{\floatsep}{10pt}

\end{document}